\newtheorem{theorem}{Theorem}
\newtheorem{definition}{Definition}
\newtheorem{lemma}{Lemma}
\newcommand{\BfPara}[1]{\noindent{\bf #1}.}
\title{{\bf Preference Games and Personalized Equilibria, with
Applications to Fractional BGP}}
\author{Laura J. Poplawski\\
College of Computer Science\\
and Information Science\\
Northeastern University\\
ljp@ccs.neu.edu
 \and Rajmohan Rajaraman \\
College of Computer Science\\
and Information Science\\
Northeastern University\\
rraj@ccs.neu.edu
\and Ravi Sundaram \\
College of Computer Science\\
and Information Science\\
Northeastern University\\
koods@ccs.neu.edu
\and Shang-Hua Teng\\
Department Computer Science\\
Boston University \\
\\
steng@cs.bu.edu}
\begin{document}

\newcommand{\gamename}{preference game}
\newcommand{\gamenameCaps}{Preference Games}
\newcommand{\gamenamePlural}{preference games}
\newcommand{\gamenamePluralInitCap}{Preference games}

\maketitle
\begin{abstract}

We study the complexity of computing equilibria in two classes of
network games based on flows - fractional BGP (Border Gateway
Protocol) games and fractional BBC (Bounded Budget Connection) games.
BGP is the glue that holds the Internet together and hence its
stability, i.e. the equilibria of fractional BGP games
\cite{HaxellWilfong}, is a matter of practical importance. BBC games
\cite{OurPODC08} follow in the tradition of the large body of work on
network formation games and capture a variety of applications ranging
from social networks and overlay networks to peer-to-peer networks.

The central result of this paper is that there are no fully polynomial-time approximation
schemes (unless \textbf{PPAD} is in \textbf{FP}) for computing equilibria in both fractional BGP games and 
fractional BBC games. We obtain this result by proving the hardness for a new and surprisingly simple 
game, the {\em \gamename}, which is reducible to both fractional BGP and BBC games. 

We define a new flow-based notion of equilibrium for matrix games -- {\em personalized equilibria} -- which generalizes both fractional BBC games and fractional BGP games. We prove not just the
existence, but the existence of \emph{rational} personalized equilibria for all matrix games, which implies the existence
of rational equilibria for fractional BGP and BBC games. In particular, this provides an alternative
proof and strengthening of the main result in \cite{HaxellWilfong}. For $k$-player matrix games, 
where $k = 2$, we 
provide a combinatorial characterization leading to a polynomial-time algorithm for
computing all personalized equilibria. For $k \geq 5$, we prove that personalized equilibria are 
\textbf{PPAD}-hard to approximate in fully polynomial time. We believe that the concept of 
personalized equilibria has potential for real-world significance.

\junk{
The central result of this paper is that there are no fully polynomial-time approximation
schemes (unless PPAD is in FP) for computing equilibria in both fractional BGP games and 
fractional BBC games. 

The Border Gateway Protocol (BGP), used to exchange routing information, 
is the glue that holds the Internet together. The stability of BGP 
is a matter of practical importance and has been studied extensively. Recently, 
a fractional model was introduced and studied in the context of game theory \cite{} 
--- the fractional BGP game. 

There is a considerable body of work on network formation games. Bounded 
Budget Games (BBC) games, introduced and studied in \cite{} capture a variety 
of applications ranging from social networks and overlay networks to peer-to-peer 
networks from the standpoint of game theory. The object of our study in this paper
is the fractional version --- the fractional BBC game.

Generalizing both fractional BBC games and fractional BGP games we define a new flow-based
notion of equilibrium for matrix games -- personalized equilibria. We prove not just the
existence, but the existence of \emph{rational} personalized equilibria, which implies the existence
of rational equilibria for fractional BGP and BBC games. In particular this provides an alternative
proof and strengthening of the main result in \cite{}. For k-player matrix games, where k = 2, we 
provide a combinatorial characterization based on which we develop a polynomial-time algorithm for
computing the personalized equilibrium. We
}

\junk{
We study the complexity of computing equilibria in two classes of
network games based on fractional flows.  We first consider a
fractional model of the Border Gateway Protocol (BGP), the interdomain
routing protocol used to exchange routing information in the Internet
today.  The selection of routes in BGP can be modeled as a game, whose
pure Nash equilibria correspond to a set of stable routes for given
route preferences.  It has been shown that there exist instances of
this game for which pure Nash equilibria do not exist, thus implying
that for certain sets of route preferences BGP may not converge.  In
recent work, Haxell and Wilfong (SODA 2008) introduced a fractional
model for BGP and showed that when fractional routes are allowed,
equilibria always exist.  We prove that the problem of computing an
equilibrium in the fractional BGP game is \textbf{PPAD}-hard.  We next
study a fractional version of Bounded Budget Connection (BBC) games
introduced by Laoutaris {\em et al} (PODC 2008).  We prove that the
problem of computing an equilibrium of the fractional BBC game is also
\textbf{PPAD}-hard.  Furthermore, we show that both of these
computational problems have no fully-polynomial-time approximation
scheme unless \textbf{PPAD} is in \textbf{FP}.

Our complexity results are built on a new understanding of the
mathematical structures of these games.  In particular, we give an
alternative proof of the existence of stable paths in the fractional
BGP game, and prove that rational solutions always exist for both
fractional BGP and fractional BBC games.  Our study has also led us to
define a new flow-based notion of equilibria for matrix games, which
we call {\em personalized equilibria}.  We give a combinatorial
characterization of two-player personalized equilibria, building on
which we develop a polynomial-time algorithm.  In contrast, we prove
that for $k \geq 4$, personalized equilibria are \textbf{PPAD}-hard to
approximate in fully polynomial time.
}
\end{abstract}

\thispagestyle{empty}
\newpage
\setcounter{page}{1}

\section{Introduction}
This paper concerns two classes of games on networks involving
fractional flows --- fractional BGP games and fractional BBC
games. These games model important practical systems such as the
Internet and social networks.  The stable operating points of these
systems have real-world significance and hence there is interest in
their pure Nash equilibria. In order to understand the structure and
computational complexity of these equilibria, we define two new
concepts --- personalized equilibria for matrix games and
\gamenamePlural { }--- which are of independent interest. Below we
briefly describe and motivate each of the four different kinds of
games.

\textbf{Fractional BGP games.} The Border Gateway Protocol is the core routing protocol of the Internet. 
\junk{It does not solve the shortest paths problem because Autonomous Systems are strategic agents that exchange
information with a view to optimizing their preferred paths to the destination \cite{GriffinShepherdWilfong}.}
BGP can be viewed as a distributed mechanism for solving the stable paths problem \cite{GriffinShepherdWilfong}. 
In this paper,
we refer to the fractional version of the stable paths problem introduced in \cite{HaxellWilfong} as the
\emph{fractional BGP game}. Intuitively, the fractional BGP game is a game played between Autonomous Systems that
assign fractional capacities to the different paths leading to the destination in such a way that they maximize
their utility without violating the capacity constraints of downstream nodes. Clearly, the equilibria
of this game have significant implications for the stability of the Internet. \junk{Their structure and computational
complexity impact the operation and administration of the current Internet as well as the design principles 
and management policies for the next generation Internet.}

\textbf{Fractional BBC games.} Consider a social network where people
have to spend time and (cognitive) resources to build connections to
people. This situation naturally lends itself to being modeled by the
Bounded Budget Connection game \cite{OurPODC08, LaoutarisPRST}. In a BBC game,
strategic nodes acting under a cost budget form connections with a
view to optimizing their proximity to nodes of interest, which in turn
depends on other nodes' strategic actions.  BBC games belong to the
much studied class of network formation games. These games have
applications to a variety of problems ranging from ``how to monetize a
social network'' to ``how to structure incentives in a peer-to-peer
network to reduce congestion.''  Fractional BBC games were defined
in~\cite{LaoutarisPRST}, which left unresolved the complexity of finding
their Nash equilibria.  

\textbf{Personalized equilibria for matrix games - a generalization.}
Imagine a business selling outfits consisting of a pant (solid or striped) and a shirt
(cotton or wool).  The manager of one location decides on the ratio of
striped to solid pants while the manager at the other decides on the
ratio of cotton to wool shirts. Each manager is given the same number
of shirts and pants (in the proportions decided) and has to assemble
and sell the outfits at her own location in such a way as to
maximize her individual profits. Personalized equilibria for matrix
games capture exactly this situation: each player chooses a
distribution over her own actions, but then each player 
independently customizes the matching of her own actions to the
actions of other players in such a way as to maximize individual payoff. The concept
of personalized equilibria for matrix games generalizes both
fractional BGP and fractional BBC games.

\textbf{\gamenamePluralInitCap { }- a specialization.} It is New
Year's Eve. You and each of your friends is hosting a party. Each of
you has a preference order over the others' parties and has to
determine the fraction of the evening that you will spend at each
party.  Naturally, one cannot spend more time at a party
than the person hosting that particular party.  Your optimal action --
how long to host your party and which other parties to attend for how
long -- depends on your preference and other players' actions.  Such
\gamenamePlural\ arise whenever each player has a preference among her
actions and her distribution over her actions is somewhat constrained
by others' distributions.  \gamenamePluralInitCap\ are reducible in
polynomial-time to both fractional BGP and BBC games.

\junk{
The \gamename 
problem is to find the optimal way to spend the evening.  Preference 

Though
simple to describe, \gamenamePlural { }are hard to solve and are
reducible in polynomial-time to both fractional BGP and BBC games.
}
\junk{, i.e., what fraction of time to spend at each party. This
situation is exactly captured by the class of games, termed party
games, that we define in this paper. Party games are very simple to
describe, nevertheless, as we will show in this paper finding their
equilibria is not easy. Further, party games are reducible in
polynomial-time to both fractional BGP and fractional BBC games. Thus,
the hardness of finding the equilibria of party games implies the
hardness of finding equilibria of both fractional BGP and fractional
BBC games.  }

\subsection{Our Contributions}

Our paper centers on the study of two classes of flow-based network
games --- fractional BGP games and fractional BBC games, formally
defined in Section~\ref{sec:definitions}.  We address
the following two questions: \junk{1. Does a Nash equilibrium always exist
and if so what does the set of equilibria look like? 2. From a
computational standpoint, how difficult is it to find a Nash
equilibrium?}

\begin{enumerate}
\item Does a Nash equilibrium always exist and if so what does the set of equilibria look like?
\item From a computational standpoint how difficult is it to find a Nash equilibrium?
\end{enumerate} 

To answer the first question, we define a new flow-based notion of
equilibrium for matrix games -- {\em personalized
equilibria}. Personalized equilibria for matrix games constitute
a novel and useful generalization of the concept of Nash equilibria
for both fractional BGP as well as fractional BBC games. By
employing this generalization, we are able to show the following
results for multi-player matrix games.
\vspace{-2mm}
\begin{itemize}
\item We show that the set of personalized equilibria for any
multi-player matrix game is always nonempty and contains a rational
point, though it may be nonconvex (Section~\ref{sec:personalized}).
\end{itemize}
\vspace{-2mm}
It follows that a rational equilibrium always exists for both fractional BGP and BBC games.
We provide an alternate characterization for BGP games that enables a simpler existence proof and strengthens the 
main result in \cite{HaxellWilfong}.
We expect that personalized equilibria will be applicable elsewhere,
since they capture real-world situations in which players have the
opportunity to customize their use of others' actions.

To answer the second question, we create a new combinatorial
$k$-player abstract game -- the {\em \gamename} (see
Section~\ref{sec:hardness}).  \gamenamePluralInitCap { }are extremely
elementary games that are a simultaneous simplification of both
fractional BGP and fractional BBC games. By employing this simplified
abstraction, we are able to obtain the following by reduction from a
Brouwer fixed point
problem~\cite{ChenDengTeng,ChenDengTengJACM,DaskalakisGoldbergPapadimitriou}.
\vspace{-2mm}
\begin{itemize}
\item There are no fully polynomial-time approximation schemes (unless
\textbf{PPAD} is in \textbf{FP}) for computing equilibria in
\gamenamePlural\ (Section~\ref{sec:hardness}).
\end{itemize}
\vspace{-2mm} It follows that there are no fully polynomial-time
approximation schemes (unless \textbf{PPAD} is in \textbf{FP}) for
computing equilibria in fractional BGP and fractional BBC games, as
well as personalized equilibria in multi-player matrix games. Our
result for fractional BGP games settles a question left open in
\cite{HaxellWilfong}, while our result for fractional BBC games
settles an open question from \cite{OurPODC08}.  Finally, we study the
complexity of personalized equilbria in $k$-player matrix games, for
fixed $k$.
\begin{itemize}
\vspace{-2mm}
\item For $k = 2$, we provide a combinatorial characterization which
implies a polynomial-time algorithm for computing the personalized
equilibrium (Section~\ref{sec:personalized}).
\vspace{-2mm}
\item 
For $k \ge 4$, it is \textbf{PPAD}-hard to find personalized
equilibria.  Furthermore, for $k \ge 5$, there is no fully polynomial time
approximation scheme (unless \textbf{PPAD} is in \textbf{FP}) for
finding personalized equilibria (Section~\ref{sec:personalized}).
\end{itemize}
\vspace{-2mm}

\junk{The reduction 
involves the creation of gadgets that capture the actions of gates 
in a general Boolean circuit. 
It is our expectation that on account of its simplicity the party game, as well as the gadgets involved in
the reduction, will
find use in other \textbf{PPAD}-hardness reductions.

The following is a schematic of the main results and structure of this paper.

}

\subsection{Related Work}

Nash equilibrium \cite{Nash50, Nash51} is arguably the most influential
solution concept in game theory. Decades after Nash,
Papadimitriou defined a complexity class \textbf{PPAD} \cite{Papadimitriou94} to characterize 
proofs that rely on parity arguments. Recently, an exciting breakthrough made in 
\cite{DaskalakisGoldbergPapadimitriou} and strengthened in \cite{ChenDengTeng} showed the hardness of 
approximating Nash equilibria.
Since then, there has been a flurry of work on the complexity of finding
equilibria in a variety of games and markets
\cite{NisanRoughgardenTardosVazirani}.  All of our hardnesss proofs
build on the framework established
by~\cite{ChenDengTeng,ChenDengTengJACM,DaskalakisGoldbergPapadimitriou}
and heavily use their techniques.
\junk{Following 
in this tradition, our
work here centers on two classes of network games based on fractional flows.}

BGP has been the focus of
much attention since its inception \cite{RehkterLi, Stewart}. The integral stable paths
problem was introduced \cite{GriffinShepherdWilfong} to explain the 
nonconvergence
of BGP \cite{VaradhanGovindanEstrin}. The fractional relaxation of the stable paths problem, 
or the fractional BGP game, was defined in 
\cite{HaxellWilfong} where they proved the existence of an equilibrium but left open the complexity
of finding it. 
\cite{Kintali} gives a distributed algorithm for finding 
an $\epsilon$-approximation for the fractional BGP game that is guaranteed 
to converge, although no bounds are given on the time-to-convergence (the main result in
this paper implies a polynomial upper bound is unlikely). Other related works include a multicommodity version 
\cite{Papadimitriou01, MarkakisSaberi} and mechanism design \cite{FeigenbaumPapadimitriouSamiShenker}.

The BBC game, introduced in \cite{OurPODC08, LaoutarisPRST}, builds on a large body
of work in network formation games \cite{JacksonWolinsky, BalaGoyal}.
A direct precursor to BBC games was introduced in
\cite{FabrikantLuthraManevaPapadimitriouShenker} .
\cite{FabrikantLuthraManevaPapadimitriouShenker}, which together with
subsequent works \cite{AlbersEiltsEvenDarMansourRoditty,
DemaineHajiaghaviMahini}, focuses on obtaining price of anarchy
results \cite{KoutsoupiasPapadimitriou}.  In \cite{OurPODC08}, it is
shown that it is \textbf{NP}-hard \cite{GareyJohnson} to determine
whether an equilibrium exists in integral BBC games. Fractional BBC
games were also introduced in \cite{LaoutarisPRST}, but the problem of
finding an equilibrium was left open. Other related works include a
stochastic small-world version \cite{EvenDarKearns} and use of
contracts \cite{AnshelevichShepherdWilfong, JohariMannorTsitsiklis}.

\junk{
The fractional stable paths problem that we view as a fractional BGP game was defined by 
Haxell and Wilfong \cite{HaxellWilfong}. However, the integral stable paths problem 
was introduced much earlier, by Griffin, Shepherd, and Wilfong \cite{GriffinShepherdWilfong}, 
in order to explain the nonconvergence of BGP, as observed by Varadhan, Govindan, and Estrin 
\cite{VaradhanGovindanEstrin}. Kintali gives a distributed algorithm for finding 
an $\epsilon$-approximation for the fractional stable paths problem that is guaranteed 
to converge, although no bounds are given on the time-to-convergence \cite{Kintali}. 
Another game-theoretic model of BGP is  introduced in \cite{Papadimitriou01} and studied 
further in \cite{MarkakisSaberi}. In this model, we are given a multi-commodity flow in a 
network, and each node's payoff is based on the flow that starts or ends at that node. BGP as a 
protocol was introduced by Rehkter and Li in 1995. \cite{RehkterLi}. 

Also related is the work on designing strategy-proof mechanisms for
BGP~\cite{FeigenbaumPapadimitriouSamiShenker} as well as the recent work on
strategic network formation through AS-level
contracts~\cite{AnshelevichShepherdWilfong}. \cite{JohariMannorTsitsiklis} consider a 
contracts-based model of network formation where links do not have predefined
costs but are subject to negotiation and nodes attempt to minimize incoming
traffic by obtaining compensation in return.

The BBC game was introduced in \cite{OurPODC08}, building on a large body of work in network 
formation games, or games in which nodes may purchase links to other nodes in order to minimize 
some cost function or maximize some utility function.  In \cite{OurPODC08}, the authors show that 
pure Nash equilibria do not always exist in integral BBC and that it is NP hard to determine 
whether an equilibrium exists. However, the also consider a uniform case for which they give a 
class of Nash equilibria and fairly tight price of anarchy bounds, which are bounds on the ratio 
of the worst-case Nash equilibrium to the social optimum cost \cite{KoutsoupiasPapadimitriou}. 
\cite{FabrikantLuthraManevaPapadimitriouShenker} is the closest predecessor to BBC games: in their 
model, nodes have a utility function that includes the cost of building links as well as the distance 
to the other nodes. They give price of anarchy bounds in their model. 
\cite{AlbersEiltsEvenDarMansourRoditty} and ~\cite{DemaineHajiaghaviMahini} give more 
results for the same model. \cite{HaleviMansour} extends this model to the case where each node 
is only interested in connecting to a subset of the other nodes. \cite{EvenDarKearns} also imposes 
a cost for purchasing links, but considers a stochastic model and associated small-world effects. 

Earlier network creation games were studied by Jackson and Wolinsky \cite{JacksonWolinsky}, 
who study pairwise stability, and Bala and Goyal \cite{BalaGoyal}, who consider a model where 
nodes incur cost based on the number of incoming links. 

}

\newcommand{\prefer}{\ge}
\newcommand{\eps}{\varepsilon}
\section{Definitions} 
\label{sec:definitions}
In this section, we define fractional BGP and fractional BBC
games.  Our definitions lead to the existence of Nash equilibria for
these games using standard fixed-point techniques.  We defer the
formal proofs of existence of equilibria, however, to
Section~\ref{sec:personalized}, where we establish the existence of a
more general class of equilibria that includes the equilibria for both
fractional BGP and BBC games.

\subsection{Fractional BGP}
\label{sec:definitions.BGP}
The fractional BGP game is based on a new model
of~\cite{HaxellWilfong}, introducing the notion of a fractional stable
paths solution in the context of BGP.  We first present the model
of~\cite{HaxellWilfong}, and then define the fractional BGP game,
whose Nash equilibria are equivalent to fractional stable paths
solutions.

Let $G$ be a graph with a distinguished node $d$, called the {\em
destination}.  Each node $v
\neq d$ has a list $\pi(v)$ of simple paths from $v$ to $d$ and a
preference relation\footnote{A preference relation is a binary
relation that is transitive and complete.}  $\prefer_v$ among the
paths in $\pi(v)$.  For paths $P$ and $P'$ in $\pi(v)$, $P \prefer_v
P'$ indicates that $v$ prefers $P$ at least as much as $P'$.  We say
that $P >_v P'$ if $P \prefer_v P'$ is true but $P'\prefer_v P$ is not
true.  When it is clear from context that we are talking about the
preferences for node $v$, we will write $P \prefer P'$ instead of $P
\prefer_v P'$. For a path $S$, we also define $\pi(v, S)$ to be the
set of paths in $\pi(v)$ that have $S$ as a suffix. A \emph{proper
suffix} $S$ of $P$ is a suffix of $P$ such that $S \neq P$ and $S \neq
\emptyset$.

A {\em feasible fractional paths solution}\/ is a set $w = \{w_v: v
\neq d\}$ of assignments $w_v: \pi(v) \rightarrow [0,1]$ satisfying
the following: 
\begin{enumerate}
\item {\bf Unity condition}: for each node $v$, $\sum_{P
\in
\pi(v)} w_v(P) \le 1$
\item {\bf Tree condition}: for each node $v$, and each
path $S$ with start node $u$, $\sum_{P \in
\pi(v, S)} w_v(P) \le w_u(S)$.  
\end{enumerate}
In other words, a feasible solution is one in
which each node chooses at most 1 unit of flow to $d$ such that no
suffix is filled by more than the amount of flow placed on that
suffix by its starting node. A feasible solution $w$ is {\em
stable}\/ if for any node $v$ and path $Q$ starting at $v$, one of the
following holds: 
\begin{itemize}
\item[{\bf (S1)}] $\sum_{P \in \pi(v)} w_v(P) = 1$,
and for each $P$ in $\pi(v)$ with $w_v(P) > 0$, $P \prefer_v
Q$; or 
\item[{\bf (S2)}] There exists a proper suffix $S$ of $Q$ such
that $\sum_{P \in \pi(v,S)} w_v(P) = w_u(S)$, where $u$ is the start
node of $S$, and for each $P \in \pi(v,S)$ with $w_v(P) > 0$, 
$P\prefer_v Q$.
\end{itemize}
  In other words, in a stable solution: if node $v$ has
not fully chosen paths that it prefers at least as much as $Q$, then
it has completely filled path $Q$ by filling some suffix with paths
it prefers at least as much as $Q$.

We now define the fractional BGP game.  For convenience, let $w_{-v}$
denote $\{w_u: u \neq d, v\}$.  Given assignments $w_v$, $w'_v$,
and $w_{-v}$ such that $(w_v, w_{-v})$ and $(w'_v, w_{-v})$ are both
feasible, we say $w_v$ is {\em lexicographically at least}\/
$w'_v$ (implied: with respect to $w_{-v}$)\/ if the following holds for
every path $P$ in $\pi(v)$: $\sum_{P' \prefer P} w_v(P') \ge
\sum_{P' \prefer P} w'_v(P')$.  We say that $w_v$ is {\em
lexicographically maximal} (implied: with respect to $w_{-v}$)\/ if $(w_v,
w_{-v})$ is feasible and $w_v$ is lexicographically at least every
assignment $w'_v$ such that $(w'_v, w_{-v})$ is feasible.

In the \emph{fractional BGP Game}, a strategy for a node $v \neq d$ is
a weight function $w_v: \pi(v) \rightarrow [0,1]$ that satisfies the
unity and tree conditions, and the preference relation among the
strategies of a node $v$ is defined by the lexicographically at least
relation.  (Thus, a node's best response is a lexicographically
maximal flow.)  

We can now show that fractional stable paths solutions are equivalent
to (pure) Nash equilibria in the fractional BGP game.  We note
that~\cite{Kintali} has also independently shown that a fractional
stable paths solution is a Nash equilibrium of a suitably defined
game.  

\junk{
\begin{definition}A \emph{Lexicographic Ordering} of paths in a graph, given BGP preferences from a node $s$ to a single destination $t$, is the set of paths from $s$ to $t$ that are in the preference list for $s$, ordered from the highest to lowest preference.
\end{definition}

Given capacities for all paths from nodes $u \ne s$ to $t$, consider
the following algorithm to add weights to all $s$-$t$ paths:

\begin{definition}A \emph{Maximally Lexicographic Flow} is the set of weights on all paths from $s$ to $t$ defined by the above algorithm.
\end{definition}
}

\begin{theorem}
\label{thm:equivalence}
A fractional paths solution is stable iff it is
lexicographically maximal for every node. 
\end{theorem}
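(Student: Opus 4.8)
The plan is to prove the two implications separately, after one structural observation. Fix a node \(v\) and the other nodes' assignments \(w_{-v}\); then the tree condition fixes a capacity \(c_S := w_u(S)\) for every suffix \(S\) (with start node \(u\)) of a path in \(\pi(v)\), and the feasible assignments \(w_v\) are exactly the nonnegative vectors satisfying \(\sum_{P\in\pi(v,S)} w_v(P) \le c_S\) for all such \(S\), together with the unity bound \(\sum_{P\in\pi(v)} w_v(P)\le 1\). The crucial point is that the constraint sets form a \emph{laminar} family: for two suffixes \(S,S'\), either one is a suffix of the other, so \(\pi(v,S)\) and \(\pi(v,S')\) are nested, or neither is, in which case no single path has both as a suffix and \(\pi(v,S)\cap\pi(v,S')=\emptyset\); the set \(\pi(v)\) sits on top. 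Writing \(g(Q):=\sum_{P\prefer Q} w_v(P)\) for the ``prefix mass'' on paths at least as preferred as \(Q\), lexicographic maximality says precisely that \(g(Q)\) simultaneously attains its maximum \(M(Q):=\max\{\sum_{P\prefer Q} w'_v(P): w'_v \text{ feasible}\}\) over all \(Q\). So I must show that \(w\) is stable iff for every node \(v\) and every \(Q\) we have \(g(Q)=M(Q)\).

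For the direction lexicographic maximality \(\Rightarrow\) stability I argue the contrapositive by augmentation. Suppose stability fails at some \(Q\) with start node \(v\). The sets \(\pi(v,S)\) over suffixes \(S\) of \(Q\), together with \(\pi(v)\), all contain \(Q\) and, being nested, form a chain; let \(T^\star\) be the smallest member that is \emph{tight} under \(w_v\) (if none is tight, flow can be added to \(Q\) outright). If \(T^\star=\pi(v)\) the unity bound is met with equality, so the failure of \textbf{(S1)} supplies a path \(P^\star\) with \(w_v(P^\star)>0\) and \(Q>_v P^\star\); otherwise \(T^\star=\pi(v,S^\star)\) is a saturated proper suffix, and the failure of \textbf{(S2)} for \(S^\star\) supplies such a \(P^\star\in\pi(v,S^\star)\) (completeness turns the failure of \(P^\star\prefer_v Q\) into \(Q>_v P^\star\)). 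Now shift \(\delta>0\) from \(P^\star\) onto \(Q\). Every tight chain set contains both \(Q\) and \(P^\star\), since it contains \(T^\star\ni P^\star\) by minimality, so its load is unchanged; the non-tight chain sets absorb the extra \(\delta\); all sets off the chain either are unaffected or only lose flow from \(P^\star\). Thus the shifted \(w'_v\) is feasible, and since \(Q\prefer Q\) while \(P^\star\prec_v Q\), its prefix mass at \(Q\) equals \(g(Q)+\delta>g(Q)\), so \(w_v\) was not lexicographically maximal.

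For the converse, stability \(\Rightarrow\) lexicographic maximality, fix \(v\) and \(Q\) and let \(A=\{P:P\prefer Q\}\); I will exhibit a laminar family certifying \(g(Q)\ge\sum_{P\in A} w'_v(P)\) for every feasible \(w'_v\). Call a laminar set \(L\) (some \(\pi(v,S)\), or \(\pi(v)\)) \emph{good-saturated} if it is tight under \(w_v\) and every path in \(L\) carrying positive \(w_v\)-flow lies in \(A\). The engine is that stability holds at \emph{every} path: for each \(R\in A\), \textbf{(S1)} or \textbf{(S2)} at \(R\) yields a tight set containing \(R\) all of whose positively-used paths are \(\prefer_v R\) and hence \(\prefer_v Q\), i.e.\ a good-saturated set containing \(R\). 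Let \(\mathcal C\) be the maximal good-saturated sets. By laminarity any two comparable sets are nested, so the maximal ones are pairwise disjoint, and by the preceding sentence they cover \(A\). Hence for any feasible \(w'_v\), \(\sum_{P\in A} w'_v(P)\le\sum_{L\in\mathcal C}\sum_{P\in L} w'_v(P)\le\sum_{L\in\mathcal C} c_L\); and since each \(L\in\mathcal C\) is tight with positive \(w_v\)-mass only on \(A\), disjointness gives \(\sum_{L\in\mathcal C} c_L=\sum_{L\in\mathcal C}\sum_{P\in L} w_v(P)=\sum_{P\in A} w_v(P)=g(Q)\). Therefore \(g(Q)=M(Q)\), as required.

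I expect the real work to be in this last direction: a single invocation of \textbf{(S2)} saturates only one suffix, and good paths can reroute around it, so no local argument at \(Q\) alone suffices. The resolution is to use stability at all of \(A\) at once and to lean on laminarity, which forces the maximal tight good-saturated sets to be disjoint and thus to furnish an exactly tight cut; verifying that these maximal sets genuinely cover every good path, including those carrying no \(w_v\)-flow (which still lie in a good-saturated set by stability at that very path), is the step to check with care. Preference ties require no change, since only completeness is used in the augmentation step.
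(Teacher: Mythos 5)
Your proof is correct, and in the key direction it takes a genuinely different route from the paper's. For ``stable $\Rightarrow$ lexicographically maximal,'' the paper argues by contradiction with an extremal choice: among all lexicographically greater feasible assignments it picks $w^*_v$ whose first disagreement with $w_v$ occurs at the lowest possible preference level, then does a case analysis on (S1)/(S2) at that preference class, using disjointness of the minimal saturating suffixes within the class either to reach a contradiction outright or to build an assignment that disagrees even later, contradicting extremality. You instead exhibit an explicit optimality certificate: by laminarity, the maximal tight ``good-saturated'' sets are pairwise disjoint and cover $A=\{P : P \prefer_v Q\}$, and their capacities sum to exactly $\sum_{P\in A}w_v(P)$, so no feasible assignment can put more mass on $A$. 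This is a cut/LP-duality argument rather than an exchange argument; it buys a reusable structural fact (stability yields a saturated laminar cut for every preference prefix) and avoids the extremal bookkeeping, at the cost of the covering machinery. In the converse direction both proofs augment, but yours is tighter than the paper's: the paper adds $\eps$ to $Q$ and subtracts $\eps$ only from less-preferred positive-weight paths sharing a proper suffix of $Q$, which, as literally written, can violate the unity condition (when the total weight is already $1$ but all less-preferred weight sits on paths internally disjoint from $Q$, nothing gets decremented); your device of locating the minimal tight constraint $T^\star$ in the chain containing $Q$ --- with the unity constraint $\pi(v)$ treated as the top element of the laminar family --- and shifting weight from a strictly less preferred positive-weight path inside $T^\star$ handles that case automatically, and is the natural repair of the paper's argument.
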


\begin{proof}
\BfPara{A stable paths solution is a lexicographically maximal flow} 
Let $w$ be a fractional stable paths solution.  Assume, for the sake
of contradiction, that $w_v$ is not lexicographically maximal with
respect to $w_{-v}$.  Then there exists an assignment $w'_v$ such that
$(w'_v, w_{-v})$ is feasible and $w'_v$ is lexicographically greater
than $w_v$ with respect to $w_{-v}$.  Among all such assignments, we
set $w^*_v$ to be an assignment such that the preference of the
highest preference path at which $w_v$ and $w^*_v$ differ is smallest.
Let $P$ be the highest preference path at which they differ and let
${\cal P}$ denote the set of all paths with the same preference as
$P$.

By the definition of stability, at least one of the two stability
conditions must hold for $P$ in $w_v$.  First, assume (S1) is
satisfied.  Then we have $\sum_{P' \in \pi(v)} w_v(P') = 1$ and each
$P'
\in \pi(v)$ with $w_v(P') > 0$ is such that $P' \prefer P$.  This implies
that $\sum_{P' \prefer P} w_v(P') = 1$.  But since $w^*_v$ satisfies the
unity condition, we have $\sum_{P' \prefer P} w^*_v(P') = 1$.  \junk {Since $w_v$
agrees with $w^*_v$ on all paths preferred more than $P$, it follows
that $w^*_v$ is not lexicographically greater than $w_v$, a contradiction.} However, this means that $w_v$ is lexicographically at least $w^*_v$ (by definition of ``lexicographically at least''), so $w^*_v$ is not lexicographically greater than $w_v$, a contradiction.

If (S1) does not hold for $P$, then condition (S2) must be satisfied
for each path in ${\cal P}$.  For each $Q \in {\cal P}$, there exists
a proper suffix $S_Q$ (say with start node $u$) of $Q$ such that
$\sum_{P' \in \pi(v,S_Q)} w_v(P') = w_u(S_Q)$, and each $P' \in
\pi(v,S_Q)$ with $w_v(P') > 0$ is such that $P' \prefer_v P$; for each $Q$,
we set $S_Q$ to be the smallest such suffix.  By our choice of $S_Q$
for each $Q$, we obtain that for $Q, Q' \in {\cal P}$, $\pi(v,S_Q)$
and $\pi(v,S_Q')$ are disjoint if $Q \neq Q'$.  Since $(w^*_v,
w_{-v})$ satisfies the tree condition, we have $\sum_{P' \in
\pi(v,S_Q)} w^*_v(P') \le w_u(S_Q)$.  Therefore, using the fact that $\pi(v,S_Q)$'s are all disjoint, $\sum_{Q \in {\cal P}} \sum_{P' \in \pi(v,S_Q): P' \geq P} w^*_v(P')  \leq \sum_{Q \in {\cal P}} w_u(S_Q) = \sum_{Q \in {\cal P}} \sum_{P' \in \pi(v,S_Q): P' \geq P} w_v(P')$. Furthermore, since $w^*_v$ is
identical to $w_v$ on all paths more preferred than the paths in
${\cal P}$, we obtain that $\sum_{Q \in {\cal P}} w^*_v(Q) \le \sum_{Q
\in {\cal P}} w_v(Q)$.

We now consider two cases.  If $\sum_{Q \in {\cal P}} w^*_v(Q) <
\sum_{Q \in {\cal P}} w_v(Q)$, then $w_v$ is lexicographically greater than
$w^*_v$, leading to a contradiction.  Otherwise, we derive a new
assignment $w'_v$ that is identical to $w^*_v$, except on paths in
${\cal P}$, where it is identical to $w_v$.  This new assignment
$w'_v$ is lexicographically greater than $w_v$, since $w^*_v$ was
lexicographically greater; the highest preference path at which it
differs from $w_v$, however, has lower preference than that for
$w^*_v$, contradicting our choice of $w^*_v$.

\medskip

\BfPara{A lexicographically maximal flow is a stable paths solution} 
Let $w_v$ be a lexicographically maximal flow with respect to
$w_{-v}$.  Consider any path $Q$ that starts at a node $v$.  Suppose,
for the sake of contradiction, $Q$ does not satisfy either of the two
stability conditions.  That is, we have (i) $\sum_{P \prefer Q} w_v(P)
< 1$, and (ii) for each proper suffix $S$ of $Q$ with start node $u$,
we have $\sum_{P \in \pi(v,S), P \prefer Q} w_v(P) < w_u(S)$.  We
derive a new assignment $w'_v$ which is identical to $w_v$ except for
the following: $w'_v(Q) = w_v(Q) + \eps$, for a suitably small $\eps >
0$; for each proper suffix $S$ of $Q$, if there exists a path that is
less preferred than $Q$, shares $S$, and has positive weight, then we
select one such path $P$ and set $w'_v(P) = w_v(P) - \eps$.  It is
easy to see that $w'_v$ satisfies the unity and tree conditions.
However, $w'_v$ is lexicographically greater than $w_v$, a
contradiction.
\end{proof}

\junk{
Let $S$ = the set of path weights in the stable paths solution. Step
through the greedy algorithm described above for finding the maximally
lexicographic flow. Let $P$ be the first path considered by the
algorithm that is assigned a different weight by the algorithm then
the weight in $S$.

If $P$ is assigned a lower weight by the algorithm, then at this point in the algorithm, there must be some suffix of $P$ that has weight=capacity. However, since all paths previously considered in the algorithm were assigned the weights from $S$, there is also a suffix of $P$ in $S$ that has weight=capacity (when $P$ is assigned a lower weight than in $S$). This will violate the tree condition of the fractional stable paths problem. So $P$ could not have been assigned a lower weight by the algorithm.

If $P$ is assigned a higher weight by the algorithm: for path $P$, at least one of the two stability conditions must hold in $S$. First assume condition 2 is statisfied. This means that all paths that start at $s$, share a proper final segment with $P$, and have any weight assigned to them in $S$ must have higher priority than $P$ (so they were considered before $P$ in the algorithm and therefore have identical weights in the algorithm as in $S$). However, it must also be true that for some proper final segment of $P$, the sum of the weights of all of these higher priority paths must add to the weight of the proper final segment. In other words, it must be true that some suffix has weight = capacity by the time we add $P$ in the algorithm, contradicting the fact $P$ had a higher weight in the algorithm.

So $P$ must satisfy stability condition 1 in $S$.  In other words, all paths with any weight assigned in $S$ have higher priority than $P$ (so all paths were already assigned the same weight by the algorithm as the weight assigned in $S$), and the sum of all the path weights from $s$ = 1. However, if all paths were already assigned their weights earlier in the algorithm, and the path weights already sum to 1, the algorithm would not have added $P$. Therefore, a stable paths solution is also a maximally lexicographic flow.
}
\junk{
\BfPara{A maximally lexicographic flow is a stable paths solution.} 
Consider any path $P$ that is given weight between 0 and 1 by the
lexicographic flow algorithm. The amount of weight given to $P$ was
enough to ensure that some subpath had weight=capacity (including the
weight of $P$ and the weights of all paths already assigned weight by
the algorithm). All paths that already have weight have higher
priority than $P$. This meets stability condition 2.

Consider any path $P$ that is given weight 1 by the lexicographic flow
algorithm. Clearly, the sum of the weights of all paths from $s$ to
$t$ is now 1. No other path has any weight assigned, so this meets
stability condition 1.

Consider any path $P$ that is given weight 0 by the lexicographic flow
algorithm. This path is given weight 0 either because the algorithm
completes before it gets to $P$ or because some subpath already has
weight=capacity before $P$ is added. If the algorithm completes before
it gets to $P$, then the sum of the weights of paths from $s$ to $t$ =
1, and all paths with weight have higher priority then $P$, so this
meets stability condition 1. If some subpath already has
weight=capacity, then all of the weight on this subpath must be from
higher priority paths, so this meets stability condition 2.

Therefore, all paths meet a stability condition, and the flow is also
a stable paths solution.  }

\junk{
We are now ready to establish the existence of a Nash equilibrium in
every fractional BGP game, thus yielding an alternative proof to the
one given in \cite{HaxellWilfong} of the existence of a fractional
stable paths solution.
\begin{theorem}
Every instance of the fractional BGP game has a pure Nash equilibrium.
\end{theorem}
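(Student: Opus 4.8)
The plan is to reduce the existence of a pure Nash equilibrium to the existence of a \emph{stable} fractional paths solution, which Theorem~\ref{thm:equivalence} identifies with a flow that is lexicographically maximal for every node. Since a node's best response is by definition a lexicographically maximal flow with respect to the others' assignments, a joint assignment $w^*$ is a pure Nash equilibrium exactly when, for each $v$, $w^*_v$ lies in the best-response set of $v$ against $w^*_{-v}$. Crucially, such a $w^*$ is \emph{automatically} feasible: the tree conditions that $w^*$ must satisfy are precisely the constraints that define each node's feasible region against $w^*_{-v}$, so they hold at the fixed point. It therefore suffices to produce a joint fixed point of the simultaneous best-response correspondence, and no convergent iterative construction should be expected (consistent with the paper's later hardness results), so a non-constructive fixed-point theorem is the right tool.

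First I would cast the fractional BGP game as a generalized game and apply a Kakutani-type fixed-point theorem (Kakutani--Fan--Glicksberg). For each node $v$, take the strategy set $\Sigma_v = \{x_v : \pi(v) \to [0,1],\ \sum_{P \in \pi(v)} x_v(P) \le 1\}$, which is nonempty, convex, and compact and encodes only the unity condition. The coupling between players enters through the constraint correspondence $F_v(w_{-v}) = \{x_v \in \Sigma_v : \sum_{P \in \pi(v,S)} x_v(P) \le w_u(S)$ for every suffix $S$ with start node $u\}$; its values are intersections of $\Sigma_v$ with finitely many linear inequalities, hence convex and compact, and always nonempty since the zero flow lies in every $F_v(w_{-v})$. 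The best-response correspondence $BR_v(w_{-v})$ is the set of lexicographically maximal flows inside $F_v(w_{-v})$. I would then check the hypotheses of the fixed-point theorem for the product map $w \mapsto \prod_v BR_v(w_{-v})$ on the convex compact set $\prod_v \Sigma_v$ and read off a fixed point $w^*$; by the reduction above, $w^*$ is a stable solution, hence a pure Nash equilibrium.

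The two substantive points are the continuity of $F_v$ and the good behavior of $BR_v$. Upper hemicontinuity of $F_v$ is immediate because the defining inequalities are closed and their right-hand sides $w_u(S)$ depend continuously on $w_{-v}$; lower hemicontinuity is the delicate direction and must be argued at degenerate configurations where some $w_u(S)$ vanishes and forces $\sum_{P \in \pi(v,S)} x_v(P) = 0$, using that all constraints are one-sided ``$\le$'' with the zero flow always available. For $BR_v$ I would sidestep the fact that the (partial-order) lexicographic preference is not representable by any continuous utility: lexicographic maximality means that every cumulative sum $\sum_{P' \prefer P} x_v(P')$ simultaneously attains its maximum over $F_v(w_{-v})$, realized by the greedy water-filling assignment, whose simultaneous optimality follows from the laminar structure of the nested tree constraints. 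This exhibits $BR_v(w_{-v})$ as the intersection of the maximizer faces of the linear functionals $\sum_{P' \prefer P} x_v(P')$, which is nonempty and convex; and since each optimal value $\max_{y \in F_v(w_{-v})} \sum_{P' \prefer P} y(P')$ is continuous in $w_{-v}$ by the maximum theorem (using continuity of $F_v$), the graph of $BR_v$ is closed.

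I expect the main obstacle to be the interaction of these two points: establishing lower hemicontinuity of $F_v$ at the degenerate, zero-capacity configurations so that the level-by-level maximum-theorem argument for the lexicographic best response actually goes through. This is exactly where a naive Nash-existence argument would break, since the discontinuity of lexicographic preferences prevents a direct appeal to utility-based existence theorems, and it is handled here by linearizing the preference into the continuous cumulative-sum functionals. As an alternative that avoids the lexicographic subtlety altogether, one can invoke the existence of \emph{rational} personalized equilibria proved later for all matrix games, of which the fractional BGP game is a special case; that route additionally yields a rational solution and thereby the strengthening of the main result of \cite{HaxellWilfong}.
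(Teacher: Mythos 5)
Your proposal is correct, but it takes a genuinely different route from the one the paper actually uses for this statement. The paper never argues existence directly on the BGP game: it defers the proof to Section~\ref{sec:personalized}, reducing fractional BGP to finding a personalized equilibrium of a matrix game (Theorem~\ref{thm:BGPPersonalized}), where the lexicographic preference is replaced by cardinal path utilities (the quantities $q_v(P)+1$) and the tree constraints are folded into the hyperedge payoff structure, and then proving that personalized equilibria exist in every matrix game (Theorem~\ref{thm:exist}) by applying \cite[Proposition~20.3]{OsborneRubinstein} to a game whose strategy spaces are plain, uncoupled simplices and whose payoffs are optimal values of linear programs, hence continuous and quasi-concave. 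You instead attack the BGP game head-on as a generalized game with coupled constraint sets, keep the lexicographic preference, and linearize it into the cumulative prefix-sum functionals; the points you flag as delicate are exactly the right ones, and each of the claims you lean on is true: lower hemicontinuity of the constraint correspondence $F_v$ holds by a scaling argument precisely because the constraints are one-sided and downward closed with the zero flow always feasible, a simultaneous maximizer of all prefix functionals exists because the suffix sets $\pi(v,S)$ form a laminar family (so greedy water-filling maximizes every preference-prefix at once), and Berge's maximum theorem plus Kakutani then deliver the fixed point, which is automatically feasible since $BR_v(w_{-v})\subseteq F_v(w_{-v})$. Your route is close in spirit to a direct fixed-point argument that survives only as a commented-out sketch in the paper's source, but it is more careful than that sketch, which glosses over both the interdependence of strategy sets created by the tree condition and the discontinuity of lexicographic preferences. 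What the paper's detour through personalized equilibria buys is generality and strength: the same two theorems cover fractional BBC games and all matrix games, and Theorem~\ref{thm:rational} upgrades existence to \emph{rational} equilibria, which is the advertised strengthening of \cite{HaxellWilfong}; your closing remark correctly identifies this alternative. What your direct route buys is a self-contained proof for BGP that avoids the reduction machinery and, in particular, avoids having to justify that replacing preference lists by cardinal weights preserves the equilibrium set.
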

\begin{proof}
A game has a pure Nash equilibrium if the strategy space of each
player is a compact, non-empty, convex space, and the best response of
each player $v$ is upper-hemicontinuous on the strategy space of all
players, quasi-concave in the strategy space of $v$, and
non-empty~\cite[Proposition~20.3]{OsborneRubinstein}.

The strategy space of each player $v$ is the set of all weight
functions $w_v$ that satisfy the unity and tree conditions.  This
space is clearly non-empty (it contains the all-zeros strategy),
convex (both the conditions yield linear constraints), and compact.
We next need to show that the best response function of each player is
convex and continuous.  Consider any two best response strategies of
$v$: $w_v$ and $w'_v$.  Since both are lexicographically maximal, for
any $v$, and any path $P \in \pi(v)$, the sum of weights of all paths
with same preference as $P$ is the same for both $w_v$ and $w'_v$.
Any convex combination of these two strategies is thus also
lexicographically maximal; it is also a valid strategy, thus
establishing the convexity of the best response function.  To show
upper-hemicontinuity, we need to show that for any sequence $(w^n_v,
w^n_{-v}) \rightarrow (w_v, w_{-v})$ such that $w^n_v$ is
lexicographically maximal with respect to $w^n_{-v}$, for all $n$,
$w_v$ is lexicographically maximal with respect to $w_{-v}$.  This is
clearly true since the feasibility conditions are linear inequalities
and the stability conditions (to establish lexicographical maximality)
are linear equations.  

Finally, we need to show that for every $w_{-v}$, there exists $w_v$
that is lexicographically maximal with respect to $w_{-v}$.  This
directly follows from the fact that the set of feasible strategies
with respect to $w_{-v}$ is compact.  Indeed, the following greedy
algorithm yields a lexicographically maximal flow.

\begin{algorithm}[htb!]
\caption{
Lexicographically maximal flow algorithm at node $v$
}
\begin{algorithmic}[1]\label{algo:A}
\STATE set $w_v(P)$ to zero for each $P$ in $\pi(v)$
   \FOR{each path $P$ in $\pi(v)$ in preference order $\prefer_v$}
      \IF{there is no proper final segment $S$ (with start node $u$)
      of $P$ such that $\sum_{P \in \pi(v,S)} w_v(P) = w_u(S)$} 
	\STATE
      	increase $w_v(P)$ until $\sum_{P \in \pi(v)} w_v(P) = 1$ or
      there exists a proper final segment $S$ (with start node $u$) of
      $P$ such that $\sum_{P \in \pi(v,S)} w_v(P) = w_u(S)$.  
\ENDIF 
\ENDFOR
\end{algorithmic}
\end{algorithm}
\end{proof}
}

\subsection{Fractional BBC}
\label{sec:definitions.BBC}
\newcommand{\budgetnoargs}{b}
\newcommand{\budget}[1]{b(#1)}
\newcommand{\length}[3]{\ell_{#1}(#2,#3)}
\newcommand{\lengthonearg}[1]{\ell_{#1}}
\newcommand{\cost}[2]{c(#1,#2)}
\newcommand{\costnoargs}{c}
\newcommand{\dist}[1]{\mbox{cost}(#1)}
\newcommand{\distnoargs}{\mbox{cost}}
\newcommand{\pref}[2]{w(#1,#2)}
\newcommand{\prefnoargs}{w}
\newcommand{\dest}{d}
We define a fractional variant of the Bounded Budget Connection game, as
in~\cite{LaoutarisPRST}.  A fractional Bounded Budget
Connection game (henceforth, a fractional BBC game) is specified by a
tuple $\langle V, \dest, \costnoargs,\budgetnoargs\rangle$, and a
length function $\lengthonearg{u}$ for each $u \in V$, where $V$ is a
set of nodes, $d \in V$ is a distinguished destination node,
$\costnoargs: V \times V \rightarrow \mathbb{Z}$, $\budgetnoargs: V
\rightarrow \mathbb{Z}$, and $\lengthonearg{u}: V \times V \rightarrow
\mathbb{Z}$ (for each $u
\in V$) are functions.  For any $u, v \in V$, $\cost{u}{v}$ denotes the 
cost to $u$ of directly linking to $v$, and $\length{x}{u}{v}$ denotes
the length of the link $(u,v)$ from the perspective of $x$, if $u$ has
established this link.  For any node $u \in V$, $\budget{u}$,
specifies the budget $u$ has for establishing outgoing directed links:
the sum of the costs of the links established by $u$ times the amount
placed on each link should not exceed $\budget{u}$.

A strategy for node $u$ is a weight function $w_u: V \rightarrow
[0,1]$ that $u$ places on each outgoing edge $(u,v): v \in V$ such
that $\sum_{(u,v)} \cost{u}{v} \times w_u(v) \le \budget{u}$.  Let
$w_u$ denote a strategy chosen by node $u$ and let $W = \{w_u: u \in
V\}$ denote the collection of strategies.  The network formed by $W$
is simply the directed, capacitated complete graph $G(W)$, in which
the capacity of the directed edge $(u,v)$ is $w_u(v)$.  The utility of
a node $u$ is given by $- f(u)$, where $f(u)$ is the cost of a 1-unit
minimum cost flow from $u$ to $d$, according to the capacities given
by $W$ and the lengths from the perspective of $u$ given by
$\lengthonearg{u}$.  We assume that there is also
always an additional edge from each node to $d$ with cost 0, capacity
$\infty$, and length = some large integer $M \gg n\max_{x,u,v}
\length{x}{u}{v}$; we refer to $M$ as the {\em disconnection
penalty}. In other words, if the max flow from $u$ to $v$ is $\alpha <
1$, then $f(u)$ is the cost of the minimum cost $\alpha$ flow from $u$
to $d$ plus $(1-\alpha) \cdot M$.

\newcommand{\fraccost}[3]{\mbox{cost}_{#2#3}(#1)}
\junk{
\begin{theorem}
\label{thm:fractional}
Every instance of the fractional BBC game has a pure Nash equilibrium.
\end{theorem}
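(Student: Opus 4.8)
The plan is to prove existence via a standard fixed-point argument applied to the best-response correspondence. It suffices to verify the usual hypotheses that guarantee a pure Nash equilibrium: each player's strategy set is nonempty, compact, and convex, and each player's best-response correspondence is nonempty-valued, convex-valued, and upper-hemicontinuous; the product correspondence then has a fixed point by Kakutani's theorem, and any fixed point is by definition a profile in which every node plays a best response. First I would dispatch the strategy sets. For node $u$, the strategy set is $\{w_u : V \to [0,1] \mid \sum_v \cost{u}{v} w_u(v) \le \budget{u}\}$, namely the intersection of the cube $[0,1]^V$ with a single linear budget inequality. This is a nonempty (it contains the all-zeros weighting), compact, convex polytope.

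The crux is analyzing the cost $f(u)$, the value of a minimum-cost $1$-unit flow from $u$ to $d$ in $G(W)$ under the lengths $\lengthonearg{u}$. The key structural fact is that $f(u)$ is convex in $u$'s own strategy $w_u$ (holding $w_{-u}$ fixed), which I would prove directly. Given two strategies $w_u^1, w_u^2$ with corresponding optimal unit flows $x^1, x^2$, the flow $\lambda x^1 + (1-\lambda) x^2$ is feasible for the strategy $\lambda w_u^1 + (1-\lambda) w_u^2$: flow conservation and the unit-demand constraint are linear and hence preserved; each capacity on $u$'s outgoing edges equals the convex combination of the two capacities, which the combined flow respects, while every other capacity is determined by $w_{-u}$ and is unchanged. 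Since the length objective is linear, the cost of the combined flow is exactly the convex combination of the two optimal costs, so $f(u)$ at the combined strategy is at most that combination. Convexity of $f(u)$ makes the utility $-f(u)$ quasi-concave in $w_u$, and the set of best responses, being the minimizers of a convex function over a convex set, is therefore convex.

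Next I would establish continuity. Because every node always retains the disconnection edge to $d$ of infinite capacity and length $M$, the flow linear program is feasible for every strategy profile, so $f(u)$ is finite everywhere and bounded above by $M$ (route all demand on penalty edges). The optimal value of a linear program is continuous in the right-hand-side data on the region where it is finite, and here the capacities $w$ enter only as such data; hence $f(u)$ is jointly continuous in $(w_u, w_{-u})$. Continuity together with compactness of the strategy set gives nonemptiness of the best-response set, and, since $u$'s constraint set does not even depend on $w_{-u}$ (a constant, hence continuous, compact-valued constraint correspondence), Berge's maximum theorem yields upper-hemicontinuity of the best-response correspondence. Assembling these facts over all players and invoking Kakutani then produces the desired equilibrium.

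I expect the main obstacle to be the continuity of $f(u)$ at the boundary of the strategy space, where shrinking some capacities toward zero can disconnect parts of $G(W)$; this is exactly where the disconnection penalty $M$ does the real work. Without it the value function would jump to $+\infty$ as a node becomes unreachable, destroying both continuity and the hypotheses of the fixed-point theorem. I would therefore be careful to spell out that, thanks to the always-present penalty edges, the value stays bounded by $M$ and varies continuously as capacities cross zero, so no such discontinuity arises and the parametric-LP continuity argument applies uniformly over the whole strategy product.
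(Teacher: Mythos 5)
Your proof is correct and follows essentially the same route as the paper's: both verify that each strategy set is a nonempty, compact, convex polytope, establish (quasi-)convexity of the min-cost flow value in a player's own strategy by the identical flow-mixing argument, and conclude via a Kakutani-type fixed-point theorem (the paper invokes Proposition~20.3 of Osborne--Rubinstein, which packages exactly your Kakutani-plus-Berge assembly). The only difference is one of detail: the paper asserts continuity of the cost function as clear, whereas you justify it via parametric-LP continuity and the role of the disconnection penalty $M$ -- a welcome elaboration, not a different approach.
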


\begin{proof}
A game has a pure Nash equilibrium if the strategy space of each
player is a compact, non-empty, convex space, and the utility function
of each player $u$ is continuous on the strategy space of all players
and quasi-concave in the strategy space of
$u$~\cite[Proposition~20.3]{OsborneRubinstein}.

The strategy space of each player $u$ is simply the convex polytope
given by $\{\sum_v x_v \cost{u}{v} \le \budget{u}\}$.  It is
clearly compact, non-empty, and convex.

The continuity of the utility function is also clear.  It remains to
prove that the utility function of each player is quasi-concave in the
strategy space of the player.  Since the utility function is merely
the negative of the cost, we will show that the cost of the min-cost
flow is a quasi-convex function.

Consider two strategies $X_u$ and $Y_u$ of the player $u$, given fixed
strategies $X_{-u} = \{x_{(s,t)} : s \neq u\}$ of all other players.  Fix a destination $v$.
Suppose there exists a unit flow $f_{(u,v)}$ in $G(X_u)$
and a unit flow $g_{(u,v)}$ from $u$ to $v$ in $G(Y_u)$ with costs $\fraccost{X_u}{u}{v}$ and $\fraccost{Y_u}{u}{v}$ respectively.  Given any
$\lambda \in [0,1]$, consider the strategy $Z_u = \lambda X_u + (1 -
\lambda) Y_u$ for player $u$.  We define the unit flow $h_{(u,v)}$ from
$u$ to $v$ as follows.  For any edge $e = (s,t)$, we set
$h_{(u,v)}(e) = \lambda f_{(u,v)}(e) + (1 - \lambda)g_{(u,v)}(e)$.  Since $f_{(u,v)}(e)$ is at
most $x_{(s,t)}$ and $g_{(u,v)}(e)$ is at most $x_{(s,t)}$, $h_{(u,v)}(e)$ is at most
$\lambda x_{(s,t)} + (1 - \lambda)x_{(s,t)} = x_{(s,t)}$.  Furthermore, the cost
of the flow to node $u$ equals
\begin{eqnarray*}
&& \sum_{(x,y)} h_{(u,v)}(x,y) \length{u}{x}{y} \\
& = & \sum_{(x,y)} \length{u}{x}{y} (\lambda f_{(u,v)}(x,y) + (1 - \lambda) g_{(u,v)}(x,y)) \\ 
& = & \sum_{(x,y)} \length{u}{x}{y} \lambda f_{(u,v)}(x,y) + \sum_{(x,y)} \length{u}{x}{y} (1 - \lambda) g_{(u,v)}(x,y) \\ 
& = & \lambda \fraccost{X_u}{u}{v} + (1 - \lambda) \fraccost{Y_u}{u}{v} \\ 
& \le & \max\{\fraccost{X_u}{u}{v},\fraccost{X_u}{u}{v}\}.
\end{eqnarray*}
Thus, the cost function with respect to one destination is
quasi-convex.  Since the cost for a player is simply the sum of the
weighted costs with respect to all destinations, the quasi-convexity
of the cost function and, hence, the quasiconcavity of the utility
function follow.  This completes the proof of the theorem.
\end{proof}
}

\section{Hardness of Finding Equilibria}
\label{sec:hardness}
In this section, we define a very simple game, the \emph{\gamename}, which is a
special case of both fractional BGP and fractional BBC games. In
Section~\ref{sec:nonconvex}, we show that the set of all equilibria in
a \gamename { }is not convex, implying that we cannot hope to find an
equilibrium for fractional BGP or BBC games using convex
programming. We next present, in Section \ref{sec:ppad-hardness}, our
main result: it is \textbf{PPAD}-hard to find an equilibrium in
the \gamename. Finally, in Section \ref{sec:approximate}, we define an
$\epsilon$-approximate equilibrium for the \gamename, which encompasses two
previously-defined notions of approximation for fractional BGP.  We
extend our \textbf{PPAD}-hardness result to approximate equilibria, thereby
proving that there are no fully polynomial-time approximation schemes
(unless \textbf{PPAD} is in \textbf{FP}) for computing equilibria in
both fractional BGP games and fractional BBC games.
\junk{
In this section, we explore the complexity of finding equilibria in
fractional BGP and BBC games, whose existence is established in
Section~\ref{sec:definitions} and previous work.  A natural first
question to ask is whether the set of all equilibria is convex, as we
might might then hope to find an equilibrium using convex programming.
We present in Section~\ref{sec:nonconvex} a simple fractional BGP
instance for which the set of equilibria is not convex.  We also show,
however, that, unlike Nash equilibria for matrix games, all instances
of fractional BGP have at least one solution in which all weights are
rational.  We next present, in Section~\ref{sec:ppad-hardness}, our
main result that it is \textbf{PPAD}-hard to find an equilibrium in fractional
BGP games.  In Section~\ref{sec:approximate}, we consider two notions
of approximation for equilbria in fractional BGP games, and extend our
\textbf{PPAD}-hardness result to both notions.  Finally, in
Section~\ref{sec:BBC}, we extend these hardness results to fractional
BBC games by showing that the fractional BGP instances used in the
reductions can, in fact, be reduced to fractional BBC games.
}
\subsection{\gamenameCaps}
We begin by defining \gamenamePlural.  In a \gamename\ with a set $S$ of
players, each player's strategy set is $S$.  Each player $i \in S$ has
a preference relation $\prefer_i$ among the strategies. Each player
$i$ chooses a {\em weight distribution}, which is an assignment $w_i: S
\rightarrow [0,1]$ satisfying two conditions: (a) the weights add up
to $1$: $\sum_{j \in S} w_i(j) = 1$; and (b) the weight placed by $i$
on $j$ is no more than the weight placed by $j$ on $j$: $w_i(j) \le
w_j(j)$ for all $i,j \in S$.  As in the case of fractional BGP, the
preference relations $\prefer_i$ induce a preference relation among
the weight distributions as follows: $w_i$ is {\em lexicographically
at least}\/ $w'_i$ if for all $j \in S$, $\sum_{k \prefer_i j} w_i(k)
\ge \sum_{k \prefer_i j} w'_i(k)$.  An equilibrium in a \gamename\ is an
assignment $w = \{w_i: i \in S\}$ such that $w_i$ is lexicographically
maximal for all $i \in S$.  

\newcommand{\pInstance}{$\textbf{P}$}
\newcommand{\bInstance}{$\textbf{B}$}
\newcommand{\ib}{i}
\newcommand{\jb}{j}

We now show that the \gamename { }is a special case of both fractional
BGP and fractional BBC.  

\begin{lemma} \label{BGPinstanceofSimple}
There is a polynomial-time reduction from the \gamename{ }to the fractional BGP game. 
\end{lemma}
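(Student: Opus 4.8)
The plan is to construct, from any preference game instance, a fractional BGP instance whose stable paths solutions (equivalently, by Theorem~\ref{thm:equivalence}, lexicographically maximal flows) correspond exactly to the equilibria of the preference game. The key structural observation is that the two defining constraints of a preference game already look like specializations of the BGP feasibility conditions: the unity condition $\sum_{j} w_i(j) = 1$ matches the BGP unity condition, and the ``dominance'' constraint $w_i(j) \le w_j(j)$ looks like a tree condition where the relevant suffix is controlled by player $j$'s self-weight $w_j(j)$. So the reduction should engineer a graph in which player $i$'s action ``place weight on $j$'' becomes a path whose critical suffix is a short path owned by node $j$, and the capacity of that suffix is exactly $w_j(j)$.

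\emph{First} I would set up the graph. For each player $i \in S$ I create a node $v_i$, add the destination $d$, and for each ordered pair $(i,j)$ I create a path $P_{ij}$ from $v_i$ to $d$ that encodes ``$i$ places weight on $j$.'' The construction must force the constraint $w_i(j) \le w_j(j)$ to arise from the tree condition. The natural way to do this is to route $P_{ij}$ through a dedicated suffix owned by $v_j$ --- namely the suffix corresponding to $j$'s own diagonal path $P_{jj}$ --- so that $\sum_{\text{paths of } v_i \text{ using that suffix}} w_{v_i}(\cdot) \le w_{v_j}(P_{jj})$, and then arrange that $w_{v_j}(P_{jj})$ is exactly player $j$'s self-weight $w_j(j)$. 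I would also need each $v_i$'s preference relation $\prefer_{v_i}$ over its paths $\{P_{ij}\}_j$ to mirror the preference relation $\prefer_i$ over strategies, which is immediate since there is a path per strategy.

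\emph{Next}, with the graph fixed, I would verify the correspondence in both directions. Given a preference-game equilibrium $w$, define $w_{v_i}(P_{ij}) = w_i(j)$ and check that the resulting assignment is feasible (unity from condition (a), tree from condition (b) and the suffix structure) and that lexicographic maximality in the BGP game translates, path-by-path under the preference isomorphism, into lexicographic maximality in the preference game. Conversely, given a lexicographically maximal BGP flow on this graph, I would read off $w_i(j) := w_{v_i}(P_{ij})$ and argue it satisfies both preference-game constraints and is lexicographically maximal. Because the suffix gadgets are short and there is one path per ordered pair, the whole construction is clearly polynomial-time.

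\emph{The main obstacle} I anticipate is getting the tree/suffix gadget to produce \emph{exactly} the constraint $w_i(j) \le w_j(j)$ and nothing stronger or weaker. In particular I must ensure that (i) the suffix shared by the paths $\{P_{ij} : i \in S\}$ is genuinely a proper suffix owned by $v_j$ whose capacity equals $v_j$'s self-weight, so the tree condition reads as the desired inequality; (ii) no \emph{unintended} suffix-sharing occurs among the $P_{ij}$ that would impose spurious extra tree constraints coupling different players' weights; and (iii) the diagonal path $P_{jj}$ can actually be filled to its chosen self-weight without its own suffix being blocked --- i.e. that $v_j$'s ``self'' path is short and essentially free so that $w_{v_j}(P_{jj})$ is an unconstrained choice exactly as $w_j(j)$ is in the preference game. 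Designing the suffix structure so that distinct paths $P_{ij}$ and $P_{i'j'}$ share a controlling suffix if and only if $j = j'$ (and that common suffix is the one capped by $w_j(j)$) is the delicate part; once that gadget is correct, the equilibrium correspondence and the polynomial-time bound follow routinely.
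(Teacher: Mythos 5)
Your proposal is correct and follows essentially the same route as the paper: the paper instantiates your ``dedicated suffix owned by $v_j$'' as the direct edge $(j,d)$ (the diagonal path $P_{jj}$), with each $P_{ij}$ being the two-hop path $(i,j,d)$, so the tree condition collapses to exactly $w_i(j) \le w_j(j)$ and the lexicographic-maximality correspondence is then verified in both directions just as you outline. The delicate points you flag (no spurious suffix-sharing, the self path being unconstrained so that saturation to total weight $1$ is forced) are all resolved by this minimal instantiation, which is precisely the paper's construction.
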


\begin{proof}
Consider any instance \pInstance { }of the
\gamename, consisting of a set of players $S$ and a preference
relation $\prefer_i$ for each $i \in S$.  We will create an instance
\bInstance { }of fractional BGP. For each player $i \in S$, create a
node $\ib$ in \bInstance. Also create a universal destination node
$d$. For all $\ib \ne d$, define $P_{ii} = $ the path $(\ib, d)$. For
all $\ib, \jb \neq d$, define $P_{ij} = $ the path $(\ib, \jb,
d)$. For all $\ib$: define $\pi(\ib)$ (the set of $\ib$'s preferred
paths in \bInstance) as the set $\{P_{ij}: \jb \prefer_{\ib}
\ib\}$. If $k \prefer_i j$ in \pInstance, then $P_{ik} \prefer_{\ib}
P_{ij}$ in \bInstance.

Consider any feasible solution $w = \{w_i\}$ for \pInstance, and
define weights $w' = \{w'_{\ib}\}$ for \bInstance { } where $\forall
i, j$, $w'_{\ib}(P_{ij}) = w_i(j)$, and $w'_{\ib}(P) = 0$ for all
other paths $P$. Because $w$ is feasible, for any $i \in S$, $\sum_{j
\in S} w_i(j) = 1$. Therefore, for all $\ib$, $\sum_{\textrm{paths }P}
w'_{\ib}(P) = \sum_{P_{ij}: j \in S} w'_{\ib}(P_{ij}) = \sum_{j \in S}
w_i(j) = 1$, and the unity condition for \bInstance { }is satisfied.
Also, for all $i,j \in S$, $w_{i}(j) \le w_{j}(j)$. Therefore, for all
paths $\ib$ and all paths $P$ starting at $\jb$, $\sum_{P_{ij} \in
\pi(\ib, P)} w'_{\ib}(P_{ij}) = w'_{\ib}(P_{ij}) = w_i(j) \leq w_j(j)
= w'_{\jb}(P_{jj})$, and the tree condition for \bInstance { } is
satisfied, and $w'$ is feasible.

Consider an equilibrium $w' = \{w'_{\ib}\}$ of \bInstance, and define
weights $w = \{w_i\}$ for \pInstance { }where $\forall i, j$, $w_i(j)
= w'_{\ib}(P_{ij})$. Since this is an equilibrium of \bInstance, it
must be feasible and lexicographically maximal. Because it is
feasible, for each node $\ib$, $\sum_{P_{ij} \in \pi(\ib)}
w'_{\ib}(P_{ij}) \leq 1$. This implies that in our new solution for
\pInstance, $\sum_{j: j \prefer_i i} w_j \leq 1$. Because it is
lexicographically maximal, $P_{ii} \prefer_{\ib} P_{ij} \Rightarrow
w'_{\ib}(P_{ij}) = 0$, so $\sum_{j \in S} w_i(j) = \sum_{j: j \prefer_i
i} w_i(j) \leq 1$. Futhermore, since $P_{ii} \prefer_{\ib}$ the empty
path, $w'_{\ib}(P_{ii}) = 1 - \sum_{P_{ij} \in \pi(\ib), j \neq i}
w'_{\ib}(P_{ij})$, and $\sum_{j \in S} w_i(j) = 1$, as required in
the \gamename. $w'$ is feasible also implies for each node $\ib$, and each
path $P$ with start node $j$, we have $\sum_{P_{ij} \in \pi(\ib, P)}
w'_{\ib}(P_{ij}) \le w'_{\jb}(P)$. However, $\{P \in \pi(\jb)\} \cap
\{P: \pi(\ib, P) \neq \emptyset\} = \{P_{jj}\}$, by definition of the
preference sets. So $\sum_{P_{ij} \in \pi(\ib, P)} w'_{\ib}(P_{ij}) =
0$ unless $P=P_{jj}$, and $\{P \in \pi(\ib, P_{jj})\} = \{P_{ij}\}$, so if
$P=P_{jj}$ then $\sum_{P_{ij} \in \pi(\ib, P_{jj})} w'_{\ib}(P_{ij}) =
w'_{\ib}(P_{ij}) \le w'_{\jb}(P_{jj})$. Therefore, $w_{i}(j) \le
w_{j}(j)$, as required for feasibility in \pInstance.

Now, consider any other feasible assignment $\overline{w} = \{\overline{w}_i\}
\cup \{w_j : j \ne i\}$ for \pInstance. Define $\overline{w}'_{\ib}(P_{ij}
= \overline{w}_i(j)$. Then $\overline{w}' = \{\overline{w}'_{\ib}\} \cup \{w'_{\jb} :
j \ne i\}$ is feasible for \bInstance, as shown above, and
lexicographic maximality of $w'$ says that for every path $P_{ij}$ in
$\pi(\ib)$, $\sum_{P_{ik} \prefer_{\ib} P_{ij}} w'_{\ib}(P_{ik}) \ge
\sum_{P_{ik} \prefer_{\ib} P_{ij}} \overline{w'_{\ib}}(P_{ik})$. Therefore,
for every $j \in S$, $\sum_{k \prefer_i j} w_i(k) \ge \sum_{k
\prefer_i j} \overline{w}_i(k)$, so $w$ is also lexicographically maximal,
and $w$ is an equilibrium for \pInstance.

Finally, consider an equilibrium $w = \{w_i\}$ for \pInstance\ and the
weights $w' = \{w'_{\ib}\}$ for \bInstance { } as defined above. From
above, $w'$ is feasible. Consider any other feasible assignment
$\overline{w}' = \{\overline{w}'_{\ib}\} \cup \{w_{\jb} : j \ne i\}$ for
\bInstance. Since $w$ is an equilibrium, it is lexicographically
maximal, so for $\overline{w} = \{\overline{w}_i\} \cup \{w_j : j \ne i\}$
(where $\overline{w}_i(j) = \overline{w}'_{\ib}(P_{ij})$), $\forall j \in S$,
$\sum_{k \prefer_i j} w_i(k) \ge \sum_{k \prefer_i j}
\overline{w}_i(k)$. Therefore, $\forall i,j \in S$, $\sum_{P_{ik}
\prefer_{\ib} P_{ij}} w'_{\ib}(P_{ik}) \ge \sum_{P_{ik} \prefer_{\ib}
P_{ij}} \overline{w}'_{\ib}(P_{ik})$, and $w'$ is also lexicographically
maximal and an equilibrium for \bInstance.
\end{proof}

\begin{lemma} \label{BBCinstanceofSimple}
There is a polynomial-time reduction from the \gamename\ to the fractional BBC game. 
\end{lemma}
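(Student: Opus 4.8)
The plan is to mirror the reduction to fractional BGP (Lemma~\ref{BGPinstanceofSimple}), but the crux is conceptual rather than notational: in a BBC game a node's strategy is the \emph{capacity} it provisions on its outgoing edges, and this is a priori distinct from the \emph{flow} it actually realizes, whereas in a \gamename\ a single distribution $w_i$ simultaneously plays the role of a routing (the values $w_i(j)$ for $j \neq i$) and of a capacity offered to others (the self-weight $w_i(i)$). The whole reduction hinges on forcing, at equilibrium, provisioned capacity to coincide exactly with realized flow, and I will arrange this by making each node's budget exactly tight.

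Given an instance \pInstance\ with player set $S$ and preferences $\prefer_i$, I build a BBC instance \bInstance\ with one node $i$ per player together with the destination $d$. Let $r_i(j)$ be the rank of $j$ in $i$'s preference order (smaller is more preferred). For each $j$ with $j \prefer_i i$ I add a ``route-through-$j$'' edge $(i,j)$, and I add a ``self'' edge $(i,d)$; every edge has cost $1$ and each node has budget $b(i)=1$, so the budget constraint reads $\sum_v w_i(v) \le 1$. The lengths, from $i$'s own perspective, encode $i$'s preferences: $\ell_i(i,j)=r_i(j)$ on a first hop out of $i$, $\ell_i(i,d)=r_i(i)$ on $i$'s own self edge, $\ell_i(j,d)=0$ on any last hop into $d$ with $j\neq i$, and a large value (at least the disconnection penalty $M$) on every remaining edge, so that no path of three or more hops is ever attractive. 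Thus from $i$'s viewpoint the route $i\to j\to d$ has total length exactly $r_i(j)$ and the self route $i\to d$ has length $r_i(i)$, while any longer detour, as well as the built-in length-$M$ disconnection edge, is far more expensive.

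The correspondence then follows from two observations. First, since $i$ must push one unit to $d$, every edge costs $1$, and $b(i)=1$, we get $1=\sum_v \mathrm{flow}_i(i,v)\le \sum_v w_i(v)\le 1$; hence at any best response the budget is tight and the provisioned capacity equals the realized flow on each outgoing edge of $i$, so in particular the self-edge capacity $w_i(d)$ equals $i$'s own ``stay-home'' flow. This is exactly the coupling we need: a third player $i'$ routing $i'\to i\to d$ is capped on edge $(i,d)$ by $w_i(d)$, which now equals $i$'s self-weight, reproducing the \gamename\ constraint $w_{i'}(i)\le w_i(i)$. Second, minimizing $i$'s flow cost $\sum_j w_i(j)\,r_i(j)$ subject to $\sum_j w_i(j)=1$ and the per-route caps $w_i(j)\le w_j(d)$ (imposed by edge $(j,d)$, where $w_j(d)$ is $j$'s self-weight) is solved by greedily filling the most preferred routes first up to their caps and letting the uncapped self route absorb the remainder, which is precisely lexicographic maximality in \pInstance. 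Reading each node's provisioned capacities $w_i(\cdot)$ directly as its \gamename\ distribution therefore turns any BBC equilibrium into a \gamename\ equilibrium, and the reverse map (provision capacity equal to the distribution) is verified the same way.

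I expect the main obstacle to be exactly this decoupling of ``capacity'' from ``flow'': without the unit-cost, unit-budget gadget a node could harmlessly over-provision its self edge, letting others route more through it than its own self-weight and destroying the correspondence. The budget-tightness argument removes that slack and simultaneously rules out spurious equilibria, since over-provisioning any route wastes budget, forces unrouted flow onto the length-$M$ disconnection edge, and is therefore strictly suboptimal. The remaining points---that the min-cost flow never uses the long edges, that the disconnection edge stays unused at equilibrium, and that the construction has size $O(|S|^2)$ with polynomially bounded integer lengths, costs, and budgets---are routine.
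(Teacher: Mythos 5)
Your proposal is correct and follows essentially the same reduction as the paper: unit edge costs and unit budgets, per-node rank-based length functions encoding the preferences, large lengths to exclude paths of three or more hops, and the identification of each player's distribution with the capacities on edges $(i,j)$ and the self edge $(i,d)$, with best responses matching lexicographically maximal assignments via a greedy/exchange argument. Your explicit budget-tightness argument (capacity equals realized flow at any best response) is the same mechanism the paper invokes more tersely when it asserts each node places total weight $1$ on its outgoing edges, so the two proofs differ only in presentation, not substance.
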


\begin{proof}
We use a similar reduction from a \gamename\ to fractional BBC. Given any instance \pInstance\ of the \gamename\, We will create an instance \bInstance { }of fractional BBC = $\langle V,
d, c, b \rangle$, where $V = S$, $d = $ an additional node, $\forall
i,j \in V$: $c(i,j) = 1$, $\forall i$: $b(i) = 1$, plus length
function $l_i$ for each $i \in V$, defined as follows. Let $p_i(k)$ =
the number of $j$ such that $j \prefer_i k$. $\forall j \ne i,
l_i(j,d) = 1$, $l_i(i,j) = p_i(j)$. $\forall j \ne i,k \ne i, l_i(j,k)
= l_i(k,j) = |S| + 1$. $l_i(i,d) = 1 + p_i(i)$. Given a solution to
\bInstance, define a solution to \pInstance: set $w_i(j)$ = the
weight placed on edge $(i,j)$ (for $j \neq i$), and $w_i(i)$ = the
weight placed on edge $(i,d)$.  

Consider any instance \pInstance { }of the \gamename, consisting of a set
of players $S$ and a preference relation $\prefer_i$ for each $i \in
S$. We will create an instance \bInstance { }of fractional BBC =
$\langle V, d, c, b \rangle$, where $V = S$, $d = $ an additional
node, $\forall i,j \in V$: $c(i,j) = 1$, $\forall i$: $b(i) = 1$, plus
length function $l_i$ for each $i \in V$, defined as follows. Let
$p_i(k)$ = the number of $j$ such that $j \prefer_i k$. $\forall j \ne
i, l_i(j,d) = 1$, $l_i(i,j) = p_i(j)$. $\forall j \ne i,k \ne i,
l_i(j,k) = l_i(k,j) = |S| + 1$. $l_i(i,d) = 1 + p_i(i)$. Given a
solution to \bInstance, define a solution to \pInstance\ by setting
$w_i(j)$ = the weight placed on edge $(i,j)$ (for $j \neq i$), and
$w_i(i)$ = the weight placed on edge $(i,d)$.

Since the total cost for all edges is 1, and the total budget for a
node is 1, each node in \bInstance\ will place total weight 1 on edges
adjacent to it. This exactly corresponds to the requirement that
$\sum_j w_i(j) = 1$ in \pInstance. The possible paths for a one-unit
flow from $i$ to $d$ in \bInstance\ are: (1) the path consisting of
only edge $(i, d)$, which has cost $p_i(i) + 1 \leq |S| + 1$, (2) a
path of the form $(i,j,d)$ through some other node $j$, which has cost
$p_i(j) + 1 \leq |S| + 1$, or (3) a path including some edge $(j,k)$
for $j \neq i, k \neq i$, which has cost $ > |S| + 1$. Therefore, a
minimum cost flow will only use paths of the form $(i,d)$ and
$(i,j,d)$, so the requirement in \pInstance\ that $w_i(j) \leq w_j(j)$
corresponds to using the weight $j$ places on edge $(j,d)$ as a
capacity on that edge when finding the min-cost flow. Now, we only
need to show that a node's best response in \bInstance\ exactly
corresponds to a lexicographically maximal weight assignment in
\pInstance.

Suppose we have a best response for node $i$ in \bInstance\ that
corresponds to a weight assignment $w$ in \pInstance\ that is not
lexicographically maximal for $i$. Then, there is some assignment $w'
= w'_i \cup \{w_j: j \ne i\}$ such that for some $j \in S$, $\sum_{k
\prefer_i j} w_i(k) < \sum_{k \prefer_i j} w'_i(k)$. There must be some
$k^+ \in S$ such that $k^+ \prefer_i j$ and $w'_i(k^+) > w_i(k^+)$,
and there must be some $k^- \in S$ such that $\lnot (k^- \prefer_i j)$
and $w'_i(k^-) < w_i(k^-)$. Suppose we move $\epsilon$ weight in the
best response in \bInstance\ from $P_{ik^-}$ to $P_{ik^+}$. $p_i(k^-) >
p_i(k^+)$, so moving this weight will decrease the cost of a minimum
cost flow, contradicting the fact that this was a best response.

Suppose we have a lexicographically maximal weight assignment $w$ for
\pInstance\ that does not correspond to a best response for node $i$ in
\bInstance. Then, in \bInstance, $i$ could move weight from some path
$P_{ij}$ to a different path $P_{ik}$ to decrease the cost of its
min-cost flow. This means that $p_i(k) < p_i(j)$, or the number of
nodes preferred by $i$ over $k$ is smaller than the number of nodes
preferred by $i$ over $j$. Since preference relations are transitive,
this implies that $k \prefer_i j$. However, since $P_{ik}$ had space
left, $w_i(k) < w_k(k)$, so $w$ is not lexicographically maximal.
\end{proof}

\subsection{Non-Convexity}
\label{sec:nonconvex}
\begin{theorem}
There exists an instance of the \gamename { }for which the set of equilibria is not convex.
\end{theorem}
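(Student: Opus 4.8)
The plan is to exhibit an explicit three-player instance together with two equilibria whose midpoint is not an equilibrium. The key first step is to observe that an equilibrium of a \gamename\ is completely determined by the vector of self-weights $c=(c_i)_{i\in S}$, where $c_i=w_i(i)$. For each player $i$, let $S_i$ denote the set of strategies that $i$ strictly prefers to itself (I take all preferences to be strict total orders, so ties play no role). Given the capacities $c_j$ offered by the other players, the lexicographically maximal response of $i$ fills its strategies in preference order, placing $\min(\text{remaining},c_j)$ on each $j\in S_i$ and dumping all remaining weight on $i$ itself (whose self-capacity never binds, since the constraint $w_i(i)\le w_i(i)$ is vacuous). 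Since a greedy fill up to total $1$ places weight $\min\big(1,\sum_{j\in S_i}c_j\big)$ on the strategies preferred to $i$, lexicographic maximality is equivalent to the scalar fixed-point system
\[
  c_i \;=\; \max\Big(0,\; 1-\sum_{j\in S_i}c_j\Big)\qquad\text{for every }i\in S.
\]
I would thus reduce the whole question to finding preferences for which this piecewise-linear system has a non-convex solution set.

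The instance I would use has three players with $S_1=\{2,3\}$, $S_2=\{1\}$, and $S_3=\{1\}$; concretely, player $1$ strictly prefers both $2$ and $3$ to itself, while players $2$ and $3$ each strictly prefer only $1$ to themselves (the relative order of the remaining strategies is irrelevant, as it never affects any $c_i$). The fixed-point system then reads
\[
  c_1=\max(0,\,1-c_2-c_3),\quad c_2=\max(0,\,1-c_1),\quad c_3=\max(0,\,1-c_1).
\]
I would check directly that both $c=(1,0,0)$ and $c=(0,1,1)$ satisfy this system, and write out the corresponding profiles $w^A$ and $w^B$ to confirm they are genuine (feasible, lexicographically maximal) equilibria. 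By contrast, their midpoint $c=(\tfrac12,\tfrac12,\tfrac12)$ is not a solution, since $\max(0,\,1-\tfrac12-\tfrac12)=0\neq\tfrac12$.

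Finally, I would argue that non-convexity at the level of self-weights lifts to non-convexity of the equilibrium set itself. Convex combinations preserve feasibility, because the unity and tree conditions are linear; hence the midpoint of $w^A$ and $w^B$ is a feasible profile whose self-weights are exactly $(\tfrac12,\tfrac12,\tfrac12)$. The only thing that can fail is lexicographic maximality, and it does: against the capacities $w_2(2)=w_3(3)=\tfrac12$, player $1$'s best response is $(w_1(1),w_1(2),w_1(3))=(0,\tfrac12,\tfrac12)$, which is lexicographically strictly greater (under $2\succ 3\succ 1$) than the midpoint distribution $(\tfrac12,\tfrac12,0)$. Thus the midpoint is feasible but not an equilibrium, so the equilibrium set is not convex. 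The step I would treat most carefully is precisely this last one: verifying that the failure is a failure of \emph{equilibrium} (best response), not of feasibility, i.e. that the midpoint is a legitimate feasible profile that simply is not lexicographically maximal for player $1$. Everything else is a routine verification of the scalar system.
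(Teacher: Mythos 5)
Your proof is correct, and while it rests on the same core move as the paper's---exhibit two equilibria whose convex combination fails to be an equilibrium---it gets there by a genuinely different route. The paper uses an ad hoc seven-player instance: three mutually-preferring pairs $a_1,a_2$, $b_1,b_2$, $c_1,c_2$ and an extra player $x$ with preference list $(a_1,b_1,c_1,x)$; in one equilibrium the $b$-pair concentrates its self-weight on $b_1$ and the $c$-pair on $c_2$, in the other these roles flip, and in any nontrivial mixture player $x$ under-fills its second preference, so it is not lexicographically maximal. You instead (i) prove a structural lemma absent from the paper---under strict preferences, equilibria correspond exactly, via greedy filling, to solutions of the piecewise-linear fixed-point system $c_i=\max\bigl(0,\,1-\sum_{j\in S_i}c_j\bigr)$ on self-weights---and (ii) use it to produce a minimal three-player counterexample. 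The lemma buys a mechanical way to generate and certify equilibria, and it gives an instance-independent reason the midpoint fails: self-weights of a convex combination are the convex combination of self-weights, and satisfying the fixed-point system is a \emph{necessary} condition for equilibrium, so non-convexity of the scalar solution set lifts to profile space; your direct best-response check for player $1$ (best response $(0,\tfrac12,\tfrac12)$ versus midpoint $(\tfrac12,\tfrac12,0)$) makes the same point concretely. Your caution on the lifting step is exactly right and correctly discharged: convexity of the unity and capacity constraints makes the midpoint feasible, so the failure is genuinely one of lexicographic maximality, not feasibility. The only caveat worth stating explicitly in a write-up is that the bijection between equilibria and fixed points requires strict preferences (with ties, best responses are non-unique); since the paper's definition permits ties but you are free to choose the instance, this costs nothing, and in any case the direction of the equivalence you actually use for the midpoint is the sound necessary-condition direction.
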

\begin{proof}
\junk{
\begin{figure}[htb]
\begin{center}
\includegraphics[width=3.5in]{nonconvex.eps}
\caption{Example of a fractional BGP game for which the equilibrium
set is not convex.  The preference lists are $a_1$:
  $(a_1a_2t, a_1t)$; $a_2$: $(a_2a_1t, a_2t)$;
  $b_1$: $(b_1b_2t, b_1t)$; $b_2$: $(b_2b_1t, b_2t)$; $c_1$: $(c_1c_2t, c_1t)$; $c_2$:
  $(c_2c_1t, c_2t)$; $x$: $(xa_1t, xb_1t,xc_1t, xt)$.
\label{fig:nonconvex}}
\end{center}
\end{figure}
}
\begin{figure}[htb]
\begin{subfigure} [The $a$ players assign weights $1/2,1/2$, the $b$ players both use $b_1$, the $c$ players both use $c_2$. \label{fig:convex_eq1}] {
\includegraphics[width=2in]{convex_eq1.jpg}
}
\end{subfigure}
\begin{subfigure} [The $a$ players assign weights $1/2-1/2$, the $b$ players both use $b_2$, the $c$ players both use $c_1$. \label{fig:convex_eq2}] {
\includegraphics[width=2in]{convex_eq2.jpg}
}
\end{subfigure}
\begin{subfigure} [Combining half of each equilibrium, $x$ will assign $1/2$ to $a_1$, $1/4$ to each of $b_1$ and $c_1$. $x$ could improve by assigning weight only to $a_1$ and $b_1$. \label{fig:convex_comb}] {
\includegraphics[width=2in]{convex_comb.jpg}
}
\end{subfigure}
\caption{Example of an instance of the \gamename { }for which the equilibrium
set is not convex.
\label{fig:nonconvex}}
\end{figure}
Consider the following instance of the \gamename.  We have 3 sets of 2
players each, $a_1, a_2$, $b_1, b_2$, $c_1, c_2$, and one additional
player, $x$.  The preference lists for these nodes are: $a_1$: $(a_2,
a_1)$; $a_2$: $(a_1, a_2)$; $b_1$: $(b_2, b_1)$; $b_2$: $(b_1, b_2)$;
$c_1$: $(c_2, c_1)$; $c_2$: $(c_1, c_2)$; $x$: $(a_1, b_1,c_1, x)$.
(Each list gives strategies in order from most preferred to least
preferred.)  We now show two equilibria whose linear combination is
not an equilibrium.  In equilibrium $w$ (figure
\ref{fig:convex_eq1}): $w_{a_1}(a_1) = \frac{1}{2}$, $w_{a_1}(a_2)
=\frac{1}{2}$, $w_{a_2}(a_2)=\frac{1}{2}$, $w_{a_2}(a_1) =
\frac{1}{2}$, $w_{b_1}(b_1) = 1$, $w_{b_2}(b_1) = 1$, $w_{c_1}(c_2) =
1$, $w_{c_2}(c_2) = 1$, $w_x(a_1) = \frac{1}{2}$, $w_x(b_1) =
\frac{1}{2}$.  In equilibrium $w'$ (figure \ref{fig:convex_eq2}):
$w'_{a_1}(a_1) = \frac{1}{2}$, $w'_{a_1}(a_2) =\frac{1}{2}$,
$w'_{a_2}(a_2)=\frac{1}{2}$, $w'_{a_2}(a_1) = \frac{1}{2}$,
$w'_{b_1}(b_2) = 1$, $w'_{b_2}(b_2) = 1$, $w'_{c_1}(c_1) = 1$,
$w'_{c_2}(c_1) = 1$, $w'_x(a_1) = \frac{1}{2}$, $w'_x(c_1) =
\frac{1}{2}$. 
It is easy to verify that $w$ and $w'$ are both equilibria, and in a solution
$\lambda \cdot w + (1 - \lambda) \cdot w'$ (for any $\lambda >
\frac{1}{4}$) (figure \ref{fig:convex_comb} shows $\lambda=\frac{1}{2}$), player $x$ would do better by moving more weight to its second preference. Therefore, the convex combination of $w$ and $w'$ is not an equilibrium.
\junk{
It is easy to verify that $w$ and $w'$ are both equilibria. Consider
$\lambda \cdot w + (1 - \lambda) \cdot w'$ (for any $\lambda >
\frac{1}{4}$) (figure \ref{fig:convex_comb}), and consider the
best response of player $x$. Now, $x$ can still only place weight $\frac{1}{2}$ on its first
preference ($a_1$). However, its second preference, $b_1$,
is now available with weight $\lambda$, but it will assign it only
weight $\frac{1}{4}$. So $x$ would do better to move more weight to
its second preference, thus showing that the convex combination is
not an equilibrium.
}
\end{proof}

\junk{
\begin{theorem}
For any instance of the fractional BGP game and for any instance of
the BBC game, an equilibrium exists in which all players play only
rational weights.
\end{theorem}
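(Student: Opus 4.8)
The plan is to show that every equilibrium can be \emph{rationalized} without altering its combinatorial structure, so that rationality of some equilibrium follows from the mere existence of one (which holds for fractional BGP by \cite{HaxellWilfong}, and for both games by the standard fixed-point argument alluded to in Section~\ref{sec:definitions}). The unifying observation I would exploit is that, once the \emph{combinatorial type} of an equilibrium is fixed --- namely, for each node its support $\{P : w_v(P) > 0\}$ together with the set of feasibility constraints that are tight --- the equilibrium conditions reduce to a system of linear (in)equalities whose coefficients lie in $\{0,\pm 1\}$ and whose constants lie in $\{0,1\}$, i.e.\ a rational polyhedron $P_{\mathcal{C}}$. Since rational points are dense in the relative interior of any nonempty rational polyhedron (its affine hull is spanned by rational vectors, and the relative interior is open in that affine hull), and since the relative interior preserves exactly the combinatorial type (support weights stay strictly positive, slack constraints stay strictly slack), I can replace any equilibrium by a rational point of $\mathrm{relint}(P_{\mathcal{C}})$, which is again an equilibrium of the same type.

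First I would carry this out for fractional BGP, where Theorem~\ref{thm:equivalence} supplies a clean handle: an equilibrium is exactly a feasible $w$ that is lexicographically maximal (equivalently, stable) at every node. Starting from any equilibrium $w$, I record its type $\mathcal{C}$ as above, where the tight constraints are the unity conditions and the tree conditions $\sum_{P \in \pi(v,S)} w_v(P) = w_u(S)$ satisfied with equality. I would then verify that, given $\mathcal{C}$, the stability conditions (S1)/(S2) for every path $Q$ are equivalent to the assertion that the recorded constraints are tight and that each support path witnessing tightness is at least as preferred as $Q$ --- all linear equalities and inequalities in the weights. Hence $P_{\mathcal{C}}$ is a rational polyhedron containing $w$, every point of its relative interior is a stable solution, and any rational such point finishes the BGP case.

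Next, for fractional BBC, the extra ingredient is that a node's payoff is the value of an inner minimum-cost-flow linear program, so that ``best response'' is itself an optimization. I would augment the type to record, for each node $u$, the support and the tight capacity/budget constraints of an optimal $1$-unit min-cost flow realizing $f(u)$. Fixing this data, optimality of each flow (primal feasibility plus complementary slackness against an optimal dual) and optimality of each player's capacity choice become linear constraints with rational data (the $\{0,\pm 1\}$ incidences together with the integer costs, lengths, and budgets). The feasible region of assignments and flows of this fixed type is again a rational polyhedron containing the equilibrium, so the relative-interior argument applies verbatim.

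I expect the main obstacle to be precisely the BBC case: verifying that ``$u$'s capacity vector is a best response'' can be captured by finitely many rational linear constraints once the combinatorial type is fixed. This is a bilevel condition --- minimize the value of an inner LP over the budget polytope --- and the delicate point is to show that, for a fixed flow support and fixed set of tight constraints, best-response and flow-optimality collapse to a single rational linear system whose relative interior still consists of genuine equilibria, so that perturbing within it neither creates a cheaper flow nor a profitable deviation. For fractional BGP, by contrast, Theorem~\ref{thm:equivalence} already furnishes the explicit linear description, so that half is essentially immediate once the relative-interior lemma is in hand; this direct route also has the advantage of not relying on the more general personalized-equilibrium machinery developed later.
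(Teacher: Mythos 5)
Your proposal is correct, but it takes a genuinely different route from the paper's. The paper never argues directly on the two games: it defers this statement to Section~\ref{sec:personalized}, where it (i) reduces both fractional BGP and fractional BBC to finding personalized equilibria of rational-payoff matrix games (Theorems~\ref{thm:BGPPersonalized} and~\ref{thm:BBCPersonalized}), (ii) proves existence of personalized equilibria by a fixed-point argument (Theorem~\ref{thm:exist}), and (iii) proves rationality (Theorem~\ref{thm:rational}) by showing that the set of personalized equilibria is precisely a union of exponentially many rational polyhedra, obtained from the feasibility system LP~(\ref{eqn:lp1}) by adjoining, for every player $\ell$ and every hyperedge set $F$ supporting an improving direction, a constraint forcing $w_\ell(e)=0$ for some $e \in F$; existence then forces one of these rational polyhedra to be nonempty, hence to contain a rational point. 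Your argument shares exactly this mathematical core (equilibrium set $=$ finite union of rational polyhedra, plus an existence theorem, plus the fact that a nonempty rational polyhedron contains rational points), but you index the polyhedra by the \emph{combinatorial type} of a given equilibrium (supports and tight constraints) and work directly inside each game, using Theorem~\ref{thm:equivalence} to make stability a purely type-determined condition for BGP. What the paper's detour buys is its headline generalization --- rational personalized equilibria for \emph{all} matrix games --- with both games falling out of one argument; what your route buys is self-containedness (no detour through a matrix game whose strategy sets are path sets and whose hyperedges range over their product) and a strictly stronger structural conclusion: rational points are dense in the relative interior of every nonempty type cell, so \emph{every} equilibrium can be perturbed to a rational equilibrium with the identical combinatorial structure, which the paper's proof does not give.

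On the step you flag as the main obstacle (BBC best responses being a bilevel condition): the difficulty is real but dissolves once you make one observation explicit. Player $u$ \emph{minimizes} $f(u)$, and $f(u)$ is itself the value of a \emph{minimum}-cost-flow LP, so the two minimizations collapse into one: $w_u$ is a best response iff there exists a flow $f_u$ such that the pair $(w_u, f_u)$ is optimal for the single joint LP minimizing $\sum_e \ell_u(e) f_u(e)$ subject to the budget constraint, flow conservation, and the capacity constraints $f_u(e) \le w_x(e)$ --- the latter being linear in the overall variable vector, since the $w_{-u}$ are themselves variables of the equilibrium system. LP optimality is then primal feasibility, dual feasibility, and complementary slackness; carry the duals of each player's joint LP as auxiliary variables and let the type additionally fix the complementary-slackness pattern (which dual variables vanish, which reduced costs vanish). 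All of these constraints are linear with rational coefficients, because the external strategies $w_{-u}$ enter only right-hand sides and therefore never appear in the dual feasibility system. In this lifted formulation every point of the resulting polyhedron projects to an equilibrium, so for BBC you do not even need the relative-interior step: any rational point of the nonempty lifted polyhedron, projected onto the strategy coordinates, is a rational equilibrium.
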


We defer the proof of the above theorem to
Section~\ref{sec:personalized}, where we establish the claim for a
larger class of games.
}

\subsection{PPAD Hardness}
\label{sec:ppad-hardness}
We show that finding an equilibrium in \gamenamePlural { }is \textbf{PPAD}-hard.
By our reductions of Lemmas~\ref{BGPinstanceofSimple} and ~\ref{BBCinstanceofSimple}, this immediately implies that finding
equilibria in fractional BGP and fractional BBC games is also
\textbf{PPAD}-hard.  We will follow the framework
of~\cite{DaskalakisGoldbergPapadimitriou}, which shows that finding a
Nash equilibrium in a degree-3 graphical game is \textbf{PPAD}-hard, using a
reduction from the \textbf{PPAD}-complete problem 3-D BROUWER.  In this
problem, we are given a 3-D cube in which each dimension is broken
down into $2^{-n}$ segments -- thereby dividing the cube into $2^{3n}$
cubelets.  We are also given a circuit that takes as input the 3
coordinates of the center of a cubelet (each as an $n$-bit number) and
returns a 2-bit number that represents one of four 3-D vectors: either
$(1,0,0)$, $(0,1,0)$, $(0,0,1)$, or $(-1,-1,-1)$. A solution to the
3-D BROUWER instance is a cubelet vertex such that the set of 8
results obtained by running the circuit on each of the 8 cubelets
surrounding the vertex contains each of the four vectors at least
once.

As in~\cite{DaskalakisGoldbergPapadimitriou}, we will construct a set
of gadgets to simulate various arithmetic operators, logical
operators, arithmetic comparisons and other operators.  We then follow
their framework to systematically combine these gadgets to simulate
the input boolean circuit and to encode the geometric condition of
discrete fixed points in the 3-D BROUWER instance.  In the \gamename {
}we construct, we specify the preference relation of any player $P$ by
an ordered list of a subset of the players, with the last element
being $P$, also referred to as the ``self'' strategy.  When we say
that a player $P$ {\em plays itself}\/ with weight $v$, we mean that
$P$ assigns a weight of $v$ to strategy $P$.  We'll engineer the
payoffs such that the game is only in equilibrium if the weights
assigned by certain players to {\em themselves}\/ successfully echo
the inputs and outputs of 8 copies of the circuit that surround a
solution vertex of the 3-D BROUWER instance.  

For this reduction, we require the following sets of players.

\begin{enumerate}
\item One player for each of the 3 coordinates (the \emph{coordinate
players}). If the graph is an equilibrium, each coordinate player
plays itself with weight equal to its coordinate of the 3-D BROUWER
solution vertex.

\item 
One player for each of the bits of each of the 3 coordinates (the
\emph{bit players}). In order to force these players to correctly
represent the bits, we need some additional players. Assuming we've
correctly calculated the first $i-1$ bits of coordinate $x$ (call them
$x_0, \ldots, x_{i-1}$), we can create the $i^{th}$ bit as
follows. One player will play itself with weight $p_i = x -
\sum_{j=0}^{i-1} \frac{x_j}{2^{j}}$. The bit player will play itself
with weight equal to the $i^{th}$ bit. If $p_i 
\geq \frac{1}{2^i}$, then this bit should be 1. Otherwise, it should
be 0. Therefore, in order to properly extract the bits, we create the
following four types of players.
\newcounter{saveenum}
\begin{enumerate}
\item HALF player: 
In any equilibrium in which a given player plays itself with weight
$a$, the HALF player will play itself with weight $\frac{a}{2}$.
\item DIFF player: 
In any equilibrium in which two given players play themselves
with weights $a$ and $b$, the DIFF player will play itself
with weight $a - b$.
\item VALUE player: 
In any equilibrium, the VALUE player plays itself with weight
$\frac{1}{2}$. This can be easily created by combining a player whose
first preference is itself with a HALF player.
\item LESS player: 
In any equilibrium in which two given players play themselves with
weights $a$ and $b$, respectively, the LESS player plays itself with
weight 1 iff $a \geq b$, and plays itself with weight 0 otherwise.
(Actually, the LESS player we create will be inaccurate if $a$ and $b$ are very close,
which we discuss more below.)
\setcounter{saveenum}{\value{enumii}}
\end{enumerate}
\item 
One player simulates each type of gate used in the circuit of the 3-D
BROUWER instance. For this, we create 3 more types of players.
\begin{enumerate}
\setcounter{enumii}{\value{saveenum}} 
\item AND player: 
In any equilibrium in which two given players play themselves with
weights $a$ and $b$, the AND player will play itself with weight $a
\land b$.
\item OR player: 
In any equilibrium in which two given players play themselves with
weights $a$ and $b$, the OR player will play itself with weight $a
\lor b$.
\item NOT player: 
In any equilibrium in which a given player plays itself with weight
$a$, the NOT player will play itself with weight $\lnot a$.
\setcounter{saveenum}{\value{enumii}}
\end{enumerate}

\item 
Finally, we need to ensure that the graph is in equilibrium if and
only if all four vectors are represented in the results of the 8
circuits.  As in~\cite{DaskalakisGoldbergPapadimitriou}, we will
represent the output of each circuit using 6-bits, one each for $+x,
-x, +y, -y, +z, -z$. Now, the 4 possible result vectors are
represented as $100000$, $001000$, $000010$, and $010101$. We can use
these circuit results with only two additional types of players to
feed back into the original coordinate players.  First, we will create
an OR player for each of the 6 bits (over the 8 vertices), which
yields a result of six 1's if and only if this is a solution vertex.
Therefore, an AND player for each coordinate will all return 1 if and
only if this is a solution vertex; at least one of the coordinates
will be 0 otherwise. We can turn this around using a NOT player for
each coordinate, so that we get all 0's if and only if this is a
solution vertex. Finally, we need the last two new player types, which
we'll use to add these results back to a copy of the original
coordinates (the result will be the original coordinate player).
\begin{enumerate}
\setcounter{enumii}{\value{saveenum}} 
\item 
COPY player: In any equilibrium in which a given player plays itself
with weight $a$, the COPY player will also play itself with weight
$a$.
\item 
SUM player: In any equilibrium in which two given players play
themselves with weights $a$ and $b$, the SUM player will play itself
with weight $\min(a + b, 1)$.
\end{enumerate}
If the coordinates represented a solution vertex to the 3-D BROUWER
instance, then all the values we've added back in will be zero; so the
coordinate players cannot do better by changing their strategies.  On
the other hand, if the coordinates do not form a solution vertex, then
at least one of the values is 1, so that the coordinate player will
have incentive to change strategies and play more weight on itself.
\end{enumerate}

We now describe how to create the
new types of players (gadgets) required for the reduction. For each of
these gadget definitions, we assume we are given a \gamename{ }such
that in any equilibrium, node $X$ plays itself with weight $v_1$ and
node $Y$ plays itself with weight $v_2$.  For the first three gadgets,
we assume $v_1, v_2 \in \{0,1\}$. For the rest of the gadgets, we
assume $v_1, v_2 \in [0,1]$.
\smallskip

\subsubsection*{$\mbox{OR}(X,Y)$}
We can add a new node $R = \mbox{OR}(X,Y)$ that will
play itself with weight $v_1 \lor v_2$ in any equilibrium.  Create a node $R_1$ with
preference list $(X, Y, R_1)$. Let node $R$'s preference list be
$(R_1, R)$.  Now, if $v_1$ and/or $v_2$ is 1, then $R_1$ will play
$R_1$ with weight 0, so $R$ will play itself with weight 1. If both
$v_1$ and $v_2$ is 0, then $R_1$ will play itself with weight 1, so
$R$ will play $R_1$ with weight 1 and $R$ with weight 0.

\subsubsection*{$\mbox{NOT}(X)$}
We can add a new node $N = \mbox{NOT}(X)$ that will play itself with weight $\lnot v_1$ in any equilibrium.  Let node $N$'s preference
list be $(X, N)$.  Clearly, $N$ will play $X$ as much as $v_1$ and
will play $N$ with the remainder.
\subsubsection*{$\mbox{AND}(X,Y)$}
We can add a new node $A
= \mbox{AND}(X,Y)$ that will play
itself with weight $v_1 \land v_2$ in any equilibrium.  Assemble the OR and NOT gadgets
$\mbox{NOT}(\mbox{OR}(\mbox{NOT}(X),\mbox{NOT}(Y)))$.
\subsubsection*{$\mbox{SUM}(X,Y)$}
We can add a new node
$S=\mbox{SUM}(X,Y)$ that will play
itself with weight $max(1, v_1 + v_2)$ in any equilibrium.  Create a node $S_1$ with
preference list $(X, Y, S_1)$. Let node $S$'s preference list be
$(S_1, S)$.  Now, clearly node $S_1$ will play $S_1$ with weight
$max(0, 1-v_1-v_2)$, and node $S$ will play $S_1$ that same
amount. So node $S$ will play itself with weight $1 - max(0,
1-v_1-v_2)$. In other words, if $v_1 + v_2
\geq 1$, then $S$ will play itself with weight 1. Otherwise,
$S$ will play itself with weight $1 - 1 + v_1 + v_2 = v_1 +
v_2$, as desired.
\subsubsection*{$\mbox{DIFF}(X,Y)$}
We can add a new node $D
= \mbox{DIFF}(X,Y)$ that will play
itself with weight $v_1 - v_2$ if $v_1 > v_2$, or 0 otherwise in any equilibrium.  Create
a node $D_1$ with preference list $(X, D_1)$. $D_1$ will play itself
with weight $1-v_1$. Now set the preference list for $D$ to $(D_1,
Y, D)$. $D$ will play itself with weight $\min(0, 1 - (1-v_1) - v_2)
= \min(0, v_1 - v_2)$, as desired.
\subsubsection*{$\mbox{COPY}(X)$}
We can add a new node $C = \mbox{COPY}(X)$ that will play itself with weight $v_1$ in any
equilibrium.  Create a node
$C_1$ with preference list $(X, C_1)$. $C_1$ will play itself with
weight $1-v_1$. Set the preference list for node $C$ to $(C_1,
C)$. $C$ will play $C_1$ with weight $1 - v_1$, leaving weight $v_1$
on $C$.
\subsubsection*{$\mbox{DOUBLE}(X)$}
We can add
a new node $M = \mbox{DOUBLE}(X)$ that
will play itself with weight $\min(1, v_1 * 2)$ in any equilibrium.  Create
player $M_1 = \mbox{COPY}(X)$ and set $M$ as $\mbox{SUM}(X,M_1)$.
\subsubsection*{$\mbox{LESS}(X,Y)$}
Given $\epsilon_l$ $(0 < \epsilon_l
\leq \frac{1}{2}$), We can add a new node $L
= \mbox{LESS}(X,Y)$ to the game that in any equilibrium will play only
itself if $v_1 - v_2 \geq \epsilon_l$, and will play
 $L_1$ (for a new node $L_1$) if $v_1 \leq v_2$.  First
create $D = \mbox{DIFF}(X,Y)$.  Then create $M_1 = \mbox{DOUBLE}(D)$.
For $i=1$ to $-\log \epsilon_l$, create player $M_{i+1} =
\mbox{DOUBLE}(M_i)$.  Call the last DOUBLE player node $L$ and
the extra player for the sum player of the last DOUBLE player node
$L_1$.  If $v_1 \leq v_2$, the DIFF player will return 0, so player
$L$ will play the result of multiplying 0 by 2 many times, or 0. If
$v_1 - v_2 \geq \epsilon_l$, player $L$ will play the max of 1 and
$(v_1 - v_2) * 2^{- \log \epsilon_l} = (v_1 - v_2) *
\frac{1}{\epsilon_l} \geq
\frac{\epsilon_l}{\epsilon_l} = 1$.
\subsubsection*{$\mbox{HALF}(X)$}
We can add
a new node $H = \mbox{HALF}(X)$ that will play itself with weight $v_1/2$ in any equilibrium.  Create a node $H_1$ with
preference list $(X, H_1)$. $H_1$ will play itself with weight $1 -
v_1$. Then create two more nodes: $H_2$ and $H_3$. Node $H_2$ has
preference list $(H_1, H_3, H_2)$. Node $H_3$ has preference list
$(H_1, H, H_3)$. Set the preference list for node $H$ to be $(H_1,
H_2, H)$. Each of $H$, $H_2$, and $H_3$ will use its first choice with
weight $1 - v_1$, leaving $v_1$ for its other two choices.  Then, we
have $w_H(H) + w_H(H_2) = v_1$, $w_{H_2}(H_2) + w_{H_2}(H_3) = v_1$,
and $w_{H_3}(H_3) + w_{H_3}(H) = v_1$. In any equilibrium, it must be
true that $w_H(H_2) = w_{H_2}(H_2)$, $w_{H_2}(H_3) = w_{H_3}(H_3)$,
and $w_{H_3}(H) = w_H(H)$.  Solving this gives $w_H(H) = w_H(H_2) =
w_{H_2}(H_2) = w_{H_2}(H_3) = w_{H_3}(H_3) = w_{H_3}(H) =
\frac{v_1}{2}$.
\medskip

As in~\cite{DaskalakisGoldbergPapadimitriou}, our LESS player plays
the specified action (itself, in our case) with weight 1 if $v_1 \geq
v_2 + \epsilon_l$, and plays itself with weight 0 if $v_1 \leq v_2$,
but will play some unspecified fraction on itself if $v_2 < v_1 < v_2
+ \epsilon_l$.  We use the LESS player to extract the bits
representing the coordinates of a cubelet to be passed into the
circuit.  This procedure is identical to that
of~\cite{DaskalakisGoldbergPapadimitriou}.  Let $X$ denote the
$x$-coordinate player, and let $X_1 = \mbox{COPY}(X)$.  For $i$ from
$1$ through $n$, we create players $B_i = \mbox{LESS}(2^{-i},X_i)$ and
$X_{i+1} = \mbox{DIFF}(X_i, \mbox{HALF}^{i}(B_i))$, where
$\mbox{HALF}^i$ indicates applying the HALF gadget $i$ times. 
\junk{\begin{lemma}
\label{lem:bit_extraction}
Assume $\epsilon_l \ll 2^{-n}$.  Let $x$ be the weight assigned by a
coordinate player to itself.  For $m \le n$, if $\sum_{i = 1}^m b_i
2^{-i} + 2m\epsilon_l < x < \sum_{i = 1}^m b_i 2^{-i} + 2^{-m} -
2m\epsilon_l$ for some $b_1, \ldots, b_m \in {0,1}$, then at any
equilibrium of the \gamename{ }, then $B_m$ assigns $b_m$ to itself and
$X_{m+1}$ assigns $x - \sum_{i = 1}^m b_i 2^{-i}$ to itself.
\end{lemma}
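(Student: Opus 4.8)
The plan is to prove the lemma by induction on $m$, fixing an arbitrary equilibrium of the \gamename{ }and tracking the weight each auxiliary player assigns to itself in that equilibrium. Write $r_i = x - \sum_{j=1}^{i} b_j 2^{-j}$ for the partial remainder after the first $i$ bits, so the two claims to establish become "$X_{i+1}$ plays itself with weight $r_i$" and "$B_i$ plays itself with weight $b_i$". Throughout I would invoke the in-equilibrium correctness already established for the individual gadgets: $\mbox{COPY}$, $\mbox{HALF}$, and $\mbox{DIFF}$ compute their functions exactly, while $\mbox{LESS}(2^{-i},X_i)$ resolves the comparison between its two inputs correctly whenever those inputs differ by more than $\epsilon_l$ (and may output an unspecified fraction only inside a band of width $\epsilon_l$ around its threshold). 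The supplied constant $2^{-i}$ is itself produced exactly from $\mbox{VALUE}$ and $\mbox{HALF}$ gadgets, so the only possible source of error in the entire chain is this ambiguous band of $\mbox{LESS}$.

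Before the induction I would record a monotonicity observation that does most of the work: the hypothesis stated at level $m$ already forces the analogous hypothesis $2m'\epsilon_l < r_{m'} < 2^{-m'} - 2m'\epsilon_l$ at every earlier level $m' \le m$. The lower bound is immediate from $r_{m'} \ge r_m$. For the upper bound I would use $r_{m'} = r_m + \sum_{i=m'+1}^{m} b_i 2^{-i} \le r_m + (2^{-m'} - 2^{-m})$ together with $r_m < 2^{-m} - 2m\epsilon_l$, which telescopes to $r_{m'} < 2^{-m'} - 2m\epsilon_l \le 2^{-m'} - 2m'\epsilon_l$. This reduction is exactly what lets a single hypothesis at level $m$ certify the whole extraction chain up through step $m$.

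The induction itself is then routine. The base case is $X_1 = \mbox{COPY}(X)$, which plays itself with weight $x = r_0$. For the step, assume $X_i$ plays itself with weight $r_{i-1}$. Since $r_{i-1} = r_i + b_i 2^{-i}$, the level-$i$ hypothesis gives $r_{i-1} - 2^{-i} = r_i > 2i\epsilon_l$ when $b_i = 1$, and $2^{-i} - r_{i-1} = 2^{-i} - r_i > 2i\epsilon_l$ when $b_i = 0$; in either case the inputs $2^{-i}$ and $r_{i-1}$ to $\mbox{LESS}(2^{-i},X_i)$ are separated by more than $\epsilon_l$, so the gadget resolves the comparison correctly and $B_i$ plays itself with the intended bit $b_i$ (equal to $1$ precisely when $r_{i-1} \ge 2^{-i}$). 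Feeding this clean bit through the exact $\mbox{HALF}^{i}$ and $\mbox{DIFF}$ gadgets yields $X_{i+1} = r_{i-1} - b_i 2^{-i} = r_i$, closing the induction and giving both conclusions at level $m$.

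The one genuinely delicate point, which I expect to be the main obstacle, is the unreliable band of $\mbox{LESS}$: it is the only non-exact gadget in the chain, and a fractional output there would both corrupt $B_i$ and, through the subtraction defining $X_{i+1}$, shift every subsequent remainder, so errors could in principle cascade. The crux of the argument is therefore the margin bookkeeping of the second paragraph, which shows that the $2m\epsilon_l$ slack baked into the hypothesis keeps every one of the comparisons at steps $1,\dots,m$ strictly more than $\epsilon_l$ away from its threshold. Given that uniform margin, no comparison ever lands in the ambiguous band, the otherwise-exact gadget chain propagates the partial remainders without error, and the claimed equilibrium weights follow in every equilibrium.
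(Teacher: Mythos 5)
Your proposal is correct and takes essentially the same approach as the paper's own treatment: the paper omits the details of this lemma, deferring to Lemma~4 of~\cite{DaskalakisGoldbergPapadimitriou}, and that argument is precisely your induction on $m$ --- the level-$m$ hypothesis propagates to every earlier level, the COPY/HALF/DIFF chain is exact in equilibrium, and the $2m\epsilon_l$ slack keeps each LESS comparison strictly outside its $\epsilon_l$-wide ambiguous band. No gaps to report.
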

} It can be shown that as long as $x$ is not too close to a multiple
of $2^{-n}$, we will extract its $n$ bits correctly.  If this is not
the case, however, we will not properly extract the bits, and our
circuit simulation may return an arbitrary value.  We resolve this
problem using the same technique as
in~\cite{DaskalakisGoldbergPapadimitriou}: we compute the circuit for
a large constant number of points surrounding the vertex and take the
average of the resulting vectors.  Since these details are almost
identical to that of~\cite[Lemma~4]{DaskalakisGoldbergPapadimitriou},
we omit them.  \junk{
\begin{lemma}
\label{lem:equilibrium}
Let $v_x$, $v_y$, and $v_z$ denote the weights assigned by the three
coordinate players to themselves.  In any equilibrium of the
 \gamename, one of the vertices of the cubelets that
contain $(v_x, v_y, v_z)$ is panchromatic.
\end{lemma}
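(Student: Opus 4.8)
The plan is to exploit a single structural fact that was proved gadget-by-gadget above: at any equilibrium, every auxiliary player plays itself with exactly the Boolean or arithmetic value of its inputs' self-weights that its name (OR, AND, NOT, SUM, DIFF, COPY, HALF, LESS, DOUBLE) advertises. Granting this, the whole network assembled from these players is a faithful evaluator of the composition of the bit-extraction chain, the input circuit, and the panchromaticity test, all expressed as functions of the three coordinate self-weights $v_x,v_y,v_z$. I would then read off the equilibrium constraints imposed on the coordinate players and show they force a fixed-point relation that is solvable only when some vertex of a cubelet containing $(v_x,v_y,v_z)$ is panchromatic, i.e.\ is surrounded by cubelets whose circuit outputs contain all four Brouwer vectors.

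First I would dispose of the \emph{generic} case, in which each coordinate value lies farther than the bit-extraction tolerance from every integer multiple of $2^{-n}$. Here the chain of $\mbox{LESS}$, $\mbox{DIFF}$, and $\mbox{HALF}$ players $\{B_i,X_{i+1}\}$ reads off the $n$ bits of each coordinate correctly, so the bit players' self-weights name a unique cubelet and hence a unique surrounding vertex together with its eight incident cubelets. The AND/OR/NOT subnetwork reproduces the circuit's output on each of these eight cubelets, encoded as the six indicator bits $\{+x,-x,+y,-y,+z,-z\}$. Taking an $\mbox{OR}$ of each bit over the eight cubelets, then an $\mbox{AND}$ across the six results, then a $\mbox{NOT}$, yields for each coordinate an indicator $\delta_c\in\{0,1\}$ that is $0$ exactly when all four vectors appear, i.e.\ exactly when the vertex is panchromatic. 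Since the coordinate player is realized as $\mbox{SUM}(\mbox{COPY}(X),\delta_c)$, its equilibrium self-weight must satisfy $v_c=\min(v_c+\delta_c,1)$; arranging the instance so the relevant region lies in the cube interior (no coordinate pinned at $1$), this forces $\delta_c=0$ for every $c$, so the named vertex is panchromatic.

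The step I expect to be the real obstacle is the \emph{non-generic} case, where some coordinate is within $\epsilon_l$ of a multiple of $2^{-n}$ and the corresponding $\mbox{LESS}$ player may settle on an arbitrary intermediate self-weight, leaving bit extraction — and thus the identity of the named cubelet — ill defined. The resolution is the averaging device of Daskalakis, Goldberg, and Papadimitriou: rather than evaluating the circuit once, the construction evaluates it on a constant-size cloud of grid points straddling $(v_x,v_y,v_z)$ and sums the resulting vectors before applying the panchromaticity test. Because our gadgets realize exactly the same input--output maps as theirs, their analysis transfers essentially verbatim, so the averaged feedback vanishes in all three coordinates only if some vertex of a cubelet containing $(v_x,v_y,v_z)$ is panchromatic, and the fixed-point relation $v_c=\min(v_c+\delta_c,1)$ again forces such a vertex to exist. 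I would therefore write the argument so as to isolate this robustness claim, citing the corresponding lemma of~\cite{DaskalakisGoldbergPapadimitriou} for its combinatorial core and supplying only the translation showing that our equilibrium condition implements their discrete map; combined with Lemmas~\ref{BGPinstanceofSimple} and~\ref{BBCinstanceofSimple}, this completes the \textbf{PPAD}-hardness transfer to fractional BGP and fractional BBC.
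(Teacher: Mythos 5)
Your proposal follows essentially the same route as the paper: the gadget-correctness facts give a faithful simulation of bit extraction, the circuit, and the OR/AND/NOT/SUM feedback loop; the fixed-point relation $v_c=\min(v_c+\delta_c,1)$ rules out non-panchromatic equilibria in the generic case; and the non-generic case (a coordinate too close to a multiple of $2^{-n}$) is delegated to the averaging technique of~\cite{DaskalakisGoldbergPapadimitriou}, which is exactly what the paper does (it omits those details, citing their Lemma~4). If anything, you are slightly more explicit than the paper in flagging the boundary subtlety of a coordinate pinned at $1$, which the paper's sketch passes over in silence.
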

}
From this reduction, we get:
\begin{theorem}
\label{thm:exact}
It is \textbf{PPAD}-hard to find an equilibrium in a given \gamename. \qed
\end{theorem}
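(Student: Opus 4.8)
The plan is to prove Theorem~\ref{thm:exact} by a polynomial-time reduction from the \textbf{PPAD}-complete problem 3-D \textsc{Brouwer}, following the framework of~\cite{DaskalakisGoldbergPapadimitriou}. The overall strategy is to build a single \gamename{ }whose equilibria are in one-to-one (or at least onto) correspondence with solution vertices of a given 3-D \textsc{Brouwer} instance. The guiding principle is the ``plays itself'' convention already introduced in the excerpt: a player $P$ that assigns weight $v$ to its own strategy encodes the real value $v \in [0,1]$, and all computation is carried out by arranging preference lists so that, in any equilibrium, the self-weights of designated players are forced to equal the result of a prescribed arithmetic or logical operation on the self-weights of their inputs.

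First I would establish the gadget library. For each of the operators $\mathrm{OR}$, $\mathrm{NOT}$, $\mathrm{AND}$, $\mathrm{SUM}$, $\mathrm{DIFF}$, $\mathrm{COPY}$, $\mathrm{DOUBLE}$, $\mathrm{HALF}$, and $\mathrm{LESS}$ I would verify the gadget correctness claim: that in \emph{every} equilibrium of the constructed subgame, the output player's self-weight equals the intended function of the input self-weights. The key observation making each verification work is that a player with preference list $(X, P)$ is forced to ``absorb'' as much weight as $X$ makes available on its first choice and play the remainder on itself, because lexicographic maximality pushes weight toward the most-preferred strategy subject to the feasibility constraint $w_i(j) \le w_j(j)$. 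Composing such players yields subtraction, copying, and (via the complementation trick) the Boolean operations; the $\mathrm{HALF}$ gadget requires the small three-player cyclic system whose unique consistent solution is $v_1/2$, as sketched in the excerpt. I would emphasize that each gadget's correctness is a statement quantified over all equilibria, so that the guarantees compose when gadgets are chained.

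Next I would assemble these gadgets exactly as in~\cite{DaskalakisGoldbergPapadimitriou}: use the three coordinate players to hold $(v_x, v_y, v_z)$; extract the $n$ bits of each coordinate using the iterated $\mathrm{LESS}$/$\mathrm{DIFF}$/$\mathrm{HALF}$ construction; feed these bits into a gadget simulation of the given Boolean circuit to obtain the $6$-bit encoding of the \textsc{Brouwer} direction vector at each of the $8$ surrounding cubelets; and finally combine these outputs through $\mathrm{OR}$, $\mathrm{AND}$, $\mathrm{NOT}$, $\mathrm{COPY}$, and $\mathrm{SUM}$ players so that the coordinate players receive a zero feedback exactly when all four direction vectors appear among the $8$ cubelets, and a nonzero (weight-$1$) feedback otherwise. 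The design ensures that the coordinate players are at a lexicographic best response if and only if the feedback is zero, i.e.\ if and only if their self-weights encode a panchromatic (solution) vertex. Since the entire construction uses only polynomially many players and preference lists of polynomial length, the reduction runs in polynomial time, and by Lemmas~\ref{BGPinstanceofSimple} and~\ref{BBCinstanceofSimple} the hardness transfers to fractional BGP and fractional BBC.

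The main obstacle is the inaccuracy of the $\mathrm{LESS}$ gadget in the ``boundary'' regime $v_2 < v_1 < v_2 + \epsilon_l$, which means bit extraction can fail when a coordinate's self-weight lies too close to a multiple of $2^{-n}$. I would handle this precisely as~\cite[Lemma~4]{DaskalakisGoldbergPapadimitriou} does: rather than evaluating the circuit at a single point, evaluate it at a large constant number of nearby sample points and average the resulting vectors, so that the averaged output is robust to a bounded number of mis-extracted bits and still detects the panchromatic condition correctly. The verification that this averaging argument goes through unchanged in the \gamename{ }setting is essentially identical to the cited lemma, so I would invoke it rather than reproduce it. The remaining care is bookkeeping: confirming that lexicographic maximality (the equilibrium notion here) plays exactly the role that the approximate-best-response condition plays in the graphical-game reduction, which follows because each gadget's output is pinned to an exact value in every equilibrium.
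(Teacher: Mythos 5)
Your proposal follows essentially the same route as the paper's proof: a reduction from 3-D \textsc{Brouwer} in the framework of~\cite{DaskalakisGoldbergPapadimitriou}, using the same gadget library (OR, NOT, AND, SUM, DIFF, COPY, DOUBLE, HALF, LESS) with correctness quantified over all equilibria, the same bit-extraction and circuit-simulation assembly with feedback to the coordinate players, and the same averaging trick to handle the LESS gadget's fragility near multiples of $2^{-n}$. The plan is sound and matches the paper's argument in both structure and detail.
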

\subsection{Approximate equilibria}
\label{sec:approximate}
Given the hardness of finding exact equilibria in \gamenamePlural{ }(and
fractional BGP and BBC games), a natural next question is whether it
is easier to find approximate equilibria.  We define an
$\epsilon$-equilibrium of a $k$-player \gamename{ }to be a set of
weight distributions $w_1$, \ldots, $w_k$ that satisfy the following
conditions for every player $i$: (a) $\sum_j w_i(j) = 1$; (b) for each
$j$, $w_i(j) \le w_j(j) + \epsilon$; and (c) for each $j$, either
$\sum_{\ell: \ell \prefer j} w_i(\ell) \ge 1 - \epsilon$ or $|w_i(j) -
w_j(j)| \le \epsilon$.  In other words, the weight assigned by a
player $i$ on another player $j$ is at most $\epsilon$ more than the
weight assigned by $j$ on itself; and for any $i$ and $j$, either $i$
plays a total weight of at least $1 - \epsilon$ on players it prefers
at least as much as $j$ or the weight assigned by $i$ on $j$ differs from
that assigned by $j$ to itself by at most $\epsilon$. Note that there exists some threshold preference such that any player preferred strictly more than that must be ``filled'' to within $\epsilon$ of the allowed weight. The rest of at least $1-\epsilon$ weight must be placed on players at the threshold preference. At most $\epsilon$ weight is left for players with preference lower than the threshold.

Two notions of approximation have been defined for fractional BGP: an
$\epsilon$-solution by~\cite{HaxellWilfong} and $\epsilon$-stable solution
by~\cite{Kintali}.  The (polynomial-time) reduction of
Lemma~\ref{BGPinstanceofSimple} mapping a given \gamename{ }instance
\pInstance\ to a fractional BGP game instance \bInstance\ 
has the property that any $\epsilon$-solution or $\epsilon$-stable
solution for \bInstance\ is, in fact, an $\epsilon$-equilibrium for
\pInstance.  This implies that any \textbf{PPAD}-hardness on finding
$\epsilon$-equilibrium for \gamenamePlural{ }immediately yields an equivalent
result for both notions of approximation for fractional BGP.

\junk{We consider two notions of approximation that have been considered for
fractional BGP.  Haxell and Wilfong~\cite{HaxellWilfong} define an
{\em $\epsilon$-solution}\/ to the fractional stable paths problem as
a weight function $w$ that obeys the unity condition, and variants of
the tree and stability conditions, which they call the $\epsilon$-tree and
$\epsilon$-stability conditions.
\begin{itemize}
\item 
$\epsilon$-tree condition: For each node $v$, and each path $S$ with
start node $u$, $\sum_{Q \in \pi(u,P)} w_u(Q) \le \epsilon + w_v(P)$.
\item
$\epsilon$-stability: For any node $v$ and path $Q$ starting at $v$,
either the first stability condition holds or there exists a proper
final segment $S$ of $Q$ with some start node $u$ such that $\sum_{P
\in \pi(v,S)} w_v(P) = w_u(S)$, and for each $P \in \pi(v,S)$ with
$w_v(P) > 0$, we have $P \prefer_v Q$.
\end{itemize}
}
\junk{In other words, a feasible solution is one in which each node chooses
at most 1 unit of flow to $d$ such that no subpath is filled by more
than the amount of flow placed on that subpath by its starting node.
3 constraints: (C1) the sum of $w_v(P)$, over all $P$ In
$\pi(v)$ is at most 1; (C2) for each path $P$ in the graph (starting
at some node $v$), for each node $u$: $\sum_{Q \in \pi(u,P)} w_u(Q)
\le \epsilon + w_v(P)$, and (C3)

for node $u$ and path $Q$ starting at
$u$: either $u$ has placed a total of weight 1 on paths it prefers
more than or the same as $Q$, or there is some subpath $P$ of $Q$ such
that if the initial node of path $P$ has put weight $w$ on $P$, then
$u$ has placed a total of weight $w + \epsilon$ on paths with $P$ as a
subpath that it prefers more than or the same as $Q$.
}

\junk{Kintali ~\cite{Kintali} defines an $\epsilon$-stable solution to the
fractional stable paths problem as one that obeys the unity condition,
the tree condition, and the following condition:
}
\junk{, which we call
each node has a total flow of at most 1 to the destination, (2) for
each path $P$ in the graph (starting at some node $v$), for each node
$u$: $\sum_{Q: P \textrm{ is a subpath of } Q}$ the weight $u$ places
on path $Q$ $\leq$ the weight $v$ places on path $P$, and}
\junk{ for node
$u$ and path $Q$ starting at $u$: either $u$ has placed a total of
weight $w$ ($1-\epsilon \leq w \leq 1$) on paths it prefers more than
or the same as $Q$, or there is some subpath $P$ of $Q$ such that if
the initial node of path $P$ has put weight $x$ on $P$, then $u$ has
placed a total of weight $w$ ($x - \epsilon \leq w \leq x$) on paths
with $P$ as a subpath that it prefers more than or the same as $Q$.

We define an $(\epsilon_h, \epsilon_k)$-equilibrium to fractional BGP
as one that is both an $\epsilon_h$-solution as defined by Haxell and
Wilfong and an $\epsilon_k$-stable solution as defined by Kintali. In
other words, in a $(\epsilon_h, \epsilon_k)$-equilibrium no path is
overfilled by more than $\epsilon_h$, and the solution fills the
lexicographically best paths to within $\epsilon_k$ of the allowed
total weight on each path.
}

\begin{theorem}
\label{thm:bgp_ppad}
It is \textbf{PPAD}-hard to find an $\epsilon$-equilibrium for \gamenamePlural, for
some $\epsilon$ inverse polynomial in $n$.
\end{theorem}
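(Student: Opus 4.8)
The plan is to reuse, essentially verbatim, the reduction from 3-D BROUWER behind Theorem~\ref{thm:exact}, and to argue that it degrades gracefully: for a suitable inverse-polynomial $\epsilon$, every $\epsilon$-equilibrium of the constructed \gamename{ }still forces the three coordinate players to encode a solution vertex of the underlying 3-D BROUWER instance. Since an exact equilibrium is the case $\epsilon = 0$, the gadget constructions carry over unchanged; what has to be redone is the per-gadget analysis under the two relaxations in the definition of an $\epsilon$-equilibrium, namely the capacity slack (condition (b), $w_i(j) \le w_j(j) + \epsilon$) and the relaxed maximality (condition (c)). Combined with Lemmas~\ref{BGPinstanceofSimple} and~\ref{BBCinstanceofSimple}, hardness for approximate \gamenamePlural{ }transfers to the approximate BGP notions, as already noted in the paragraph preceding the statement.

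The first step is to make precise what it means for a gadget to compute its value approximately: if each input player plays its target within additive error $\delta_{\mathrm{in}}$, then in every $\epsilon$-equilibrium the output player plays the intended function of the inputs within some $\delta_{\mathrm{out}}$. I would then verify, gadget by gadget, a propagation bound $\delta_{\mathrm{out}} \le c\,\delta_{\mathrm{in}} + O(\epsilon)$. For COPY, NOT, DIFF, SUM, HALF and OR the multiplier $c$ is $O(1)$, so error grows only by an additive $O(\epsilon)$ per gate; the sole exception is DOUBLE, where $c = 2$. The LESS gadget, being a chain of $\log(1/\epsilon_l)$ DOUBLE gadgets on top of a DIFF, therefore amplifies the upstream error by a factor as large as $1/\epsilon_l$, and this amplification is the quantity that must be kept under control.

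The second step separates the two regimes in which real-valued weights occur. In the Boolean portion of the construction --- the AND/OR/NOT gates that simulate the 3-D BROUWER circuit together with the feedback layer (the per-bit ORs, per-coordinate ANDs and NOTs, COPY and SUM) --- every value is intended to be $0$ or $1$, and I would prove a self-correction lemma: if each input is within a fixed margin (say $1/4$) of a clean bit, the output is again within that margin of the correct bit. Hence no error accumulates along the circuit as long as its inputs are clean bits. The only stage where continuous error genuinely threatens correctness is bit extraction, where LESS thresholds the residual coordinate values. Here I would pick $\epsilon_l$ to be an inverse polynomial for which $\log(1/\epsilon_l) = O(\log n)$ doublings give amplification $1/\epsilon_l = \mathrm{poly}(n)$, so that $1/\epsilon_l$ times the accumulated upstream error --- itself $\mathrm{poly}(n)\cdot\epsilon$ --- stays below $1/4$; this is exactly the constraint that fixes $\epsilon$ as an inverse polynomial in $n$. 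Outside a gray zone of width $O(\epsilon_l)$ about each grid boundary, LESS then emits a clean bit just as in the exact reduction.

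The final step disposes of the gray zone, and here I would invoke \cite{DaskalakisGoldbergPapadimitriou} rather than reprove anything: a residual lying within $O(\epsilon_l)$ of a multiple of $2^{-n}$ may be mis-extracted and produce an arbitrary circuit output, but evaluating the circuit at a fixed collection of shifted sample points around the candidate vertex and averaging the resulting vectors, exactly as in \cite[Lemma~4]{DaskalakisGoldbergPapadimitriou}, ensures the averaged feedback vanishes precisely when the vertex is panchromatic. Assembling the robustness bounds with this averaging shows that, for the chosen inverse-polynomial $\epsilon$, any $\epsilon$-equilibrium pins the coordinate players to a neighborhood of a panchromatic vertex, so computing an $\epsilon$-equilibrium solves 3-D BROUWER and is \textbf{PPAD}-hard. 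The main obstacle is the first two steps together: extracting clean, composable error-propagation bounds for every gadget under the relaxed equilibrium conditions, and in particular taming the multiplicative blow-up through the DOUBLE/LESS chain tightly enough that the tolerable $\epsilon$ remains inverse-polynomial while still lining up with both the $1/4$ margin of the Boolean self-correction lemma and the $O(\epsilon_l)$ gray-zone width.
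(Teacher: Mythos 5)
There is a genuine gap, and it is twofold. First, your plan to reuse the 3-D BROUWER reduction of Theorem~\ref{thm:exact} ``essentially verbatim'' cannot yield an inverse-polynomial $\epsilon$. In that reduction each coordinate of the solution vertex must be resolved to $n$ bits, i.e.\ to precision $2^{-n}$: the bit-extraction players compare the residual $X_i$ against $2^{-i}$ for $i$ up to $n$, so the LESS threshold $\epsilon_l$ must satisfy $\epsilon_l \ll 2^{-n}$ (the exact proof itself stipulates this), and since the construction requires $\epsilon \le \epsilon_l^3$, the tolerable $\epsilon$ is exponentially small. That rules out exact polynomial-time algorithms but says nothing against an FPTAS, which is what Theorem~\ref{thm:bgp_ppad} is about. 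Your proposed fix --- choosing $\epsilon_l = 1/\mathrm{poly}(n)$ --- makes the gray zone of the bit-extraction LESS gadgets wider than the cubelets themselves, so all bits beyond the first $O(\log n)$ are extracted meaninglessly, and the constant-size sampling of \cite[Lemma~4]{DaskalakisGoldbergPapadimitriou} cannot repair regions that large. This is exactly why the paper's proof abandons 3-D BROUWER for this theorem and instead follows \cite{ChenDengTeng,ChenDengTengJACM}: it starts from the \emph{high-dimensional} BROUWER problem on the grid $\{0,1,\ldots,8\}^n$, where each of the $n$ coordinates needs only $O(1)$ bits of precision, and uses the robust fixed-point condition (the average of $n^3$ sampled points being inverse-polynomially close to zero), which is what makes inverse-polynomial $\epsilon$ achievable at all.

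Second, your Boolean ``self-correction lemma'' is false for these gadgets. Take the OR gadget with both inputs intended to be $0$ but actually played at $v_1', v_2' \le m$: the auxiliary player $R_1$ plays itself roughly $1 - v_1' - v_2'$, so the output player $R$ plays itself roughly $v_1' + v_2'$, i.e.\ the output error is the \emph{sum} of the input errors (plus $O(\epsilon)$). A fixed margin such as $1/4$ is therefore not preserved: errors can double at every gate and blow up exponentially in the circuit depth. This magnification is precisely what the paper identifies as the main technical obstacle (``our Boolean gadgets only approximately simulate the Boolean operations''), and it is overcome not by any contraction property of the gates but by an explicit CORRECTION gadget --- a LESS player used as an if-statement that routes the value through HALF or DOUBLE players --- inserted after every gate (and periodically during the averaging) to snap values back to within $2\epsilon_l$ of a clean bit. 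Without such a mechanism, your step two does not go through, and the feedback layer at the end of the circuit can no longer be trusted to distinguish solution vertices from non-solutions.
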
 
\begin{proof}
Our proof follows the framework
of~\cite{ChenDengTeng,ChenDengTengJACM} for proving the hardness of
approximating Nash equilibria in 2-player games.  This framework
starts with a high-dimensional discrete fixed point problem, BROUWER,
which is also \textbf{PPAD}-complete.  The input to the problem is a Boolean
circuit that assigns a color from $\{ 1,...,n,n+1\}$ to each interior
node of an $n$-dimensional grid $\{0,1,...,8\}^n$.  This grid has
about $2^{3n}$ cells, each of which is an $n$-dimensional hypercube.
The discrete fixed point is defined to be a panchromatic simplex
inside a hypercube.  This framework
of~\cite{ChenDengTeng,ChenDengTengJACM} uses a new geometric condition
for discrete fixed points, which requires that the average of $n^3$
sampled points in the interior of the targeted panchromatic simplex is
inverse-polynomially close to the zero vector.  The rest of the proof
follows the framework of~\cite{DaskalakisGoldbergPapadimitriou}.

Our broad definition of an $\epsilon$-equilibrium poses additional
technical challenges which did not occur in the reductions of
~\cite{ChenDengTeng,ChenDengTengJACM}.  In particular, in the
presence of errors, our Boolean gadgets only approximately simulate
the Boolean operations, while in previous reductions, the Boolean
gadgets are precise. Therefore, most of our technical effort is to prevent
the magnification of errors in Boolean simulation. In our proof, we
have designed a CORRECTION gadget to accomplish this.
\junk{With the help of this CORRECTION gadget, we can then follow the
techniques of previous
proofs~\cite{ChenDengTeng,ChenDengTengJACM,DaskalakisGoldbergPapadimitriou}.}

We focus on the necessary changes
for the gadgets of Theorem~\ref{thm:exact} to
account for errors, and the description and use of the new CORRECTION
gadget.  Other details closely match those
of~\cite{ChenDengTeng,ChenDengTengJACM,DaskalakisGoldbergPapadimitriou}.

Let $\epsilon_l$ (the measure of the fragility of our LESS gadget) be
a real number such that $\epsilon \le \epsilon_l^3$.  Then, we have
the following error bounds.  

\begin{lemma} \label{lem:logic_bounds}
Assuming node $X$ plays itself with weight $v_1'$, $v_1 - 2\epsilon_l \leq v_1' \leq v_1 + 2\epsilon_l$, and node $Y$ plays itself with weight $v_2'$, $v_2 - 2\epsilon_l \leq v_2' \leq v_2 + 2\epsilon_l$, each of the boolean gadgets defined in the proof of Theorem ~\ref{thm:exact} plays itself within $\pm (4\epsilon_l + 6\epsilon)$ of the correct value for the correct $v_1$ and $v_2$ inputs.
\end{lemma}

\begin{proof}
\subsubsection*{OR}
If $v_1$ and/or $v_2$ is 1, then $v_1'$ and/or $v_2'$ is at least $1 -
2\epsilon_l$, and node $R_1$ will play $R_1$ with weight at most
$2\epsilon_l + \epsilon$, so $R$ will play $R$ with weight at least
$1 - 2\epsilon_l - 2\epsilon$. If both $v_1$ and $v_2$ are 0, then
$v_1'$ and $v_2'$ are at most $2\epsilon_l$, and node $R_1$ will play
$R_1$ with weight at least $1 - 4\epsilon_l - 2\epsilon$, so $R$ will
play $R$ with weight at most $4\epsilon_l + 3\epsilon$.
\subsubsection*{NOT}
If $v_1 = 1$, $v_1'$ is at least $1-2\epsilon_l$, and node $N$ will
play itself with weight at most $2\epsilon_l + \epsilon$. If $v_1 =
0$, $v_1'$ is at most $2\epsilon_l$, and node $N$ will play $N$ with
weight at least $1 - 2\epsilon_l - \epsilon$.
\subsubsection*{AND}
The AND gadget concatenates other new players to
get $\lnot(\lnot v_1 \lor \lnot v_2)$. Each NOT may add at most one
additional $\epsilon$ error to the given value, and the OR may add up
to $3\epsilon$ error (on top of the sum of the errors from both
inputs). So the AND player will return a value within an additive
$4 \epsilon_l + 6\epsilon$ of the correct 0 or 1 answer.
\end{proof}

\begin{lemma} \label{lem:arith_bounds}
Each of the arithmetic gadgets plays itself within $\pm 5\epsilon$ of the correct value for the input it is given.
\end{lemma}

\begin{proof}
\subsubsection*{SUM}
Node $S_1$ will play $S_1$ with weight $w(S_1T) \in [max(0,
1-v_1'-v_2'-2\epsilon), max(0, 1-v_1'-v_2' + 2\epsilon)]$. So node $S$
will play $S$ with weight $w_S(S) \in [v_1' + v_2' - 3\epsilon, v_1' +
v_2' + 3\epsilon]$, unless $w_{S_1}(S_1)=0$, which means $v_1' + v_2'
\geq 1 - 2\epsilon$. In this case, node $S$ will play $S$ with
weight at least $1 - \epsilon$.
\subsubsection*{DIFF}
Node $D_1$ will play $D_1T$ with weight $w_{D_1}(D_1) \in max(0, 1 - v_1' -
\epsilon), max(0, 1 - v_1' + \epsilon)]$. Node $D$ will play $D$ with
weight $w_D(D) \in [max(0, v_1' - v_2' - 3\epsilon), max(0, v_1' - v_2'
+ 3\epsilon)]$, unless $w_{D_1}(D_1)=0$ which means $v_1' \geq 1 -
\epsilon$. In this case, node $D$ will play $D$ with weight at least
$1 - v_2' - 2\epsilon$ and at most $1 - v_2' + \epsilon$ (not
$2\epsilon$ because we cannot underfill the strategy with weight 0).
\subsubsection*{COPY}
Node $C_1$ will play $C_1$ with weight at least $1 - v_1' - \epsilon$
and at most $1 - v_1' + \epsilon$. Node $C$ will play $C$ with weight
at least $v_1' - 2\epsilon$ and at most $v_1' + 2\epsilon$.
\subsubsection*{HALF}
Node $H_1$ will play $H_1$ with weight $w_{H_1}(H_1) \in [1 - v_1' -
\epsilon, 1 - v_1' + \epsilon]$, and each other player will play its
second and third preferences with total weight between $1 -
w_{H_1}(H_1) - \epsilon$ and $1 - w_{H_1}(H_1) + \epsilon$. Each other
player will play itself half of this amount plus or minus $3\epsilon$
(this is easy to verify by writing the system of inequalities and
checking the extreme points). Therefore, node $H$ plays $H$ with
weight at least $\frac{v_1'}{2} - 4\epsilon$ and at most
$\frac{v_1'}{2} + 4\epsilon$.
\subsubsection*{DOUBLE}
The DOUBLE gadget consists of a copy player, which adds at most
$2\epsilon$ error, and a sum player, which adds at most $3\epsilon$
error on top of the sum of the errors in the two inputs. Therefore,
node $M$ plays $M$ with weight at least $2v_1' - 5\epsilon$ and at
most $2v_1' + 5\epsilon$.
\end{proof}

\begin{lemma} \label{lem:lt_bounds}
The LESS player will play itself with weight $< \epsilon_l$ if it is given $v_1', v_2'$ such that $v_1' \leq v_2'$, and with weight $> 1 - \epsilon_l$ if $v_1' - v_2' \geq \epsilon_l$.
\end{lemma}

\begin{proof}
\subsubsection*{LESS}
The LESS gadget inherits its susceptibility to error from its
initial DIFF player (which was, in the exact equilibrium case,
non-zero if and only if $v_1 < v_2$). For the case where $v_1 < v_2$,
we can account for the errors of the DOUBLE players (used to
repeatedly amplify the difference) simply by adding extra iterations
of DOUBLE. Since we stipulated that $\epsilon \leq
\epsilon_l^3$, a value that started $\leq 5\epsilon$ will remain $<
\epsilon_l$, even after doubling enough times to push a value
$\geq \epsilon_l$ to a value over 1 (including extra multiplications
to account for the DOUBLE errors). Therefore, the LESS player will
play itself with weight less than $\epsilon_l$ if $v_1' \leq v_2'$ and
with weight greater than $1 - \epsilon_l$ if $v_1' - v_2' \geq
\epsilon_l$.
\end{proof}
\smallskip

Next, we generate another gadget that can be used to amplify the
results of each boolean logic player before using it, in order to
ensure that each input within the circuit is close to the correct
value.  

\subsubsection*{CORRECTION} 
After a single gate (if the inputs are
within additive $2\epsilon_l$ of the correct 0 or 1 inputs), a player
will play itself at least $1 - 4\epsilon_l - 6\epsilon$ if the correct
answer is $1$, and at most $4\epsilon_l + 6\epsilon$ if the correct
answer is $0$ (based on the analysis in the proof of
Lemma ~\ref{lem:logic_bounds}). Therefore, we need only to add a LESS
player to determine whether or not the result is $< \frac{1}{2}$ and
adjust the value in the correct direction using HALF or DOUBLE
players.  

\begin{lemma}
\label{lem:correction}
By using a CORRECTION gadget after each boolean logic gadget, we can ensure that the output from each gate is at most $2\epsilon_l$ away from the correct output.
\end{lemma}

\begin{proof}
The results of a single gate gadget will be at least $1 -
4\epsilon_l - 6\epsilon$ if the correct answer is $1$, and at most
$4\epsilon_l + 6\epsilon$ if the correct answer is $0$. If the result is $< \frac{1}{2}$, we will add three
HALF players: the first reduces any result (at most $5\epsilon_l$) to
at most $\frac{5\epsilon_l}{2} + 4\epsilon$ (notice that we may add an
additional $4\epsilon$ error from the HALF player), the second reduces
it to at most $\frac{5\epsilon_l}{4} + 6\epsilon$, the third to at
most $\frac{5\epsilon_l}{8} + 7\epsilon$, which is at most
$\epsilon_l$, since $\epsilon \ll \epsilon_l$ . If the result is $>
\frac{1}{2}$, we add a single DOUBLE player, which should give us a
result of at least $1 - \epsilon$ (since the input is very close to 1,
the extra player in the SUM portion of the gadget has to play
0). However, we do collect a small additional error term because of
the LESS used in the CORRECTION player.

We can use the LESS player as an if-statement (as needed above) as
follows: LESS will play one of two strategies with weight close to 1,
the other with weight close to 0. Say $P_1$ is the strategy that will
be played with weight close to 1 ($\geq 1 - \epsilon_l$) if and only
if $v_1 < \frac{1}{2}$, $P_2$ is the strategy played with
weight close to 1 ($\geq 1 - \epsilon_l$) if and only if $v_1 \geq
\frac{1}{2}$. We create the necessary players for both the HALF gadget
and the DOUBLE gadget, but add $P_2$ as the first choice preference
for the three players in the HALF gadget (labeled $H$, $H_2$ and $H_3$
in the gadget description), and add $P_1$ as the first choice
preference for the COPY and SUM players in the DOUBLE gadget (players
$C$ and $S$, but not players $C_1$ and $S_1$). Add one additional
player $\mbox{SUM}(H,D)$, where $H$ is the HALF player and $D$ is the
DOUBLE player (one of the two is playing itself with weight close to
0). To show the correctness of the CORRECTION gadget, consider the
following case analysis, assuming the result we are trying to correct
is value $v \in \{[0,5\epsilon_l), (1-5\epsilon_l, 1]\}$. Call the four players that
make up the DOUBLE gadget $C_1$ (the extra player for the COPY
portion), $C$ (the COPY player), $S_1$ (the extra player for the SUM
portion), and $S$ (the SUM player), and the four players that make up
the HALF gadget $H$, $H_1$, $H_2$ and $H_3$ (as above):
\begin{itemize}
\item[Case 1:] $v \leq 5\epsilon_l$. $C_1$ will play itself
with weight at least $1 - v - \epsilon \geq 1 - 5\epsilon_l -
\epsilon$. $C$ will play $P_1$ with weight at least $1 - \epsilon_l -
\epsilon$. It must play the rest of its weight on the heavily-weighted
$C_1$. $S_1$ will play some amount on the player that has weight $v$
and some on $C$, but must have at least $1 - 2\epsilon_l - 3\epsilon$
left for itself. $S$ will play at least weight $1 - \epsilon_l -
\epsilon$ on $P_1$, and must play the rest of its weight on
heavily-weighted player $S_1$, leaving 0 on itself.

Meanwhile, $H_1$ will play at least $1 - v - \epsilon$ on itself, so
each of $H$, $H_2$ and $H_3$ will use up to within $\epsilon$ of the
weight of $P_2$ (which may be 0), and of the weight of $H_1$ (at least
$1 - v - 2\epsilon$), leaving at most $v + 2\epsilon$ to be divided in
half. As stated above, this remaining amount will be split to within
$\pm 3\epsilon$ across the strategies, so the result will be at most
$\frac{5}{2} \epsilon_l + 4 \epsilon$. Since $\epsilon$ is much
smaller than $\epsilon_l$, the additive $\epsilon$ values with each
iteration of the HALF gadget will be covered by the $\epsilon_l$.

The SUM player in the CORRECTION gadget will return a value at most
the correct sum ($\leq \epsilon_l$ from the previous paragraph) plus
$3\epsilon$.

\item[Case 2:] 
$v \geq 1 - 5\epsilon_l$. $C_1$ will play itself with weight at most
$1 - v + \epsilon \leq 5\epsilon_l + \epsilon$. $C$ will play $P_1$
with weight at most $\epsilon_l + \epsilon$, and will play $C_1$ with
weight at most $5\epsilon_l + 2\epsilon$, leaving at least $1 -
6\epsilon_l + 3\epsilon$ on itself. $S_1$ will try to play at least $1
- 5\epsilon_l - \epsilon$ on the player that has weight $v$ on itself
and at least $1 - 6\epsilon_l - 4\epsilon$ on $C$, which will leave
nothing left for itself. $S$ will play at most $\epsilon_l + \epsilon$
on $P_1$, and at most $\epsilon$ on $S_1$, leaving at least $1 -
\epsilon_l - 2\epsilon$ for itself.

Any errors in the HALF player for this case will be if our player puts
$>0$ weight on the HALF player. However, this will only help to
inflate the final result of the CORRECTION gadget.

The SUM player in the CORRECTION gadget will return a value at least
the correct sum ($\geq 1 - \epsilon_l - 2\epsilon$) minus $3\epsilon$,
or at least $1 - \epsilon_l - 5\epsilon > 1 - 2\epsilon_l$.
\end{itemize}
Using this CORRECTION gadget after each gate, we keep our input values
to within $2\epsilon_l$ of the correct values, as required.
\end{proof}

After the corrections, we're left with the following possible errors
due to the $\epsilon$-approximation. We have small errors in the bit
extraction, which are no larger than the parallel errors
in~\cite{DaskalakisGoldbergPapadimitriou} (they verify that these
small error values will not affect the final result). We also have
small errors (at most $2\epsilon_l$) coming out of the circuit. As
in~\cite{ChenDengTeng,ChenDengTengJACM}, we will repeat the circuit a
polynomial number of times and take the average in order to override
any errors from the LESS gadgets in the bit extraction.  

Taking an average of two results requires 3 steps: first
we divide each ``bit'' in half (we cannot take the average of the
entire values because we have a max value of 1 for any player, so the
average of two 1's would come out to $\frac{1}{2}$). Here, we may pick
up $4\epsilon$ of error for each of the two results. Then, we sum the
two. The total error so far is at most $11\epsilon$. Finally, we take
half of the sum, which also divides the error in half, but may add up
to an additional $4\epsilon$ of error, for a total additional error of
at most $9.5\epsilon$ from taking the average of 2 results.  

We can add CORRECTION gadgets periodically during the averaging and
during the final OR, AND and NOT of the results to keep our total
errors under $2\epsilon_l$. In other words, if this is a solution
vertex for BROUWER, then we will have 6 players, each playing at most
$2\epsilon_l$. If this is not a solution vertex, then at least one of
the 6 players will play at least $1 - 2\epsilon_l$.  \junk{ This does
  not ensure that any fixed point in the BROUWER instance maps to an
  $\epsilon$-equilibrium in the game. However, any
  $\epsilon$-equilibrium in the game does map to a fixed point in the
  BROUWER instance.  }
\junk{
Now, given an $\epsilon$-equilibrium in this game, using the error bounds above, we can trace errors back from the coordinate players to the feedback players and find that each feedback player must be playing a value at most $2\epsilon_l$ on itself. Therefore, the correct value for the feedback player must be 0, so this is a valid fixed point.
}
Suppose we have an $\epsilon$-equilibrium in this game, and the
x-coordinate player is playing value $x$. This is a SUM player, and
the extra player from the SUM gadget must be playing between
$1-x-\epsilon$ and $1-x+\epsilon$. Therefore, the sum of the two
values it is adding (a copy of the coordinate player and the feedback
NOT player) must be between $x-3\epsilon$ (if this player overfills
each of its top stretagies by $\epsilon$) and $x+3\epsilon$ (if this
player underfills each of its top strategies by $\epsilon$). We know
that the copy player must be playing the same value as the coordinate
player to within $2\epsilon$ (between $x-2\epsilon$ and
$x+2\epsilon$). Adding this range to a number $\geq 1 - 2\epsilon_l$
cannot possibly give something in the range $[x-3\epsilon,
x+3\epsilon]$, so the feedback player must be playing a value at most
$2\epsilon_l$ on itself (since we know the feedback player will play
either a value $\leq 2\epsilon_l$ or a value $\geq 1-2\epsilon_l$),
and the correct feedback must be 0, so this is a valid fixed point.

\end{proof}

\junk{
\subsection{Hardness of finding equilibria in fractional BBC games}
\label{sec:BBC}
}Theorem \ref{thm:bgp_ppad} implies that it is \textbf{PPAD}-hard to find an
equilibrium in both fractional BGP and fractional BBC games.  Since it
is \textbf{PPAD}-hard to find a fractional BGP equilibrium, it is natural to
next consider special instances when it might be easier to find an
equilibrium. For instance, in real world internet routing, BGP path
preferences are primarily based on a combination of security
considerations and shortest paths. What would happen if we restrict
ourselves to path preferences that echo the real world? Unfortunately, using only small adjustments to the above hardness proof,
we show that it is \textbf{PPAD}-hard to find an equilibrium even if all preferences are based only on shortest path lengths.
\junk{This analysis can be found in 
\junk{
we next show that it is \textbf{PPAD}-hard to find an equilibrium even if all
path preferences are based only on shortest path lengths.  Proofs of
the following theorems have been deferred to}
Appendix~\ref{app:approximate}. }

\begin{theorem} \label{thm:bgp_shortestPathMetric}
Fractional BGP is \textbf{PPAD}-hard even if each node's preference list
consists of all paths, ordered shortest to longest based on edge
length (where each node defines its own edge lengths, which may not
obey triangle inequality). 
\end{theorem}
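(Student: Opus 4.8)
The plan is to revisit the reduction of Lemma~\ref{BGPinstanceofSimple} and show that the preference relations it produces can be \emph{realized} by node-specific shortest-path lengths, so that no expressive power is lost once we insist that preferences come from edge lengths. Recall that Lemma~\ref{BGPinstanceofSimple} starts from a \gamename\ instance $\mathbf{P}$ (PPAD-hard to solve by Theorem~\ref{thm:exact}) and builds a BGP graph whose only relevant paths at node $i$ are the direct path $P_{ii}=(i,d)$ and the two-hop paths $P_{ij}=(i,j,d)$, with $P_{ik}\prefer_i P_{ij}$ exactly when $k\prefer_i j$. Crucially, the preference games produced by the gadgets of Theorem~\ref{thm:exact} specify, for each player, a \emph{strictly} ordered list of strategies ending in the self strategy, with all unlisted strategies ranked below self; hence the induced BGP preference at each node is a strict total order on the used paths in which $P_{ii}$ sits at the bottom of that order. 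There are no ties to contend with.

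First I would keep the same graph but enlarge $\pi(i)$ to contain \emph{all} simple $i$--$d$ paths, as the theorem demands, and equip each node $i$ with its own length function $\lengthonearg{i}$ chosen to reproduce the strict order. Writing the listed strategies of $i$ in decreasing preference order as $j_1,j_2,\ldots,j_m=i$, I would set $\length{i}{i}{j_t}=t$ for $t<m$, $\length{i}{i}{d}=m$ (so $P_{ii}$ has length $m$), $\length{i}{j}{d}=0$ for every $j\neq i$, $\length{i}{i}{j}=m+1$ for every unlisted $j$, and $\length{i}{j}{k}=|S|^{3}$ for all $j,k\neq i$. Then $P_{ij_t}$ has length $t$, the path $P_{ii}$ has length $m$, a two-hop path to an unlisted node has length $m+1$, and every path of three or more hops (which must traverse some edge $(j,k)$ with $j,k\neq i$) has length at least $|S|^{3}$. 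Thus the shortest-to-longest ordering over all simple paths agrees with the original strict preference on the relevant paths and ranks every newly added path strictly below $P_{ii}$. These lengths are polynomially bounded integers and freely violate the triangle inequality, so the reduction runs in polynomial time.

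The correctness argument rests on the observation that the self path $P_{ii}=(i,d)$ is never capacity-constrained: its only proper suffix starts at the destination $d$, which places no weights, so the tree condition never binds on $P_{ii}$ and node $i$ can always fill it up to total weight $1$. Consequently, in any lexicographically maximal flow, by the time the greedy order reaches $P_{ii}$ the unity condition is already met, and every path ranked strictly below $P_{ii}$---that is, every newly introduced (irrelevant) path---receives weight $0$. Therefore every equilibrium of the length-equipped instance places positive weight only on the original relevant paths and coincides, under the weight map of Lemma~\ref{BGPinstanceofSimple}, with an equilibrium of $\mathbf{P}$; conversely every equilibrium of $\mathbf{P}$ extends, by zeros on the irrelevant paths, to an equilibrium of the length-equipped instance. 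Hardness is transported intact.

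The main obstacle I anticipate is exactly this last step: enlarging each $\pi(i)$ from the small relevant set to the full set of simple paths could, a priori, create spurious equilibria or let longer paths absorb weight and destabilize the gadgets. The heart of the proof is therefore the ``always-fillable self path ranked above all irrelevant paths'' argument that quarantines the extra paths at weight $0$. Two secondary points deserve care: first, that the gadget preference games really are strict, so that realizing them by distinct lengths does not shrink the equilibrium set---a distortion that could otherwise arise when a genuine tie is broken by an arbitrary length; and second, for the approximate statement inherited from Theorem~\ref{thm:bgp_ppad}, that the $\epsilon$-slack cannot let an irrelevant path accumulate more than $\epsilon$ weight, which again follows because $P_{ii}$ is filled first and the inverse-polynomial gap structure of the construction is left untouched by the addition of the length functions.
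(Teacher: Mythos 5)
Your proposal is correct and takes essentially the same route as the paper: translate the preference-game hardness into the two-hop BGP instance and realize each node's preference list by node-specific edge lengths so that every path not in the original gadget list (in particular every path of three or more hops) is strictly longer than the direct path $(i,d)$, which quarantines the irrelevant paths at weight zero. The only differences are cosmetic---the paper assigns constant lengths separately for the three gadget list forms $(UVT,UT)$, $(UVT,UWT,UT)$, $(UVT,UWT,UZT,UT)$, whereas you give a uniform position-based assignment for arbitrary strict lists and spell out the ``always-fillable direct path'' argument that the paper leaves implicit.
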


\begin{proof}
We will implicitly translate the proof of Theorem \ref{thm:bgp_ppad} to a corresponding proof for BGP, by assuming a destination node $T$, each preference by player $V$ for a player $U$ is now a preference for a path $(V \rightarrow U \rightarrow T)$ (abbreviated $(VUT)$), and each preference by player $V$ for ``self'' is now a preference for path $(V \rightarrow T)$ (abbreviated $(VT)$).
We will add a set of edge lengths for each node in the
gadgets such that the preferences in the gadget
definitions follow shortest path distances according to the specified
lengths.

For each node $U$ used in each of the gadgets, the preference list is
of the form ($UVT$, $UT$), ($UVT$, $UWT$, $UT$), or ($UVT$, $UWT$, $UZT$, $UT$)
(the last is only for the HALF player in the CORRECTION gadget). For
preferences of the first form, we will assign edge lengths $l(UV) = 1,
l(VT) = 1$, and all other lengths are 3. Clearly, to get to $T$
through any node other than $V$, the cost will be greater than 3, so
the direct path will be preferred. The distance via $V$ is 2, so this
will be preferred over the direct path.  For preferences of the second
form, we will assign edge lengths $l(UV) = 1, l(VT) = 1, l(UW) = 2,
l(WT)=1$, and all other lengths are 4. Clearly, the preferences for
the 3 paths in the list will be correctly ordered based on
distance. Any path involving a node other than $V$ or $W$ will have
length greater than 4. Edges $VW$ and $WV$ both also have length 4, so
any path to T that uses $V$ or $W$ (that is preferred over the direct
path) cannot include both $V$ and $W$. This leaves only the paths in
the original preference list. For preferences of the third form, we
will assign edge lengths $l(UV) = 1, l(VT) = 1, l(UW) = 2, l(WT)=1,
l(UZ)=3, l(ZT)=1$, and all other lengths are 5. Similar reasoning
shows that this preserves the preference list.
\end{proof}

\begin{theorem}\label{thm:edge_lengths}
Fractional BGP is \textbf{PPAD}-hard even if all preferred paths are
preference-ordered based on the path length (where each node defines
its own distances on the edge lengths, and these distances form a
metric and obey triangle inequality), assuming we may only use edges
from a given template graph.
\end{theorem}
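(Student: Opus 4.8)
The plan is to adapt the reduction behind Theorem~\ref{thm:bgp_shortestPathMetric}, reusing the same gadget players and the same three families of preferred paths ($UVT$; $UVT, UWT, UT$; and $UVT, UWT, UZT, UT$), and to replace the ad hoc assignment ``all other lengths are $3/4/5$'' --- which is exactly the step that broke the triangle inequality --- by a genuine per-node metric together with a restricted template graph. The one structural change forced on us is this: in any metric, the triangle inequality gives $d_U(U,T) \le d_U(U,V) + d_U(V,T)$, so a one-hop path $(UT)$ can never be strictly longer than a detour $(UVT)$. Hence the least-preferred ``self'' option can no longer be realized as the direct edge $(U,T)$. I would therefore delete every direct edge $(U,T)$ from the template graph and realize the self option as a dedicated two-hop detour $(U, s_U, T)$ through a fresh auxiliary node $s_U$, tuned to be the longest admissible path from $U$.

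Next I would supply the metrics. Since each node defines its \emph{own} distances and only the edge set (the template) is shared, I would let $d_U$ be the shortest-path metric of a private weighted graph $G_U$: this is automatically a metric and obeys the triangle inequality by construction, removing the obstruction of the previous proof. In $G_U$ I would give small, carefully separated weights to the handful of edges appearing in $U$'s intended paths --- e.g.\ $d_U(U,V)=1,\ d_U(V,T)=1$ for the top choice, larger values for $UWT$ and $UZT$, and the largest for the self-detour $(U, s_U, T)$ --- and a uniformly large weight to every other edge. Taking the shortest-path closure then yields genuine metric distances in which $\mathrm{len}(UVT) < \mathrm{len}(UWT) < \mathrm{len}(UZT) < \mathrm{len}(U s_U T)$, reproducing exactly the preference order of the gadget, while any template path that wanders through an ``irrelevant'' edge is driven to large length and becomes least preferred. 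The auxiliary nodes $s_U$ are configured to route a full unit straight to $T$ (their unique preferred path is $(s_U, T)$), so the suffix $(s_U,T)$ offers capacity $1$ and is used by no other player; by the tree condition the self-detour then behaves, capacity-wise, exactly like the deleted direct edge, and the party-game/BGP semantics established for Theorem~\ref{thm:exact} carry over unchanged.

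The verification splits into three parts: (i) that among the \emph{template} paths from each gadget node the desired strict ordering holds and no spurious path is shorter, handled by the separation of the tuned weights against the large weight on all other edges; (ii) that no unintended template path becomes the \emph{most} preferred one, handled because cross-gadget edges (edges among other players' nodes) receive large $d_U$-weight in $U$'s private view, so routing through them is never short; and (iii) that each $d_U$ is a metric, which is immediate from the shortest-path construction. Combining these with the feasibility and stability arguments already in place, the resulting fractional BGP instance has the same equilibria as the \gamename{ }of Theorem~\ref{thm:bgp_ppad}, so finding one remains \textbf{PPAD}-hard.

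The main obstacle I anticipate is reconciling two competing global constraints inside each single metric $d_U$: the triangle inequality must hold over \emph{all} pairs of vertices, yet we need strict length gaps between the small set of relevant paths and we must prevent the template from ever offering a shortcut that reorders the preferences or supplies an unwanted most-preferred path. The two levers that make this feasible are the per-node privacy of the distance functions (letting $d_U$ inflate every edge $U$ does not care about) and the deletion of the direct destination edges (so that ``self'' is genuinely the longest admissible route rather than an impossibly long single edge). Getting the numeric separations to survive the shortest-path closure simultaneously for all three path families, and confirming that the auxiliary self-detours do not interfere with one another, is where the care is needed.
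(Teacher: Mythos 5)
Your core idea is the same as the paper's: recognize that under a triangle inequality the direct edge $(U,T)$ can never be the least-preferred option, so delete all direct edges to $T$ from the template and realize each player's ``self'' option as a two-hop path through a fresh private auxiliary node (your $s_U$, the paper's $U'$), then choose per-node edge lengths so that the gadget preference orders coincide with path lengths. However, as literally written your construction is internally inconsistent in a way that matters. You say you reuse the two-hop cross-player paths $UVT$ while deleting every direct edge $(V,T)$ --- but then the path $UVT$ no longer exists in the template. The cross-player paths must themselves be rerouted through the auxiliary nodes, i.e.\ $U$'s path through $V$ must become $(U,V,s_V,T)$, precisely so that it has $V$'s self-detour $(V,s_V,T)$ as a proper suffix; this suffix relation is what makes the tree condition enforce $w_U(V)\le w_V(V)$, which is the entire capacity semantics of the reduction from the \gamename. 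Your claim that the suffix $(s_U,T)$ is ``used by no other player'' contradicts this: if no other player's path ends in $U$'s self-detour, the coupling is lost and the reduction fails.

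Second, you never explicitly forbid edges from other players into $s_U$, and the template must do so. If the edge $(V,s_U)$ existed, then since $V$'s distances form a metric, $d_V(V,s_U)\le d_V(V,U)+d_V(U,s_U)$, so the path $(V,s_U,T)$ would be forced to be preferred at least as much as $(V,U,s_U,T)$; that bypass path evades $U$'s capacity constraint entirely. The paper makes exactly this point when it removes all edges into $U'$ except the one from $U$. Finally, a smaller omission: because every path gains an extra hop, the auxiliary nodes introduce an additional additive $\epsilon$ into the approximate-equilibrium analysis inherited from Theorem~\ref{thm:bgp_ppad}, which must be (and in the paper is) absorbed by the $\epsilon_l$ slack of the LESS gadget. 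Your shortest-path-closure device for producing the metrics is a clean alternative to the paper's hand-verified explicit lengths, but the three issues above --- especially the first two --- are the load-bearing details, and your write-up either contradicts or omits them.
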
 

\begin{proof}
As in the proof of \ref{thm:bgp_shortestPathMetric}, we will implicitly translate the proof of Theorem \ref{thm:bgp_ppad} to a corresponding proof for BGP, by assuming a destination node $T$, each preference by player $V$ for a player $U$ is now a preference for a path $(VUT)$, and each preference by player $V$ for ``self'' is now a preference for path $(VT)$.

We will add a set of edge lengths for each node in the
gadgets such that the preferences in the gadget
definitions follow shortest path distances according to the specified
metrics.

First, we will replace each direct path with a 2-hop path, by adding
an extra node (whose only preference is for its own direct path). In
other words, we will replace any path of the form $UT$ with a path of
the form $UU'T$. We will replace any use of a direct path, such as
$VUT$, with a use of the modified path: $VUU'T$. We will remove all
other edges straight to $T$ from the template graph, and we remove all
edges into a new node $U'$ except the edge from $U$. Removing edges
straight to $T$ is necessary because a preference list ($VUT$, $VT$)
does not obey triangle inequality for any metric. However, the list
($VUU'T$, $VV'T$) is a valid preference list if $V$ uses the following
edge lengths: $l(VU) = 1, l(UU') = 1, l(U'T)=1, l(VV')=2, l(V'T)=2$
(assuming $VT$ is not allowed). Removing other edges into $U'$ is
necessary because otherwise any path $VU'T$ would have to be preferred
at least as much as $VUU'T$.

Now, for each node $U$ used in each of the gadgets, the preference
list is of the form ($UVT$, $UT$), ($UVT$, $UWT$, $UT$), or ($UVT$, $UWT$, $UZT$,
$UT$). With the new additional nodes, each node now has a preference
list of the form ($UVV'T$, $UU'T$), ($UVV'T$, $UWW'T$, $UU'T$), or ($UVV'T$,
$UWW'T$, $UZZ'T$, $UT$). For the first type of preference list, we will
define the length of each leg of the most preferred path to be 1, the
length of each leg of the second path to be 2, and any other edge in
the graph has length 3. It is easy to verify that these lengths obeys
triangle inequality and give the required preference order. For the
second type of preference list, we will assign edge lengths $l(UV) =
2, l(VV') = 1, l(V'T) = 1, l(UW) = 2, l(WW') = 2, l(W'T)=1, l(UU') =
3, l(U'T) = 3$. In order to ensure triangle inequality, set $l(VW) =
l(WV) = 4$. The rest of the edges in the graph had length 5 (so any
path to the root containing any node other than $U,U',V,V',W and W'$
has length at least 10). In the smaller graph containing only $U, U',
V, V', W, W'$, the paths to the root that haven't been included in the
preferences list or specifically excluded by restricting the edges are
$UVWW'T$ (which has length 9) and $UWVV'T$ (which has length 8) - both
are longer than any path in the preference list. For the third type of
preference list, we will assign edge lengths $l(UV) = 3, l(VV') = 1,
l(V'T) = 1, l(UW) = 3, l(WW') = 2, l(W'T)=1, l(UZ) = 2, l(ZZ') = 3,
l(Z'T) = 2, l(UU') = 4, l(U'T) = 4$. In order to ensure triangle
inequality, set $l(VW) = l(WV) = 6$, $l(VZ) = l(ZV) = l(WZ) = l(ZW) =
5$. The rest of the edges in the graph have length 5.

Since we've added an additional edge to every path, this construction
adds up to $\epsilon$ error for each player in the proof of theorem
\ref{thm:bgp_ppad}. However, these errors will still be overpowered by
the $\epsilon_l$ errors from our LESS gadget, so the proof could
easily be adjusted to compensate.
\end{proof}

Notice, if any edge may be used, and if the preferences are based on
shortest path lengths for a metric defined for each node, then there
is a trivial algorithm for finding an equilibrium: each node only
follows the ``direct to destination'' path. Since a metric must obey
triangle inequality, this path length cannot be strictly longer
(cannot be less preferred) than any path including additional nodes.

\junk{
\begin{theorem}Fractional BGP is in \textbf{PPAD}
\end{theorem}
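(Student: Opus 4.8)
The plan is to place the fractional BGP equilibrium problem in \textbf{PPAD} by a polynomial-time reduction to BROUWER, the problem of computing an approximate fixed point of a polynomially-computable, Lipschitz-continuous self-map of a compact convex domain, which is itself in \textbf{PPAD}~\cite{Papadimitriou94,DaskalakisGoldbergPapadimitriou}. By Theorem~\ref{thm:equivalence}, a profile $w=\{w_v\}$ is an equilibrium exactly when every $w_v$ is lexicographically maximal with respect to $w_{-v}$; hence it suffices to build one continuous map whose fixed points are precisely such profiles. I would take as domain the product $\Delta=\prod_{v\ne d}\Delta_v$ of the per-player unity polytopes $\Delta_v=\{w_v\ge 0:\sum_{P\in\pi(v)}w_v(P)\le 1\}$, which is compact, convex, and of dimension $\sum_v|\pi(v)|$, polynomial in the input size. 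The coupling \emph{tree} constraints are \emph{not} baked into $\Delta$; instead, for each $v$ and each profile $w$ they cut out a subpolytope $\Delta_v(w_{-v})\subseteq\Delta_v$ (the feasible $w_v$ given the capacities supplied by $w_{-v}$), which varies continuously with $w_{-v}$.

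The first ingredient is to linearize the lexicographic objective. I would assign to the paths of preference rank $r$ (out of at most $m$ ranks) a common coefficient $c_r=(m{+}1)^{m-r}$, so that maximizing the single linear functional $\langle c,\cdot\rangle$ over any subpolytope of $\Delta_v$ returns a \emph{lexicographically maximal} flow: because every per-rank weight lies in $[0,1]$ and there are at most $m$ ranks, the geometric gaps guarantee that higher-priority cumulative sums dominate, so the linear optimum coincides with the lexicographic optimum in the sense of Section~\ref{sec:definitions.BGP}. These coefficients have $O(m\log m)$ bits and so are specified in polynomial size.

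The second ingredient is a projection-based fixed-point map that sidesteps the discontinuity of best response. I would define $F:\Delta\to\Delta$ by $F(w)_v=\Pi_{\Delta_v(w_{-v})}\!\big(w_v+\eta\,c\big)$, where $\Pi$ is Euclidean projection and $\eta>0$ is a small constant. Projection onto a convex set is single-valued and nonexpansive, and since $\Delta_v(w_{-v})$ moves continuously (indeed Lipschitz) with $w_{-v}$, the map $F$ is continuous with a polynomially bounded Lipschitz modulus and is computable in polynomial time (each coordinate is a convex quadratic program with polynomial-size rational data). The key point is the fixed-point characterization: $w_v=\Pi_{\Delta_v(w_{-v})}(w_v+\eta c)$ holds iff $\langle c,y-w_v\rangle\le 0$ for all $y\in\Delta_v(w_{-v})$, i.e.\ iff $w_v$ maximizes $\langle c,\cdot\rangle$ over $\Delta_v(w_{-v})$. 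Thus the fixed points of $F$ are exactly the profiles in which each $w_v$ is lexicographically maximal with respect to $w_{-v}$ and jointly feasible (feasibility of the whole profile follows at a fixed point, where $w_v$ respects the capacities from $w_{-v}=F(w)_{-v}$). By Theorem~\ref{thm:equivalence} these are precisely the BGP equilibria.

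It remains to discretize and to recover an exact solution, following the standard framework~\cite{DaskalakisGoldbergPapadimitriou,ChenDengTeng,ChenDengTengJACM}: grid $\Delta$ at a resolution determined by the (polynomial) Lipschitz constant of $F$, so that an approximately-panchromatic simplex, equivalently an approximate fixed point $\tilde w$ with $\|F(\tilde w)-\tilde w\|$ small, is a valid BROUWER solution and hence obtainable in \textbf{PPAD}. Such a $\tilde w$ is an $\epsilon$-equilibrium in the sense of Section~\ref{sec:approximate}; and invoking the existence of \emph{rational} equilibria (Section~\ref{sec:personalized}), a sufficiently accurate $\tilde w$ identifies the combinatorial pattern of tight unity and tree constraints, after which the exact equilibrium is the solution of the induced rational linear system. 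I expect the main obstacle to be exactly the continuity and single-valuedness of the lexicographic best response, which in general is a discontinuous, set-valued correspondence; the projection map $F$ resolves this by converting ``best response'' into the variational-inequality condition $\langle c,\cdot-w_v\rangle\le 0$ while remaining single-valued and Lipschitz. The secondary technical care is verifying that $\Delta_v(w_{-v})$ depends continuously on $w_{-v}$ so that $F$ is genuinely continuous on all of $\Delta$, including boundary profiles where some capacities vanish and the subpolytope drops dimension.
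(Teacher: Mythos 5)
You should know at the outset that the paper contains no actual proof of this statement: the theorem and its proof appear only inside the paper's disabled \emph{junk} environment, the proof fragment goes no further than restating the definition of the END OF THE LINE problem, and the concluding remarks explicitly list \textbf{PPAD}-membership (for personalized equilibria) as open. So your proposal is being compared against an intention, not an argument, and it is considerably more developed than anything in the source. Your fixed-point characterization is the right kind of idea: the variational-inequality equivalence $w_v=\Pi_{\Delta_v(w_{-v})}(w_v+\eta c)$ iff $w_v$ maximizes $\langle c,\cdot\rangle$ over $\Delta_v(w_{-v})$ is correct, and combined with Theorem~\ref{thm:equivalence} it does identify fixed points of $F$ with equilibria, provided the linearization step is sound.

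Two genuine gaps remain. First, your justification of the linearization is wrong as stated: on the Brouwer domain, $w_{-v}$ ranges over arbitrary reals, so the capacities $w_u(S)$ have no bounded bit-size, and a coefficient ratio of $m+1$ cannot dominate trade-offs occurring at scales below $1/(m+1)$ (a player could sacrifice a tiny amount $\delta\ll 1/(m+1)$ of high-rank prefix weight to gain up to $c_{r+1}$ worth of lower-rank weight, and the linear objective would prefer it). The conclusion is salvageable, but for a structural reason you do not invoke: the sets $\pi(v,S)$ form a laminar family, so the feasible prefix-sum vectors $(W_1,\ldots,W_m)$ admit a coordinatewise dominant maximizer (this is what the greedy best-response construction behind Theorem~\ref{thm:equivalence} provides), and once a dominant maximizer exists, \emph{any} strictly decreasing positive coefficients make the linear optima coincide exactly with the lexicographic maxima. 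Second, and more seriously, the exact-recovery step fails: a BROUWER solution only guarantees a small residual $\|F(\tilde w)-\tilde w\|\le\epsilon$, and an approximate fixed point in this sense need not lie near \emph{any} exact fixed point --- the map can have inverse-exponentially small displacement over regions far from every equilibrium. Hence "a sufficiently accurate $\tilde w$ identifies the combinatorial pattern of tight constraints" is unjustified; you would need an explicit error bound (every point with residual $2^{-\mathrm{poly}(n)}$ is $2^{-\mathrm{poly}(n)}$-close to an equilibrium, with certified constants, e.g.\ via a semialgebraic error-bound argument over the exponentially many rational LPs of Theorem~\ref{thm:rational}), and generic Brouwer machinery does not supply this. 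As written, your argument places the $\epsilon$-equilibrium problem in \textbf{PPAD}, which is weaker than the stated theorem. (A minor quantitative slip: projection onto a polytope whose right-hand sides vary is, via Hoffman-type bounds, in general only H\"older-$\frac{1}{2}$ in the parameters rather than Lipschitz; this is harmless for the discretization but should be corrected.)
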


\begin{proof}
Reduction from fractional BGP to ``end of the line'' (the classical
\textbf{PPAD}-complete problem).

An `` end of the line'' instance consists of a graph with
exponentially (but finitely) many nodes. Each node has in-degree at
most 1 and out-degree at most 1. We are given two polynomial-time
computable functions for each node $v$: \emph{successor(v)}, which
returns the node pointed to by $v$ (or null if $v$ is a sink), and
\emph{predecessor(v)}, which returns the node that points to $v$ (or
null if $v$ is a source). We are also given a source node $s$. A
solution is another node $t \neq s$ such that $t$ is either a source
or a sink.

\end{proof}
}

\section{Existence and Rational Solutions via Personalized Equilibria} ~\label{sec:personalized}
We introduce a new notion of an equilibrium for
matrix games based on min-cost flows.  Because the flow-based payoff
functions enable each player to individually match her distribution to
her opponents' distributions, we call this a {\em personalized
equilibrium}.  We study the structural properties of personalized
equilibria and analyze the complexity of finding such an equilibrium.
We show that both the fractional BGP game and the fractional BBC game
are special cases of matrix games in which players seek a personalized
best response.

\def\R{\mathbb{R}}
\def\Reals#1{\mathbb{R}^{#1}}
\def\payoff#1{\mbox{\rm Payoff (#1)}}
We first define personalized equilibria for two player games.  We then
extend it to multi-player games, including multi-player games with
succinct representations.  Consider a matrix game $(R,C)$ between two
players ROW and COLUMN, in which player ROW has strategies $r_1, r_2,
\ldots, r_{m}$ and player COLUMN has strategies $c_1, c_2, \ldots,
c_{n}$. $R \in \Reals{m\times n}$ is the payoff matrix of ROW, and $C
\in \Reals{m\times n}$ is the payoff matrix of COLUMN.

Like a standard bimatrix game, if player ROW selects $r_{i}$ and
  player COLUMN selects $c_{j}$, the payoff to ROW is $R[i,j]$ and the
  payoff to COLUMN is $C[i,j]$.  Suppose ROW selects a distribution
  $x$ among the strategies $\{r_1, r_2, \ldots, r_{m}\}$, and COLUMN
  selects a distribution $y$ among $\{c_1, c_2, \ldots, c_{n}\}$.
  Unlike payoffs defined for mixed strategies, in which the payoff to
  ROW is $\sum_{i,j} x[i]y[j] R[i,j]$ and the payoff to 
  COLUMN is $\sum_{i,j} x[i]y[j] C[i,j]$, we define the
  payoffs using flows.  The payoff to ROW is:
\vspace{-2mm}
\begin{align}
\payoff{ROW}  = & \quad  \max_{u_{i,j}} \sum_{i,j} u_{i,j} R[i,j] \label{Eqn:ROW}\\
 & \quad \mbox{\bf subject to  } 
  \sum_{j} u_{i,j} = x[i], \quad  \forall i \quad \mbox{and} \quad 
\sum_{i} u_{i,j} = y[j], \quad  \forall j; \nonumber \\
\payoff{COLUMN}  = & \quad  \max_{v_{i,j}}\sum_{i,j} v_{i,j} C[i,j]   \label{Eqn:COLUMN}\\
 & \quad \mbox{\bf subject to  } 
  \sum_{j} v_{i,j} = x[i], \quad  \forall i \quad \mbox{and} \quad 
  \sum_{i} v_{i,j} = y[j], \quad  \forall j. \nonumber
\end{align}
In other words, $\payoff{ROW}$ is the cost of a 1-unit min-cost
  flow from source $r$  to destination $c$ in 
  the directed graph $G_{R}  = (V_{R}, E_{R})$,
  with 
\vspace{-2mm}
\begin{eqnarray*}
 V_{R} & = & \{r,c, r_1, r_2, \ldots, r_{m}, c_1, c_2, \ldots,
  c_{n}\} \\
E_{R} & =  &\{(r \rightarrow r_{i}), \ \forall i\} \cup 
             \{ (r_{i}\rightarrow c_{j}), \ \forall i, j\} \cup 
              \{(c_{j} \rightarrow c),\ \forall j \},
\end{eqnarray*}
where the capacity of edge $(r \rightarrow r_{i})$ is $x[i]$, the
capacity of edge $(c_{j} \rightarrow c)$ is $y[j]$, and the capacity
of all other edges is $+\infty$.  The cost of edge $(r_{i}\rightarrow
c_{j})$ is $-R[i,j]$, and the cost of all other edges is 0.  We note
that for any distributions $x$ and $y$, a unit-flow from $r$ to $c$
always exists, so the above payoff function is well-defined.

Similarily, $\payoff{COLUMN}$ is the cost of a 1-unit minimum-cost
  flow from source $c$  to destination $r$ in the directed graph $G_{C}  = (V_{C}, E_{C})$,
  with
\begin{eqnarray*}
 V_{C} & = & \{r,c, r_1, r_2, \ldots, r_{m}, c_1, c_2, \ldots,
  c_{n}\} \\
E_{C} & =  &\{(c \rightarrow c_{j}), \ \forall j\} \cup 
             \{ (c_{j}\rightarrow r_{i}), \ \forall i, j\} \cup 
              \{(r_{i} \rightarrow r),\ \forall i \},
\end{eqnarray*}
where the capacity of edge $(c \rightarrow c_{j})$ is $y[j]$,
the capacity of edge $(r_{i} \rightarrow r)$ is $x[i]$, and 
the capacity of all other edges is $+\infty$. The cost of edge $(c_{j}\rightarrow r_{i})$ is $-C[i,j]$, and
  the cost of all other edges is 0.


Because there is no condition such as $u[i,j] = v[i,j]$ in
   Eqn. (\ref{Eqn:ROW}), (or in the payoff function for COLUMN) 
  each player can individually choose the best way to match 
  the distributions.
We therefore refer to these payoff functions as \emph{personalized payoff
  functions}, and we call an equilibrium for the game with
   these payoffs a \emph{personalized equilibrium}.
Using personalized payoffs, 
  each player plays a distribution across her strategy space and
\emph{chooses} how to combine it with the strategy distributions of the other players.

In addition to the fractional BGP and BBC games, this concept of personalized equilbria is inspired by the correlated equilibrium of Aumann (\cite{Aumann}).
Recall that the correlated payoff function requires $u_{i,j} = v_{i,j}$
  in  Eqn. (\ref{Eqn:ROW}) and Eqn. (\ref{Eqn:COLUMN}),
  but relaxes Nash's condition of $u_{i,j} = v_{i,j} = x[i]y[j]$.
We are considering payoff functions (personalized payoff functions) which futher relax this by removing $u_{i,j} = v_{i,j}$.

\junk{
The personalized payoff function may help us to capture some
  real-world applications.
For example, imagine a good made up of two type of
  widgets, to be assembled and sold in two locations. 
The manager at each location is in charge of one type of widget. 
She  must decide on
  the details of the widget - perhaps the color, weight, shape, etc.,
  but she has no control over the decisions about the other widget. 
She sends copies of her  widget to the other location and receives copies of the other
  widget from the other location manager. 
Then she assembles the goods in whatever manner will best sell at her
location. 
For instance, the customers in one location may 
 prefer multi-colored goods, so the manager in that
  location mixes red copies of the first widget with green copies of
  the second widget (and vice versa), while the customers in the other
  location might prefer solid colors, so the manager in that
  location mixes red with red and green with green. 
In a sense, this creates an ``everyone wins'' scenerio. 

}

One can extend the personalized payoff functions to multi-player
matrix games.  Suppose we are given a $k$-player matrix game $G$, with
$S_i$ being a set of $m_i$ strategies for player $i$, $1 \le i \le
k$, and $u_i: \prod_j S_j \rightarrow {\mathbb R}$ being the utility
function for player $i$.  As in a mixed strategy, each player $i$
chooses a probability distribution $p_i: S_i \rightarrow [0,1]$ over
the strategies in $S_i$.  Given $p_1$, \ldots, $p_k$, the personalized
payoff function for player $i$ is computed as follows.  Construct a
hypergraph $H_i$ with $V = \cup_j S_j$ as the set of nodes and $E =
\prod_j S_j$ as the set of hyperedges.  Consider a (fractional
hypergraph) matching defined by an assignment $w_i : E \rightarrow
{\mathbb R}$ of weights to each hyperedge that satisfies the condition
that the sum of weights of all hyperedges adjacent to any strategy $s
\in S_j$ (for any $j$) equals $p_j(s)$.  Define the weight of
matching $w_i$ as $\sum_{e \in E} w_i(e) u_i(e)$.  The payoff to
player $i$ is then simply the cost of the maximum-weight matching in
$H_i$.

\junk{Suppose there are $k$ players.  Let $S_i$ denote the strategy set of
the $i$th player, $1 \le i \le k$.  We have one node for each possible
strategy for each player, with a $k$-way hyperedge connecting each set
of one strategy per player.  Each hyperedge (each set of $k$ strategy
nodes, one per player) has an associated payoff for each player. We
can also view the negative of the payoff as a cost to each player for
each hyperedge she uses.

The probability
with which a player chooses a strategy can be viewed as a capacity of
the strategy node.  Each player's objective is to minimize the cost of
the 1-unit minimum-cost hypergraph matching using its own costs
(payoffs) for the hyperedges.  A hypergraph matching is an assignment
of weight to each hyperedge such that the total weight of all edges
adjacent to a node is at most the capacity of the node.  In a
personalized equilibrium, each player's chosen distribution, or
capacities on her own strategies, give an optimal matching for that
player, given the capacities chosen by the other players.
}

The concept of personalized equilibria is extendible to games with succinct
representations such as graphical games \cite{KearnsLittmanSingh}
and multimatrix games \cite{Yanovskaya}. It can also be viewed as a relaxation of correlated
equilibrium, as mentioned above.

\junk{
., 
In a graphical game, each player's
payoff depends only on her own distribution and the distributions of a
small number of other players (her ``neighbors'' in a graph). We can
view a hyperedge used by a player as only connecting a strategy of
that player to one strategy of each of her neighbors, or we can still
consider a hyperedge as connecting one strategy for each player, but
her payoff is identical, regardless of which strategies are used from
the non-neighboring players. We can also consider personalized
equilibria in pairwise games, as in \cite{???}. In a pairwise game,
each player's payoff is a (possibly-weighted) average of the payoffs
as if she is playing a separate two-player game against each other
player. The input to a pairwise game is $\binom{n}{2}$ pairs of
matrices: each gives the payoffs for one two-player game.  In fact,
for matrix games with more than 2 players, there are several possible
personalized payoff functions.  We discuss them in the full version of
this paper.
}

\newcommand{\mInstance}{$\textbf{M}$}
\begin{theorem}
\label{thm:BGPPersonalized}
Finding an equilibrium in the fractional BGP game can be reduced to
finding a personalized equilibrium in a matrix game.
\end{theorem}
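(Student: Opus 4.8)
The plan is to build, from a given fractional BGP instance, a multi-player matrix game (in fact a graphical game, to keep it of polynomial size) whose personalized equilibria are in polynomial-time-computable correspondence with the Nash equilibria of the BGP game; by Theorem~\ref{thm:equivalence} these are exactly the fractional stable paths solutions. I would take one player for every node $v \neq d$. Player $v$'s pure strategy set is $\pi(v)$ augmented with a single ``null'' strategy $\bot_v$ that absorbs unused capacity, so that a distribution $p_v$ over this set corresponds to a BGP weight function $w_v$ via $w_v(P) = p_v(P)$ and $p_v(\bot_v) = 1 - \sum_{P} w_v(P)$. The unity condition $\sum_P w_v(P) \le 1$ is then automatic from $p_v$ being a distribution. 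Reading off $\{w_v\}$ from the distributions $\{p_v\}$ of a personalized equilibrium yields the candidate BGP solution, and the whole content of the proof is to choose the payoffs so that this candidate is feasible and lexicographically maximal exactly when each $p_v$ is a personalized best response.

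Preferences I would encode by the standard lexicographic-to-linear device: assign the hyperedges that represent ``$v$ routes its flow along path $P$'' a reward scaling geometrically in the preference rank of $P$ under $\prefer_v$ (say proportional to $M^{\mathrm{rank}_v(P)}$ for a sufficiently large integer $M$ that is still polynomially describable), so that maximizing the max-weight-matching (equivalently min-cost-flow) personalized payoff of $v$ is equivalent to selecting a lexicographically maximal feasible flow. Because $|\pi(v)|$ and the weight range are polynomial, the gap between successive lexicographic levels can be made to dominate the total reward of all lower levels, and the resulting payoffs remain rational.

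The crux is to realize the tree condition $\sum_{P \in \pi(v,S)} w_v(P) \le w_u(S)$ through the marginal constraints of the personalized matching. I would exploit the fact that a hyperedge of player $v$'s matching selects exactly one strategy from each player (including $v$ itself): a single ``good'' hyperedge can therefore name $v$'s own path $P$ together with, for each downstream node $u$ lying on $P$, the suffix-path $S$ of $u$ on which $P$ piggybacks. Only hyperedges that are consistent in this sense carry the positive path-reward; every other way of routing weight on $P$ uses a ``bad'' hyperedge with a large negative cost. Since the marginal incident to $u$'s strategy $S$ in $v$'s matching is forced to equal $p_u(S) = w_u(S)$, player $v$ can route at most $w_u(S)$ of its flow through the good hyperedges at level $S$; any distribution $p_v$ that places more than $w_u(S)$ on paths sharing suffix $S$ forces the excess onto bad hyperedges and is therefore never a best response. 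Thus feasibility is enforced endogenously rather than as a hard constraint on distributions, and it is enforced simultaneously at all levels of the suffix tree because one hyperedge is incident to all of the suffix-strategies along $P$ at once.

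I expect the main obstacle to be precisely this last point: verifying that the nested, multi-level family of tree inequalities is simulated faithfully and that no player can profit from a distribution that covertly violates feasibility at some level (e.g.\ by exploiting slack at a shared suffix, or by mixing good and bad hyperedges advantageously). Making the bad-hyperedge penalties large enough to rule out every such deviation, while keeping the reward hierarchy intact for the lexicographic encoding, is the delicate calculation. Once that simulation is established, the correspondence is immediate in both directions: a personalized best response of $v$ maximizes the linear objective over exactly the feasible flows, hence is lexicographically maximal, so personalized equilibria of the constructed game map to (and from) stable paths solutions. Casting the construction as a graphical game---player $v$'s payoff depending only on $p_v$ and on the $p_u$ for nodes $u$ appearing on $v$'s paths---keeps it polynomial-size, giving the desired polynomial-time reduction; and since a rational personalized equilibrium always exists (Section~\ref{sec:personalized}), this also re-derives the existence of a rational BGP equilibrium.
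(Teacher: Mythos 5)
Your construction is, at its core, the same as the paper's: one player per BGP node, strategy set $\pi(v)$ plus a null strategy, the read-off $w_v(P)=p_v(P)$, and positive reward only on hyperedges that couple $P$ with all of its proper suffixes, so that the matching marginals $p_u(S)=w_u(S)$ enforce the tree condition endogenously across all levels of the suffix tree at once --- this is exactly the paper's mechanism. Two of your deviations merit comparison. First, you charge a large negative cost on inconsistent hyperedges where the paper gives them payoff $0$; your variant is arguably cleaner for the direction the theorem actually needs, since with zero payoffs a player is indifferent between parking unmatchable mass on a path (routed through bad hyperedges) and on the null strategy, so some personalized equilibria read off to tree-condition-violating weight vectors unless one argues the excess can be shifted to ``no path''; your penalties exclude such equilibria outright. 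Second --- and this is the one weak point --- you justify the lexicographic encoding by geometric weight separation (``the gap between successive levels dominates the total reward of all lower levels''). That argument is sound when traded amounts are bounded below, as for integral weights, but here the weights are continuous: a linear optimum could in principle sacrifice an amount $\delta < |\pi(v)|/M$ of high-preference mass, and no fixed $M$ rules this out. What saves the claim is the structure of the feasible region, not the size of the gaps: the suffix constraints form a laminar (tree) family, so there is a single feasible flow maximizing every preference-prefix sum simultaneously, and an exchange argument then shows that \emph{any} strictly preference-monotone positive weights --- the paper simply uses $q_v(P)+1$, with no geometric growth --- make linear maximization coincide with lexicographic maximality. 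This is precisely the ``preference lists can be replaced by preference weights'' lemma in the paper's proof; your write-up needs that lemma (or the paper's swap argument) in place of the separation heuristic, after which the rest of your argument, including the graphical-game framing that keeps the reduction polynomial-size, goes through.
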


\begin{proof}
Consider any instance \bInstance { }of fractional BGP. We will create
a matrix game, \mInstance, such that a solution to the \mInstance {
}is a personalized equilibrium if and only if a corresponding solution
to \bInstance { }is an equilibrium.

For each node $v$ in \bInstance, create a player $v'$
in \mInstance. Assign $v'$ one strategy $P'$ for each path
$P \in \pi(v)$, plus one strategy for ``no path.'' Let $q_v(P)$ = the
number of paths $Q$ such that $P \prefer_v Q$. Next, we will define
the payoff to $v'$ for a hyperedge in \mInstance { }containing $P'$
(for $P \in \pi(v)$). If this hyperedge contains all proper suffixes
of $P$, then the payoff to $v'$ will be $q_v(p) + 1$. Otherwise, the
payoff to $v'$ will be $0$. All hyperedges including the ``no path''
strategy for $v'$ will have payoff $0$ for $v'$.

Given a set of distributions $\{p_{v'}\}$ and a set of hyperedge
weights $w$ in \mInstance, we can assign path weights $w'$
in \bInstance: $w'_v(P) = p_{v'}(P')$. If ``no path'' has any weight,
this weight is not assigned in \bInstance.

We will show that a solution in \bInstance { }is feasible if and only if the corresponding solution to \mInstance { }is feasible, then show the correspondence of equilibria. The unity condition is clearly preserved: the distribution for a node in \mInstance { }is a distribution of $1$ unit. The weights placed on edges in \bInstance { }also sum to $1$. Now let's consider the tree condition. Suppose the tree condition is violated in \bInstance. Then, there exists a path $S$ starting at some node $u$ such that for another node $v$, $\sum_{P \in \pi{v, S}} w'_v(P) > w'_u(S)$. This means that in \mInstance, we had a strategy $S'$ of node $u'$ such that for node $v'$, $\sum_{e \in E: S' \in e} w_{v'}(e) > p_{u'}(S)$, which means the solution to \mInstance { }was also infeasible. Now, suppose we have a solution to \mInstance { }that is infeasible. Then, there is some $S' \in S_{u'}$ such that for some node $v'$, $\sum_{e \in E: S' \in e} w_{v'}(e) > p_{u'}(S')$. If the weight placed on $S'$ was from a path that did not include $s$ as a subpath, $v'$ could move the excess weight from $S'$ onto any strategy of $u'$ without changing the payoff, so all remaining weight on $S'$ much be from paths that contains $S'$ as a suffix. But then, for \bInstance, we have $\sum_{P \in \pi{v, S}} w'_v(P) > w'_u(S)$ - another infeasible solution.

As a first step in showing that the equilibria are equivalent, we will note that fractional BGP preference lists across paths can be replaced with preference weights. Any weights that preserve the $\prefer$ relationship will also preserve the set of equilibria. To show this, first define preference weights $u_v(P)$ for all paths $P \in \pi(v)$ such that $u_v(P) \geq u_v(Q)$ if and only if $P \prefer_v Q$. Now, suppose we have an equilibrium $w$ using weights $u$ which is not lexicographically maximal, plus a lexicographically maximal solution $w'$. Let $\cal{P}$ = the set of paths such that for all $P$ and $Q$ in $\cal{P}$, $P \prefer_v Q$, $Q \prefer_v P$, and for all $Q >_v P$, $w_v(Q) = w'_v(Q)$. Then, we know that 
\begin{enumerate}
\item $\sum_{Q: Q \prefer_v P, P \in \cal{P}} w_v(Q) = \sum_{Q: Q \prefer_v P, P \in \cal{P}}$
\item $u_v(P) = u_v(Q)$ if $P \in \cal{P}$, $Q \in \cal{P}$
\item $u_v(P) > u_v(Q)$ if $P \in \cal{P}$, $Q \notin \cal{P}$. 
\end{enumerate}
Replacing $w$ with $w'$ will keep the same weights on all paths with strictly higher preference weight that $u_v(P), P \in \cal{P}$, increase the weight of paths in $\cal{P}$ by some total increase amount $I$, and decrease the total weight by $I$ of the paths with preference weight $< u_v(P), P \in \cal{P}$, thereby improving the solution. Similarly, if we have an equilibria using preference weights, it must be lexicographically maximal.

Now, we can write a linear program to find a best response for fractional BGP, In this LP, $w_v$ is the set of weights assigned by node $v$.

\begin{eqnarray*}
\max \sum_{P \in \pi(v)} w_v(P) u_v(P) & \\
\sum_{P \in \pi(v,S)} w_v(P) \leq w_u(S) & S \in \pi(u), 1\le u \le k\\
w_v(P) \ge 0 & P \in \pi(v)
\end{eqnarray*}

We will also adjust the linear program for finding a best response using personalized payoffs to work for graphical games. In the graphical representation, we write $e \in E$ to represent a hyperedge, where a hyperedge is a subset of at most one strategy per player (compared to exactly one strategy per player previously). We still use $w_v(e)$ to mean the weight placed by player $v$ on hyperedge $e$, $u_v(e)$ to mean the payoff to player $v$ for hyperedge $e$, and $p_v(s)$ to mean the weight placed by player $v$ on his own strategy $s$. The following linear program defines a best response for player $v$.

\begin{eqnarray*}
\max \sum_{e\in E} w_v(e)u_v(e) & \\
\sum_{e: s \in e} w_v(e) \leq p_u(s) & s \in S_u, 1\le u \le k\\
w_vi(e) \ge 0 & e \in E
\end{eqnarray*}

Now, if we assign preference weights for \bInstance: $u_v(P) = u_v(e)$ where $e$ is the hyperedge in \mInstance { }corresponding to $P$ and all suffixes of $P$, the two linear programs are exactly equivalent. Therefore, the set of equilibria is exactly equivalent.
\end{proof}

\begin{theorem}
\label{thm:BBCPersonalized}
Finding an equilibrium in the fractional BBC game can be reduced to finding a personalized equilibrium in a matrix game.
\end{theorem}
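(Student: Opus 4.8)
The plan is to mirror the reduction of Theorem~\ref{thm:BGPPersonalized}, taking advantage of the fact that the personalized payoff is by definition the value of a min-cost flow (a max-weight fractional hypergraph matching), which is precisely the form of a node's payoff in a fractional BBC game: in both settings each agent \emph{independently} routes one unit to a common destination subject to capacities that the agents collectively choose. Given a fractional BBC instance \bInstance, specified by a tuple $\langle V,d,c,b\rangle$ and length functions $\ell_u$, I would build a (graphical) matrix game \mInstance{} with one player $u'$ for every node $u\in V$. The strategies of $u'$ are its outgoing edges, the free disconnection edge to $d$, and a slack strategy, and the distribution $p_{u'}$ that $u'$ plays encodes the capacities $w_u$ that $u$ builds, so that the marginal $p_{u'}((u,v))$ equals $w_u(v)$. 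The hyperedges of $H_{u'}$ are the simple $u$-to-$d$ paths, each written as the set containing one outgoing-edge strategy per node it traverses, and the payoff of such a hyperedge is $L-\ell_u(\mathrm{path})$ for a single large constant $L$ chosen so that every path payoff is positive. With this choice the marginal constraint ``the total $u'$-weight incident to strategy $(x,y)$ is at most $p_{x'}((x,y))=w_x(y)$'' reproduces exactly the per-edge capacity constraint $f(x,y)\le w_x(y)$, so the max-weight matching for $u'$ routes one unit along short $u$-to-$d$ paths and thus computes $u$'s 1-unit min-cost flow, with the disconnection path, of payoff $L-M$, absorbing any deficit.

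With the construction fixed, I would argue correctness in two stages, exactly as at the end of the proof of Theorem~\ref{thm:BGPPersonalized}. First, a feasibility correspondence: budget-feasible capacity profiles correspond to profiles of distributions, and the per-edge capacities correspond to the marginals they induce. Second, an optimality correspondence: I would write $u$'s BBC best response as a linear program (minimize flow cost subject to conservation, to the capacities $w_x$ of the other players, to the budget, and to the disconnection option) and identify it term-by-term with the max-weight-matching linear program that defines $u'$'s personalized best response, the positive shift $L$ ensuring that the matching is driven to route a full unit just as the positive payoffs $q_v(P)+1$ did in the BGP reduction. Two simplifications make this manageable. The reduction runs from BBC to the matrix game, so I only need \emph{soundness} --- that every personalized equilibrium of \mInstance{} (one of which exists by the results of Section~\ref{sec:personalized}) yields a BBC equilibrium of \bInstance{} --- and need not capture every BBC equilibrium. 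Moreover, at any BBC best response a node builds no capacity it does not itself route, since excess capacity only spends budget while helping only other players; hence $\sum_v w_u(v)\le 1$, which places $p_{u'}$ legitimately in the simplex and recovers the same role that $w_i(j)\le w_j(j)$ played in Lemma~\ref{BBCinstanceofSimple}.

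The main obstacle is reconciling the general integer costs $c$ and budgets $b$ with the rigid normalization of the personalized framework, where every distribution sums to one and every marginal is an \emph{unweighted} sum bounded by a single $p_j(s)$. I would handle the costs by replacing each edge $(u,v)$ of cost $c(u,v)$ with $c(u,v)$ parallel unit-cost copies, so that the budget becomes an unweighted bound $\sum_{\text{real edges}} w_u\le b(u)$. I would then enforce this bound by routing all real-edge flow out of $u$ through a single budget gate of capacity $\min(b(u),1)$, supplied by an auxiliary player that is pinned to a fixed distribution at every equilibrium (a constant gadget of the kind used for the VALUE players in Section~\ref{sec:ppad-hardness}), while letting the free disconnection edge bypass this gate so that it correctly carries any flow that cannot be routed within budget. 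The delicate part, and where I expect the bulk of the technical effort, is checking that this bookkeeping composes correctly: that the additive shift $L$ and the gate gadgets change each player's payoff only by a strategy-independent constant, so that every argmax, and hence every best response and equilibrium, is preserved, and that the disconnection penalty $M$ forces exactly one unit of flow and absorbs the over-budget and unroutable remainder. Once these checks are in place, the matching linear program for $u'$ coincides with $u$'s BBC best-response linear program, and the remaining details follow the templates of Theorem~\ref{thm:BGPPersonalized} and Lemma~\ref{BBCinstanceofSimple}.
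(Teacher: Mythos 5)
Your core construction coincides with the paper's: one player per BBC node, one strategy per available edge (plus a disconnection/slack option), hyperedge payoffs that encode path lengths so that the personalized max-weight matching computes each node's one-unit min-cost flow, and an equilibrium correspondence deferred to the template of the fractional BGP reduction. The paper's own proof is essentially only this sketch --- it never engages with the cost function $c$ or the budgets $b$ --- so the portion of your proposal that goes beyond the paper is precisely where the problems are.

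The genuine gap is the cost/budget bookkeeping. Replacing an edge $(u,v)$ of cost $c(u,v)$ by $c(u,v)$ \emph{parallel} unit-cost copies does not emulate the budget constraint: with parallel copies, total capacity $W$ spread across the copies consumes only $W$ of budget rather than $c(u,v)\cdot W$, so costs are undercharged by a factor of $c(u,v)$. What emulates multiplicative cost is a \emph{series} chain of $c(u,v)$ unit edges (capacity $W$ then requires mass $W$ on each link, costing $c(u,v)\cdot W$), but since a matrix-game distribution sums to $1$, that encoding forces $\sum_v c(u,v)\,w_u(v) \le 1$, i.e.\ budget $1$ instead of $b(u)$, which genuinely changes best responses whenever $b(u) > 1$ and some cost exceeds $1$ (a node that could afford capacity $1$ on a cost-$2$ edge can now only afford $1/2$, and must pay disconnection penalty for the rest). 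The budget gate does not repair this: budgets are integers, so $\min(b(u),1)$ is just $0$ or $1$, and a gate of capacity $1$ bounds total flow --- which is already $1$ --- not the cost-weighted sum $\sum_v c(u,v)\,w_u(v)$. Separately, your claim that at any BBC best response a node builds no capacity it does not itself route is false: unused capacity leaves the builder's own payoff unchanged, so best responses with excess capacity exist. What your soundness argument actually needs is the weaker (and true) statement that a node's optimal achievable payoff is unchanged when its capacities are restricted to sum to at most $1$, so that an equilibrium of the restricted game is an equilibrium of the full game; that repair is routine, but the cost/budget normalization is not, and as written that step fails.
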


\begin{proof}
Consider any instance of fractional BBC. Create a player in the matrix
game for each node in the BBC instance. Assign the player one action
for each available edge in the BBC instance. For any hyperedge in the
matrix game, a player's payoff is negative of the length of the
shortest path to the destination made up of a subset of the edges
represented by that hyperedge (or negative of the disconnection
penalty if there is no such path to the destination). The proof that
this preserves the set of equilibria is similar to the above proof for
fractional BGP.
\end{proof}

\subsection{Existence and Rational Solutions}
\label{sec:personalized.rational} 
 \junk{Throughout the
  following subsection, we let $G$ denote a $k$-player matrix game
  with $S_i$ being the set of $m_i$ strategies for player $i$, and
  $u_i: \prod_j S_j \rightarrow {\mathbb R}$ being the utility
  function for player $i$, $1 \le i \le k$.  Let $E$ denote $\prod_j
  S_j$; we refer to each element of $E$ as a hyperedge.}
\begin{theorem}
\label{thm:exist}
For every multi-player matrix game, a personalized equilibrium
always exists.
\end{theorem}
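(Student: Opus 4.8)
The plan is to realize a personalized equilibrium as a pure Nash equilibrium of an auxiliary continuous game and then invoke a standard fixed-point existence theorem (for instance \cite[Proposition~20.3]{OsborneRubinstein}, which rests on Kakutani's theorem). In this auxiliary game the ``action'' of player $i$ is her chosen distribution $p_i$, ranging over the probability simplex $\Delta_i = \{p_i : S_i \rightarrow [0,1], \ \sum_{s \in S_i} p_i(s) = 1\}$, and her payoff is the personalized payoff defined above, namely the value of the maximum-weight fractional hypergraph matching in $H_i$ under the capacities induced by $(p_1,\ldots,p_k)$. Each $\Delta_i$ is nonempty, compact, and convex, so the facts that remain to be checked are (i) that the payoff is well defined (a feasible matching always exists), (ii) that it is continuous in the joint profile, and (iii) that it is quasi-concave in the player's own distribution.

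For (i) I would exhibit the product assignment $w_i(e) = \prod_j p_j(s_j)$ for each hyperedge $e = (s_1,\ldots,s_k)$: the total weight of the hyperedges incident to a strategy $s \in S_j$ is $p_j(s)\prod_{j' \neq j}\big(\sum_{s'} p_{j'}(s')\big) = p_j(s)$, so this is a feasible fractional matching. Hence the matching LP is always feasible, and it is bounded since the total weight is $1$ and the $u_i(e)$ are finite; the payoff is therefore well defined and finite. For (ii), the payoff is the optimal value of a linear program whose right-hand side depends linearly and continuously on $(p_1,\ldots,p_k)$; since the program is always feasible and bounded, its optimal value is continuous in the right-hand side, which is the standard continuity of the LP value function.

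The crux is (iii), and here I would argue concavity directly rather than appeal to the general concavity of LP value functions. Fix the distributions $p_{-i}$ of the other players, and let $x, x' \in \Delta_i$ be two distributions for player $i$ with optimal matchings $w, w'$ respectively. For $\lambda \in [0,1]$ put $x'' = \lambda x + (1-\lambda)x'$ and form $w'' = \lambda w + (1-\lambda)w'$. Then $w''$ is nonnegative, and for any strategy $s$ of a player $j \neq i$ its incident weight is $\lambda p_j(s) + (1-\lambda)p_j(s) = p_j(s)$, while for a strategy $s$ of player $i$ it is $\lambda x(s) + (1-\lambda)x'(s) = x''(s)$; thus $w''$ is feasible for the profile $(x'', p_{-i})$, and its weight equals $\lambda \sum_e w(e)u_i(e) + (1-\lambda)\sum_e w'(e)u_i(e)$, the convex combination of the two payoffs. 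Since the optimal matching for $(x'',p_{-i})$ can only do at least as well, the payoff is concave, hence quasi-concave, in player $i$'s own distribution. (The identical argument in fact gives joint concavity in $(p_1,\ldots,p_k)$.)

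With the action sets convex and compact and each payoff continuous in the joint profile and quasi-concave in its own action, the existence of a pure Nash equilibrium of the auxiliary game -- i.e.\ a profile of distributions at which every player plays a personalized best response -- follows immediately from the cited theorem. The main obstacle is step (iii); everything else is routine, and the matching-combination argument is exactly what makes concavity go through despite the combinatorial, LP-defined payoffs. This same structural observation is what will let the existence claim specialize to fractional BGP and fractional BBC games, and, anticipating the rationality result, the fact that each best response is the optimum of an LP with rational data is what will subsequently yield a \emph{rational} equilibrium.
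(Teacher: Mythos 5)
Your proposal is correct and follows exactly the paper's own route: construct the auxiliary game whose strategies are the distributions $p_i$ and whose payoffs are the personalized (LP-valued) payoffs, then invoke \cite[Proposition~20.3]{OsborneRubinstein} after checking nonemptiness, compactness, convexity, continuity, and quasi-concavity. In fact you supply the details (feasibility via the product assignment, continuity of the LP value, and concavity via convex combinations of optimal matchings) that the paper dismisses with ``it is easy to see,'' so your write-up is a strictly more complete version of the same argument.
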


\begin{proof}
Given the matrix game $G$, we construct the $k$-player game ${\cal G}$
in which the $i$th player's strategy space is the set of all
probability distribution functions over $S_i$ and the payoff is given
by the personalized payoff function defined above.  Then a
personalized equilibrium of $G$ is equivalent to a Nash equilibrium of
${\cal G}$.  By~\cite[Proposition~20.3]{OsborneRubinstein}, a game has
a pure Nash equilibrium if the strategy space of each player is a
compact, non-empty, convex space, and the payoff function of each
player is continuous on the strategy space of all players and
quasi-concave in the strategy space of the player.  The set of
probability distributions over $S_i$ is clearly nonempty, convex, and
compact.  Furthermore, given probability distributions $p_i$ over
$S_i$, $1 \le i \le k$, the payoff for any player $i$ is simply the
solution to the following linear program with variables $w_i(e)$, over
$e \in E$.
\begin{eqnarray*}
\max \sum_{e\in E} w_i(e)u_i(e) & \\
\sum_{e: s \in e} w_i(e) = p_j(s) & s \in S_j, 1\le j \le k\\
w_i(e) \ge 0 & e \in E
\end{eqnarray*}
It is easy to see that the payoff function is both continuous in the
probability distributions of all players, and quasi-concave in the
strategy space of player $i$, thus completing the proof of the
theorem.
\end{proof}

\begin{theorem}
\label{thm:rational}
For any matrix game with all rational payoffs, there exists a
personalized equilibrium in which the probability assigned by each
player to each strategy is a rational number.
\end{theorem}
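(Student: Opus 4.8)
The plan is to characterize the entire set of personalized equilibria as a finite union of polyhedra described by rational data, and then combine this with the existence guaranteed by Theorem~\ref{thm:exist}: one of the polyhedra must be nonempty, and a nonempty rational polyhedron always contains a rational point. First I would simplify the best-response condition. Fix the other players' distributions $p_{-i}$ and look at the payoff LP from the proof of Theorem~\ref{thm:exist}. The capacity constraints $\sum_{e \ni s} w_i(e) = p_i(s)$ for $s \in S_i$ involve only player $i$'s own distribution, which player $i$ is free to choose; and since every hyperedge meets exactly one strategy of each player, any matching $w_i$ respecting $p_{-i}$ automatically has total weight $1$ and an $S_i$-marginal that is itself a distribution. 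Hence player $i$'s best response is obtained by solving
$$\max_{w_i \ge 0}\ \sum_{e} w_i(e)\,u_i(e)\quad\text{s.t.}\quad \sum_{e \ni s} w_i(e) = p_j(s)\ \ \forall s\in S_j,\ j\neq i,$$
and setting $p_i(s) = \sum_{e \ni s} w_i(e)$ for $s \in S_i$. Thus $(p_1,\dots,p_k)$ is a personalized equilibrium iff for every $i$ there is an optimal $w_i$ for this LP whose $S_i$-marginal equals $p_i$.

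Next I would write down the LP optimality conditions. Introducing dual variables $y_i(s)$ for $s\notin S_i$, optimality of $w_i$ is equivalent to primal feasibility, dual feasibility $\sum_{s\in e\setminus S_i} y_i(s) \ge u_i(e)$ for every $e$, and complementary slackness $w_i(e)\bigl(\sum_{s\in e\setminus S_i} y_i(s) - u_i(e)\bigr)=0$. Collecting these over all players, together with the constraints $\sum_{e\ni s} w_i(e)=p_j(s)$ for all $j$ (those with $j=i$ serving to \emph{define} $p_i$ as a marginal, those with $j\neq i$ being capacity constraints) and $w_i\ge 0$, yields a system in the variables $\{p_j(s),\,w_i(e),\,y_i(s)\}$ whose only nonlinearity is the complementary-slackness product. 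This is precisely the feature responsible for the nonconvexity exhibited in Section~\ref{sec:nonconvex}.

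The key step is to linearize by guessing the support. For each player $i$ and each hyperedge $e$, guess which factor of the complementary-slackness product vanishes, i.e.\ impose either $w_i(e)=0$ or $\sum_{s\in e\setminus S_i} y_i(s)=u_i(e)$. There are only finitely many such guesses, and each one yields a polyhedron $Q$ cut out entirely by linear equalities and inequalities with rational coefficients, since all payoffs $u_i(e)$ are rational (nonnegativity of the $p_j(s)$ is inherited from $w_j \ge 0$, so it need not be added separately). Every point of $Q$ satisfies primal feasibility, dual feasibility, and complementary slackness, so by the sufficiency direction of LP strong duality it supplies an optimal $w_i$ for each player and hence a genuine equilibrium $(p_1,\dots,p_k)$; conversely, any equilibrium, by choosing for each $e$ a vanishing factor, lands in the $Q$ of the corresponding guess. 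Therefore the equilibrium set equals the union of these finitely many rational polyhedra.

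Finally, Theorem~\ref{thm:exist} guarantees that an equilibrium exists, so at least one $Q$ is nonempty. A nonempty polyhedron defined by finitely many linear (in)equalities with rational coefficients always contains a rational point: its minimal nonempty face is a rational affine subspace, on which Gaussian elimination produces a rational solution. The $p$-coordinates of such a point constitute a personalized equilibrium in which every probability is rational. I expect the main subtlety to be the verification that \emph{every} point of a guessed polyhedron — not merely its vertices — is an honest equilibrium; this is exactly what the sufficiency of the primal-feasibility/dual-feasibility/complementary-slackness conditions provides, and it is what lets the existence of a rational point anywhere in $Q$ (rather than extremality) suffice.
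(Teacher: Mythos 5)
Your proposal follows the same high-level skeleton as the paper's proof --- describe the set of personalized equilibria as a finite union of polyhedra with rational coefficients, invoke Theorem~\ref{thm:exist} to conclude that at least one of them is nonempty, and use the fact that a nonempty rational polyhedron contains a rational point --- but the way you carve out the polyhedra is genuinely different. The paper never introduces dual variables: starting from the feasibility system LP~(\ref{eqn:lp1}), it characterizes an improving deviation for player $\ell$ as a feasible direction $\delta$ whose set $F$ of negative coordinates must lie inside the support of $w_\ell$ (LPs~(\ref{eqn:lp2}) and~(\ref{eqn:lp3})), observes that feasibility of this direction LP depends only on the pair $(\ell,F)$ and not on $w_\ell$, and then enforces equilibrium by the disjunctive constraints $\min_{e\in F} w_\ell(e)=0$ over all bad pairs, expanded into exponentially many linear systems by choosing which hyperedge of each $F$ is zeroed. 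You instead linearize the best-response condition through LP duality: for each player $i$ and hyperedge $e$ you guess which factor of the complementary-slackness product vanishes, obtaining polyhedra in the enlarged variable set $(p,w,y)$. Your route is the more standard one, and its sufficiency direction (``every point of a guessed polyhedron is an equilibrium'') comes for free from the optimality conditions, whereas the paper must argue separately that its support conditions capture optimality; the paper's route, in exchange, keeps the polyhedra in the original variables $(p,w)$. A side benefit of your formulation: by dropping the own-marginal constraints from the best-response LP and letting $p_i$ be the induced marginal, you make explicit that a deviation may change $p_i$ itself --- a point that LP~(\ref{eqn:lp2}) as printed (whose zero-marginal constraints range over all $j$, including $j=\ell$) glosses over.

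There is one genuine, though easily repaired, slip: your polyhedra omit the normalization $\sum_{s\in S_j} p_j(s)=1$. The marginal constraints do force all the totals $\sum_{s\in S_j} p_j(s)$ to agree across $j$ (each equals the total weight of every $w_i$), and nonnegativity is inherited as you say, but nothing pins the common total to $1$: the all-zero point $(p,w)=(0,0)$, together with sufficiently large duals $y_i$, lies in the polyhedron of the guess ``$w_i(e)=0$ for all $i,e$,'' and its $p$-coordinates are not a strategy profile at all. So ``every point of $Q$ is an honest equilibrium'' fails as stated, and the rational point you extract could be such a degenerate one. The fix is one line --- add the rational linear constraints $\sum_{s\in S_j} p_j(s)=1$ to each guessed system, exactly as the paper's LP~(\ref{eqn:lp1}) does --- after which your argument is complete.
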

\begin{proof}
Let $G$ be a $k$-player matrix game.  (Please refer to the beginning
of Section~\ref{sec:personalized} for relevant notation.) For each
player $i$, let $p_i: S_i \rightarrow [0,1]$ denote a probability
distribution over its strategies.  If $p = (p_1, \ldots, p_k)$ forms a
personalized equilibrium, then it provides a feasible solution to the
following linear program over variables $w_i(e)$, where $e \in \prod_j
S_j$ and $1 \le i \le k$, and $p_i(s)$, where $1 \le i \le k$ and $s
\in S_i$:
\begin{eqnarray}
\sum_{e: s \in e} w_i(e) & = p_j(s) & s \in S_j, 1\le j \le k, 1 \le i
\le k \nonumber\\
\sum_{s \in S_i} p_i(s) & = 1 & 1 \le i \le k \label{eqn:lp1}\\
w_i(e) & \ge 0 & 1 \le i \le k, e \in E \nonumber
\end{eqnarray}
Furthermore, if $p$ is a personalized equilibrium, then each $(p_i,
w_i)$ pair maximizes $\sum_{e} w_i(e) u_i(e)$ subject to
LP~(\ref{eqn:lp1}).  Suppose $p$ is not a personalized equilibrium, yet
satisfies LP~(\ref{eqn:lp1}).  This is so if and only if there exists a
player $\ell$ for which $(p_\ell, w_\ell)$ does not maximize $\sum_{e}
w_\ell(e) u_\ell(e)$.  Suppose $(p'_\ell, w'_\ell)$ with $w'_\ell \neq
w_\ell$ is an optimal choice for player $\ell$.  Then, $\delta =
w'_\ell - w_\ell$ is a feasible solution to the following LP:
\begin{eqnarray}
\sum_{e \in E} \delta(e)u_\ell(e) & > 0 & \nonumber\\
\sum_{e: s \in e} \delta(e) & = 0 & s \in S_j, 1 \le j \le k \label{eqn:lp2}\\
\delta(e) & \ge -w_\ell(e) & e \in E\nonumber
\end{eqnarray}
If $F$ is the set of hyperedges for which $\delta(e)$ is
negative, then $\delta(e)$ satisfies LP~(\ref{eqn:lp2}) only if
$w_\ell(e) > 0$ for all those hyperedges.  This motivates
replacing the last constraint of (\ref{eqn:lp2}) with these
two new constraints:
\begin{eqnarray}
\sum_{e \in E} \delta(e)u_\ell(e) > 0 \nonumber\\
\sum_{e: s \in e} \delta(e) = 0 & & s \in S_j, 1 \le j \le k \label{eqn:lp3} 
\end{eqnarray}
This LP, which we refer to as LP~(\ref{eqn:lp3}) is independent of $w_\ell$, for
each player $\ell$ and $F \subseteq E$.

We have thus argued that $p$ is a personalized equilibrium if and only
if there exists $w = (w_1, \ldots, w_k)$ such that $p$ and $w$ satisfy
LP~(\ref{eqn:lp1}) and, if LP~(\ref{eqn:lp3}) is feasible for some $\ell$ and $F$,
then $w_\ell(e)$ should not be positive for all $e$ in $F$.  We thus
add the following constraints to LP~(\ref{eqn:lp1}):
\[
\min_{e \in F} w_\ell(e) = 0, \mbox{ for all $\ell$ and $F$ such that LP~(\ref{eqn:lp3}) is feasible}.
\]
By taking all combinations of one hyperedge from each of the above
product constraints, we get an exponential number of linear programs
(with all rational coefficients), the union of which precisely
describes all personalized equilibria.  By Theorem~\ref{thm:exist}, at
least one of these linear programs is feasible, which implies that
there exists a personalized equilibrium with all rational
probabilities.
\end{proof}

\subsection{Complexity of finding Personalized Equilibria}

\subsubsection{Two Player Personalized Equilibria}

It is not hard to show that the set of all two-player personalized
  equilibria is convex.
In fact, we can give a stronger characterization, which will lead to
  a polynomial time algorithm.

\begin{theorem}
A 2-player personalized equilibrium can always be found in polynomial time.
\end{theorem}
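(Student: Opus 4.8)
The plan is to reduce the equilibrium condition to a single linear feasibility problem by first analyzing best responses. Fix COLUMN's distribution $y$ and consider ROW's problem of choosing $x$ to maximize its personalized payoff~\eqref{Eqn:ROW}. The key observation is that the constraints $\sum_j u_{i,j} = x[i]$ in~\eqref{Eqn:ROW} only \emph{define} $x$ as the row-sums of $u$; since ROW is free to pick any $x$, its best attainable value is obtained by dropping these constraints entirely, giving $\max\{\sum_{i,j} u_{i,j} R[i,j] : u \ge 0,\ \sum_i u_{i,j} = y[j]\ \forall j\}$. This separates across columns: each unit of mass $y[j]$ should be sent to a row attaining $\max_i R[i,j]$, so the optimal value is $\sum_j y[j]\max_i R[i,j]$, a linear function of $y$.

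Consequently, $x$ is a best response to $y$ iff $x$ is the vector of row-sums of some $u \ge 0$ whose support lies in the set of ROW-optimal cells $E_R = \{(i,j) : R[i,j] = \max_{i'} R[i',j]\}$ and whose column-sums equal $y$; the equivalence in the nontrivial direction is a complementary-slackness argument, since any mass placed off $E_R$ strictly lowers the value below $\sum_j y[j]\max_i R[i,j]$. By the symmetric argument, $y$ is a best response to $x$ iff $y$ is the vector of column-sums of some $v \ge 0$ supported on the COLUMN-optimal cells $E_C = \{(i,j): C[i,j] = \max_{j'} C[i,j']\}$ with row-sums $x$. Hence $(x,y)$ is a personalized equilibrium exactly when there exist $u,v \ge 0$ with $\mathrm{supp}(u) \subseteq E_R$, $\mathrm{supp}(v)\subseteq E_C$, sharing common marginals $x$ (row-sums) and $y$ (column-sums), and $\sum_j y[j] = 1$. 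Note that the two matchings $u$ and $v$ need only agree on their marginals, not on the matching itself --- this is exactly the \emph{personalized} feature.

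This characterization immediately gives the algorithm. The cells $E_R$ and $E_C$ are computed in $O(mn)$ time by scanning $R$ and $C$. Setting all off-support entries of $u$ and $v$ to zero, the conditions above become a system of linear equalities and nonnegativity constraints in the variables $(x,y,u,v)$; the equilibrium set is its projection onto $(x,y)$ and is therefore a polytope, which also explains the convexity asserted just before this theorem. By Theorem~\ref{thm:exist} a personalized equilibrium exists, so this system is feasible, and any polynomial-time LP-feasibility routine returns a point $(x,y,u,v)$ from which we read off the equilibrium $(x,y)$. In fact the feasibility problem is combinatorial: it asks for two transportation flows, one on the bipartite graph $E_R$ and one on $E_C$, that realize the same (unknown) marginals, and can be solved directly with flow techniques.

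The step I expect to be the main obstacle is the best-response characterization: one must argue carefully that fixing $x$ costs ROW nothing (because ROW itself chooses $x$), and that ties in the $\arg\max$ --- where the optimal row or column is not unique --- are correctly captured by the general support conditions $\mathrm{supp}(u)\subseteq E_R$ and $\mathrm{supp}(v)\subseteq E_C$ rather than by a single deterministic assignment. Once this reduction is in place, convexity and polynomial-time solvability follow routinely.
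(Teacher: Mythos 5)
Your proof is correct, but it takes a genuinely different algorithmic route from the paper's. Both arguments rest on the same best-response insight: given the opponent's marginal, a best response is exactly the row-sum (resp.\ column-sum) vector of a nonnegative matching supported on the cells you call $E_R = \{(i,j): R[i,j] = \max_{i'} R[i',j]\}$ (resp.\ $E_C = \{(i,j): C[i,j] = \max_{j'} C[i,j']\}$). The paper encodes these same cells as a directed bipartite graph $G'$ (an edge $r_i \rightarrow c_j$ for each cell of $E_R$, an edge $c_j \rightarrow r_i$ for each cell of $E_C$) and then proceeds combinatorially: every node of $G'$ has an incoming edge (each column has a payoff-maximizing row, and each row a payoff-maximizing column), so $G'$ contains a directed cycle $r_{i_1}, c_{j_1}, \ldots, r_{i_\ell}, c_{j_\ell}$, and the profile in which each player puts weight $1/\ell$ on each of its $\ell$ strategies on the cycle, matching along its own cycle edges, is an equilibrium; conversely, the paper observes that any equilibrium induces a circulation on $G'$ and hence decomposes into such cycles. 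That gives a strongly polynomial, essentially linear-time algorithm, and it re-proves existence in the two-player case with no appeal to fixed-point machinery. Your route instead packages the two support conditions and the common-marginals requirement into a single linear feasibility system in $(x,y,u,v)$, invokes Theorem~\ref{thm:exist} to guarantee the system is solvable, and calls a generic LP solver. What your approach buys is an explicit description of the \emph{entire} equilibrium set as the projection of a polytope, which immediately yields the convexity the paper merely asserts before the theorem; the price is that the algorithm is only weakly polynomial (ellipsoid or interior-point rather than cycle-finding) and that correctness leans on the earlier existence theorem, which the paper's self-contained cycle construction does not need.
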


\begin{proof}
Recall the secondary definition of the personalized payoff to player ROW in a two-player game given at the start of Section \ref{sec:personalized}: 

$\payoff{ROW}$ is the cost of a 1-unit minimun-cost
  flow from source $r$  to destination $c$ in 
  the directed graph $G_{R}  = (V_{R}, E_{R})$,
  with
\begin{eqnarray*}
 V_{R} & = & \{r,c, r_1, r_2, \ldots, r_{m}, c_1, c_2, \ldots,
  c_{n}\} \\
E_{R} & =  &\{(r \rightarrow r_{i}), \ \forall i\} \cup 
             \{ (r_{i}\rightarrow c_{j}), \ \forall i, j\} \cup 
              \{(c_{j} \rightarrow c),\ \forall j \},
\end{eqnarray*}
where the capacity of edge $(r \rightarrow r_{i})$ is $x[i]$, the
capacity of edge $(c_{j} \rightarrow c)$ is $y[j]$, and the capacity
of all other edges is $+\infty$.  The cost of edge $(r_{i}\rightarrow
c_{j})$ is $-R[i,j]$, and the cost of all other edges is 0. 

A similar definition of a flow on a graph $G_{C}$ gives the payoff function for player COLUMN. 

Now, let graph $G = $ the union of $G_{R}$ and $G_{C}$. We will now consider a subgraph $G'= (V',E') \subset G$, such that $V' = V_{R} \cap V_{C}$, $(r_i \rightarrow c_j) \in E_R$ is in $E'$ if and only if $R[i,j] \geq R[i',j]$ for all $i'$, and $(c_j \rightarrow r_i) \in E_C$ is in $E'$ if and only if $C[i,j] \geq C[i,j']$ for all $j'$.

\medskip
\BfPara{Any directed cycle in $G'$ corresponds to a personalized equilibria}
Consider any cycle \\ $\{r_{i1}, c_{j1}, r_{i2}, c_{j2}, \ldots, r_{il}, c_{il}\}$ in $G'$, each node played with weight $\frac{1}{l}$. Player ROW can match each of his strategies $r_{ik}$ with player COLUMN's strategy $c_{jk}$. Since this is a best response for player ROW, ROW cannot
do better by changing to another strategy. Similarly, player ROW can match each of his strategies $c_{jk}$ with player ROW's strategy $r_{i(k+1)}$ for $k < l$, $c_{jl}$ can be matched with $r_{i1}$. 

\medskip
\BfPara{Every personalized equilibria is a linear combination of cycles in $G'$}
Starting with any bipartite graph from $G'$ in which the in-degree
equals the out-degree of each node (a characteristic of any
personalized equilibria), we can remove any cycle (which is a personalized equilibria) and we are still
left with a bipartite graph with the same characteristic. 
\end{proof}

\subsubsection{Multi-player personalized equilibria}
\begin{theorem} \label{thm:multi_pers_not_convex}
For multiplayer games, the set of all personalized equilibria may not be convex. 
\end{theorem}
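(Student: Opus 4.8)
The plan is to exhibit an explicit multiplayer matrix game whose set of personalized equilibria fails to be convex, and the cleanest route is to transport the non-convex example we already have. Recall that in Section~\ref{sec:nonconvex} we constructed an instance of the \gamename\ (the seven players $a_1,a_2,b_1,b_2,c_1,c_2,x$) together with two equilibria $w$ and $w'$ whose convex combination $\lambda w + (1-\lambda)w'$ fails to be an equilibrium for every $\lambda > 1/4$. First I would push this instance through the reduction of Lemma~\ref{BGPinstanceofSimple} to obtain a fractional BGP instance, and then through the reduction of Theorem~\ref{thm:BGPPersonalized} to obtain a matrix game \mInstance\ whose personalized equilibria are in one-to-one correspondence with the equilibria of the original \gamename.

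The key observation is that each of these two reductions acts as an \emph{affine} bijection on the underlying space of feasible weight assignments, not merely on the set of equilibria. In Lemma~\ref{BGPinstanceofSimple} the correspondence is $w'_{\ib}(P_{ij}) = w_i(j)$, a pure relabeling of coordinates; in Theorem~\ref{thm:BGPPersonalized} it is $w'_v(P) = p_{v'}(P')$, with the residual mass $1-\sum_P w'_v(P)$ absorbed by the auxiliary ``no path'' strategy, again an affine map of the distribution variables. Composing the two yields an affine map $\phi$ that carries feasible solutions of the \gamename\ to feasible solutions of \mInstance\ and restricts to a bijection between their equilibria.

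Now I would invoke affinity to transfer the non-convexity. Since $\phi$ is affine, $\phi(\lambda w + (1-\lambda)w') = \lambda\,\phi(w) + (1-\lambda)\,\phi(w')$. The points $\phi(w)$ and $\phi(w')$ are personalized equilibria of \mInstance, while their convex combination $\phi(\lambda w + (1-\lambda)w')$ is the $\phi$-image of a feasible but non-equilibrium point (the feasible region of the \gamename\ is convex, so the convex combination is feasible, and by Section~\ref{sec:nonconvex} it is not an equilibrium). Because $\phi$ maps equilibria to equilibria bijectively, it maps non-equilibria to non-equilibria; hence this convex combination is not a personalized equilibrium of \mInstance, and the set of personalized equilibria of \mInstance\ is not convex, proving the theorem.

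The step I expect to require the most care is the middle one: verifying that $\phi$ genuinely respects the affine structure of the whole feasible polytope and not just the equilibrium set, and in particular that a feasible, non-equilibrium convex combination on the \gamename\ side maps to a feasible, non-equilibrium point on the matrix-game side. The reductions were originally stated as preserving feasibility and equilibria separately, so the work here is to confirm that the same coordinatewise maps witness both properties simultaneously and that the ``no path'' slack variable does not break linearity. Once this is established, the transfer of non-convexity is immediate.
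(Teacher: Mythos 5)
Your proof is correct, but it takes a different route from the one the paper actually uses for this theorem. The paper's own proof is a direct, self-contained example: a three-player game with two strategies per player ($a_i$, $b_i$), where player 1 receives payoff $1$ on hyperedges $\{a_1,a_2,a_3\}$ and $\{a_1,b_2,b_3\}$, payoff $2$ on $\{b_1,a_2,b_3\}$ and $\{b_1,b_2,a_3\}$, and $0$ otherwise, while players 2 and 3 get payoff $1$ everywhere; the pure profiles $(a_1,a_2,a_3)$ and $(a_1,b_2,b_3)$ are personalized equilibria, yet in any proper mixture of them player 1 strictly improves by switching to $b_1$ and matching it against $(a_2,b_3)$ and $(b_2,a_3)$. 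Your approach---transporting the non-convex \gamename\ example of Section~\ref{sec:nonconvex} through Lemma~\ref{BGPinstanceofSimple} and Theorem~\ref{thm:BGPPersonalized}---is precisely what the paper sketches in the one-sentence remark immediately following its proof, and your write-up supplies the substance that remark elides: that both reductions are affine on the feasible region (coordinate relabeling plus the ``no path'' slack) and that the equilibrium correspondence runs in both directions, so feasible non-equilibria map to feasible non-equilibria. One caution: ``maps equilibria to equilibria bijectively'' is not by itself enough, since a bijection between equilibrium sets would not rule out the image of a non-equilibrium coinciding with some other equilibrium; what you actually need, and what the cited proofs provide, is injectivity of the composed map together with the ``iff'' form of the equilibrium correspondence, so your argument does go through. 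Comparing the two: the paper's example is minimal and independent of the reduction machinery---it exhibits non-convexity already at three players with two strategies each, which is notable given that the complexity of three-player personalized equilibria is left open---whereas your argument yields a larger (seven-player) instance but reuses existing lemmas and makes the structural point that non-convexity is preserved along the paper's entire reduction chain.
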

\junk{
\begin{proof}
We have shown that the \gamename\ is a special case of fractional BGP, and we have shown that fractional BPG is a special case of finding personalized equilibria. Therefore, the non-convex example from Section \ref{sec:nonconvex} also proves this theorem.
\end{proof}
}

\begin{proof}
Consider the following example, with 3 players, 2 strategies per
player. Player $i$ has strategies $a_i$ and $b_i$. Let
$P_i(a_1,a_2,a_3)$ = the payoff to player 1 for hyperedge
$\{a_1,a_2,a_3\}$. The payoffs to player 1 are: $P_1(a_1,a_2,a_3) =
P_1(a_1,b_2,b_3) = 1$, $P_1(b_1,a_2,b_3) = P_1(b_2,b_2,a_3) = 2$, the
other 4 payoffs for player 1 are all 0.  The payoffs for the other
players are 1 for all hyperedges. In this example, the pure strategies
$a_1,a_2,a_3$ and $a_1,b_2,b_3$ are both equilibria. However, a
combination of these two, $a_1=1, a_2=a_3=\lambda,
b_2=b_3=(1-\lambda)$, is not an equilibrium, since player p would
prefer to play hyperedges $\{b_1,a_2,b_3\}$ and $\{b_1,b_2,a_3\}$.
\end{proof}

We have shown that the \gamename\ is a special case of fractional BGP, and we have shown that fractional BPG is a special case of finding personalized equilibria. Therefore, the non-convex example from Section \ref{sec:nonconvex} also proves theorem \ref{thm:multi_pers_not_convex}.
\junk {
For some special formulation of multiplayer personalized games,
  polynomial-time algorithm exists.
As in the pairwise personalized equilibria, each player is playing a
2-player game against each other player. However, in this version, we
are given split values $\lambda_{ij}$ for each pair of players $i$ and
$j$, such that for any player $i$, $\sum_{j \neq i} \lambda_{ij} =
1$. Each player $i$ may divide up his strategy selections into
segments of size $\lambda_{i1}, \lambda{i2}, \ldots, \lambda{in}$ each
of which will be used against the appropriate other player. The
segment to be used against player $j$ will be multiplied by
$1/\lambda_{ij}$ before being used against player $j$. The cost to $i$
is $\sum_{j \neq i} $[$\lambda_{ij} * $ cost of a two-player game
against $j$ using the section of $i$'s strategies that were chosen to
be used against $j$].
}


\junk{
\begin{theorem}A split pairwise personalized equilibria can always be found in polynomial time.
\end{theorem}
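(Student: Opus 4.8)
The plan is to exploit the fact that a split pairwise game decouples into $\binom{k}{2}$ independent two-player personalized games, so that the polynomial-time algorithm for two-player personalized equilibria established above can be invoked once per pair and the outputs assembled. The input is, for each ordered pair $(i,j)$, a two-player payoff matrix together with the split values $\lambda_{ij} \ge 0$ satisfying $\sum_{j \neq i} \lambda_{ij} = 1$ for every $i$.

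First I would reparametrize each player's strategy space. A strategy for player $i$ is a distribution $p_i$ over $S_i$ partitioned into segments of masses $\{\lambda_{ij}\}_{j \neq i}$, where the segment aimed at opponent $j$, rescaled by $1/\lambda_{ij}$, becomes the distribution $q_{ij}$ that $i$ plays against $j$. I would show this is equivalent to letting $i$ choose an arbitrary distribution $q_{ij}$ over $S_i$ \emph{independently} for each opponent $j$: given any family $\{q_{ij}\}_{j \neq i}$, setting the $j$-th segment to $\lambda_{ij} q_{ij}$ yields $p_i = \sum_{j \neq i} \lambda_{ij} q_{ij}$, which is automatically a valid distribution because it is a convex combination (using $\sum_{j \neq i} \lambda_{ij} = 1$). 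Hence the segment/partition constraints impose no binding coupling among the $q_{ij}$.

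Next I would observe that player $i$'s total personalized cost is separable, namely $\sum_{j \neq i} \lambda_{ij}\, \mathrm{cost}_{ij}(q_{ij}, q_{ji})$, where $\mathrm{cost}_{ij}$ is the two-player personalized (min-cost-flow) payoff of the pairwise game between $i$ and $j$. Because each term depends only on the single pair $(q_{ij}, q_{ji})$ and the $q_{ij}$ may be chosen independently by the previous step, and because $\lambda_{ij} > 0$ is a constant that does not affect which $q_{ij}$ is optimal, the best response of $i$ decomposes into an independent best response in each pairwise game. It follows that a profile $\{q_{ij}\}$ is a split pairwise personalized equilibrium if and only if, for every unordered pair $\{i,j\}$, the pair $(q_{ij}, q_{ji})$ is a two-player personalized equilibrium of the matrix game induced by that pair. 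I would then run the two-player algorithm on each of the $\binom{k}{2}$ pairwise games (for instance, by extracting a directed cycle in the associated graph $G'$, as in the two-player proof) and combine the solutions; since there are polynomially many pairs and each is solved in polynomial time, the whole procedure is polynomial.

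The main obstacle is the decoupling argument of the first step: one must verify carefully that a player cannot gain by correlating the strategies she commits across different opponents, i.e., that the only constraint linking her pairwise distributions, namely that they reconstitute a single distribution $p_i$ through the fixed $\lambda$-splits, is vacuous because convex combinations of distributions are again distributions. Once this is nailed down, the separability of the payoff and the reduction to the already-solved two-player case are routine, and both correctness and the polynomial running time follow immediately.
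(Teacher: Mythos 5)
Your reduction hinges on the claim that the pairwise game between $i$ and $j$ is played between the two rescaled segments $q_{ij}$ and $q_{ji}$, so that player $i$'s cost term for the pair $(i,j)$ depends only on $(q_{ij},q_{ji})$ and the pairs decouple. That is not the game the paper's own proof solves. The paper's proof is a linear feasibility system over variables $x_{ij\alpha\beta}$ (the amount $i$ matches his strategy $\alpha$ to $j$'s strategy $\beta$ in the game against $j$), with the constraint, for every $i,j$ and every strategy $\beta$ of $j$, that $\frac{1}{\lambda_{ij}}\sum_{\alpha \in \mathrm{BR}_{ij}(\beta)} x_{ij\alpha\beta} = \sum_{k,\gamma} x_{jk\beta\gamma}$, where $\mathrm{BR}_{ij}(\beta)$ is the set of $i$'s strategies that are best responses to $\beta$. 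The right-hand side is $j$'s \emph{aggregate} weight on $\beta$ over all of $j$'s opponents, i.e.\ $p_j(\beta)$ with $p_j=\sum_{k}\lambda_{jk}q_{jk}$; so the segment that $i$ directs at $j$ must be an optimal personalized matching of $j$'s \emph{full} distribution $p_j$, not merely of the segment $q_{ji}$ that $j$ directs back at $i$. Under this formulation the pairs do not decouple, and your assembled profile need not be an equilibrium. Concretely, take three players where player 1's payoffs against player 2 are $R[\alpha_m,\beta_m]=1$ and $0$ off-diagonal, and where players 2 and 3 are indifferent (constant payoffs). Then $q_{21}=\delta_{\beta_1}$, $q_{12}=\delta_{\alpha_1}$ is a two-player personalized equilibrium of pair $(1,2)$, and $q_{23}=\delta_{\beta_2}$ with anything for $q_{32}$ is one for pair $(2,3)$; but against $p_2=\lambda_{21}\delta_{\beta_1}+\lambda_{23}\delta_{\beta_2}$ player 1's flow is forced to match the $\lambda_{23}$ mass on $\beta_2$ with $\alpha_1$ for payoff $\lambda_{21}$, whereas moving $\lambda_{23}$ of his own mass to $\alpha_2$ and matching it to $\beta_2$ yields payoff $1$. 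So the output of your pair-by-pair algorithm fails to be an equilibrium of the split game.

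The paper's route avoids any decomposition: it exploits the fact that ``being a personalized best response'' is itself a linear condition---a flow is optimal iff it is supported on best-response edges and saturates the opponent's weights---and that the sets $\mathrm{BR}_{ij}(\beta)$ are determined by the payoff matrices alone. Hence the full, coupled equilibrium conditions form a single polynomial-size linear feasibility program, solvable in polynomial time (the paper's sketch omits the nonnegativity and mass constraints, and the separate argument---via Kakutani or the paper's general existence theorem---that the system is feasible). In fairness, the prose definition of the split game is ambiguous, and under your segment-versus-segment reading your argument is sound; but under that reading the theorem is an immediate triviality given the two-player result, which is itself a sign that this is not the intended interpretation, and it is in any case not the statement that the paper's LP addresses: the coupling through the aggregates $p_j$ is the entire content of the theorem.
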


\begin{proof} A split pairwise personalized equilibria can be found using the following linear program. \\

Let $x_{ij\alpha\beta}$ = the amount that player $i$ uses edge $(\alpha, \beta)$ against player $j$. \\ 

$\forall$ players $i,j$, $\beta \in $ strategies of $j$:
$$
\frac{1}{\lambda_{ij}} * \sum_{\alpha \textrm{ best response to $\beta$}} x_{ij\alpha\beta} = \sum_{k, \gamma} x_{jk \beta \gamma}
$$
\end{proof}
}
\begin{theorem}It it \textbf{PPAD}-hard to find a personalized equilibria in general matrix games.
\end{theorem}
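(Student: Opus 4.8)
The plan is to obtain \textbf{PPAD}-hardness by composing the chain of polynomial-time reductions already established in the paper, so that no new gadgetry is required. The two endpoints of the chain are the \gamename, whose equilibrium problem is \textbf{PPAD}-hard by Theorem~\ref{thm:exact}, and the problem of finding a personalized equilibrium in a matrix game, which by Theorem~\ref{thm:BGPPersonalized} sits above the fractional BGP game in the reduction order. Since personalized equilibria generalize fractional BGP, which in turn generalizes the \gamename, hardness should propagate up the chain.

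First I would invoke Theorem~\ref{thm:exact} to fix the \gamename\ equilibrium problem as the \textbf{PPAD}-hard source. Next, Lemma~\ref{BGPinstanceofSimple} supplies a polynomial-time reduction from the \gamename\ to the fractional BGP game that places equilibria in correspondence; composing with Theorem~\ref{thm:exact} yields that finding an equilibrium of the fractional BGP game is \textbf{PPAD}-hard. Finally, Theorem~\ref{thm:BGPPersonalized} gives a polynomial-time reduction from the fractional BGP game to the problem of finding a personalized equilibrium in a matrix game, again matching equilibria on the two sides. Chaining the three maps shows that any algorithm for personalized equilibria, run on the matrix game produced from a fractional BGP instance produced from a \gamename\ instance, yields an equilibrium of the \gamename, hence a solution of the underlying 3-D BROUWER instance; a polynomial-time algorithm would therefore place \textbf{PPAD}\ in \textbf{FP}.

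The main obstacle I expect is bookkeeping around the representation and size of the matrix game, rather than any genuinely new difficulty. Two points need care. First, the generic personalized payoff is defined over the hyperedge set $E = \prod_j S_j$, which is exponential in the number of players; I would rely on the succinct (graphical-game) encoding used in the proof of Theorem~\ref{thm:BGPPersonalized}, in which each hyperedge touches at most one strategy per player, to keep the instance polynomial, and note that the fractional BGP instances arising from \gamenamePlural\ have paths of length at most two, so each player's strategy set and payoff description are polynomial-size. Second, I would verify that the equilibrium correspondences of Lemma~\ref{BGPinstanceofSimple} and Theorem~\ref{thm:BGPPersonalized} are efficiently invertible in the needed direction: from a personalized equilibrium one must recover, in polynomial time, the fractional BGP weights $w'_v(P) = p_{v'}(P')$ and then the \gamename\ weights, so that a solution to the harder problem is transported back to a solution of the \textbf{PPAD}-hard problem. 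Once these are checked, the hardness conclusion follows immediately from transitivity of polynomial-time reductions.
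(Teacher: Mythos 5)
Your proposal is correct and takes essentially the same route as the paper, whose entire proof is a one-line composition of the chain: \textbf{PPAD}-hardness of the \gamename\ (Theorem~\ref{thm:exact}), the reduction to fractional BGP (Lemma~\ref{BGPinstanceofSimple}), and the reduction from fractional BGP to personalized equilibria (Theorem~\ref{thm:BGPPersonalized}). Your extra care about the succinct (polynomial-size) encoding of the hyperedge payoffs and the efficient invertibility of the equilibrium correspondences is sensible bookkeeping along that same chain, not a different argument.
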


\begin{proof}
Again, this is shown via the reduction from \gamenamePlural{ }to fractional BGP to personalized equilibria.
\end{proof}
\junk{
\subsubsection*{And gadget} 
Given 2 values $v_1, v_2 \in \{0,1\}$, we can
create a game with 2 strategies per player (0 and 1) such that in any
personalized equilibria in which player A plays 1 with fraction $v_1$
and player B plays 1 with fraction $v_2$, player C plays 1 with
fraction $v_1 \land v_2$.

$P_C(1,1,1) = P_C(0,0,0) = P_C(0,1,0) = P_C(1,0,0) = 1$. The rest of the payoffs are 0.

\subsubsection*{Or gadget}
Given 2 values $v_1, v_2 \in \{0,1\}$, we can create a game with 2
strategies per player (0 and 1) such that in any personalized
equilibria in which player A plays 1 with fraction $v_1$ and player B
plays 1 with fraction $v_2$, player C plays 1 with fraction $v_1 \lor
v_2$.

$P_C(1,1,1) = P_C(1,0,1) = P_C(0,1,1) = P_C(0,0,0) = 1$. The rest of the payoffs are 0.

\subsubsection*{Not gadget} Given value $v_1 \in \{0,1\}$, we can
create a game with 2 strategies per player (0 and 1) such that in any
personalized equilibria in which player A plays 1 with fraction $v_1$,
player B plays 1 with fraction $\lnot v_1$.

$P_B(0,1) = P_B(1,0) = 1. P_B(0,0) = P_B(1,1) = 0$.

\subsubsection*{Sum gadget} Given 2 values $v_1, v_2 \in [0,1]$, we
can create a game with 2 strategies per player (0 and 1) such that in
any personalized equilibria in which player A plays 1 with fraction
$v_1$ and player B plays 1 with fraction $v_2$, player C plays 1
max(1,$v_1 + v_2$).

$P_C(0,1,1) = P_C(1,0,1) = P_C(1,1,1) = 10. P_C(0,0,0) = 1$. The rest of the payoffs are 0. 

\subsubsection*{Difference gadget} Given 2 values $v_1, v_2 \in
[0,1]$, we can create a game with 2 strategies per player (0 and 1)
such that in any personalized equilibria in which player A plays 1
with fraction $v_1$ and player B plays 1 with fraction $v_2$, player C
plays 1 with fraction $v_1 - v_2$ if $v_1 > v_2$, or 0 otherwise.

$P_C(0,0,0)=10, P_C(1,1,0)=10, P_C(1,0,1)=1, P_C(0,1,0)=1$. The rest of the payoffs are 0.

\subsubsection*{Less than gadget} Given 2 values $v_1, v_2 \in [0,1]$,
we can create a game with 2 strategies per player (0 and 1) such that
in any personalized equilibria in which player A plays 1 with fraction
$v_1$ and player B plays 1 with fraction $v_2$, player C plays 1 if
$v_1 - v_2 \geq \epsilon$, 0 if $v_1 \leq v_2$.

First create a minus gadget $minus$ (player A - player B). Feed the
result into a copy gadget $copy_1$, and feed the result of $minus$ and
$copy_1$ into a sum gadget, $sum_1$. For $i = 1$ to $-\log \epsilon$,
feed $sum_i$ to a copy gadget $copy_i$, and feed $sum_i$ and $copy_i$
into a sum gadget $sum_{i+1}$.  Call the last sum player ``Player C''.

If $v_1 \leq v_2$, the minus gadget will return 0, so player C will
play the result of copying 0 and adding it to 0 (many times), or 0. If
$v_1 - v_2 \geq \epsilon$, player C will play the max of 1 or $v_1 -
v_2$ copied and added to itself $- \log \epsilon$ times, or $(v_1 -
v_2) * 2^{- \log \epsilon} = (v_1 - v_2) * \frac{1}{\epsilon} \geq
\frac{\epsilon}{\epsilon}$.

\subsubsection*{Half gadget}Given value $v_1 \in [0,1]$, we can
create a game with 2 strategies per player (0 and 1) such that in any
personalized equilibria in which player A plays 1 with fraction $v_1$,
player B plays 1 with fraction $\frac{1}{2} * v_1$.

Create an additional player, player C. The payoffs to players B and C are as follows:

$P_B(0,0,0)=10, P_B(0,0,1)=0, P_B(0,1,0)=1, P_B(0,1,1)=0, P_B(1,0,0)=1, P_B(1,0,1)=0, P_B(1,1,0)=0, P_B(1,1,1)=1$ \\
$P_C(0,0,0)=10, P_C(0,0,1)=1, P_C(0,1,0)=0, P_C(0,1,1)=0, P_C(1,0,0)=0, P_C(1,0,1)=1, P_C(1,1,0)=1, P_C(1,1,1)=0$ \\

As much as player A plays 0 ($1-v_1$), player B will also play
0. Player B has extra incentive to play hyperedge $\{0,0,0\}$, so it
will use up $1-v_1$ of the amount that player C plays 0 on this
hyperedge. Similarly, player C will play $1-v_1$ on hyperedge
$\{0,0,0\}$. Now, suppose player B plays 0 with fraction $1-v_i +
\alpha$ and plays 1 with fraction $\beta$. Then, we know that player C
will play $1-v_1$ on hyperedge $\{0,0,0\}$, and his payoffs against
player A's 1 show that he will play 1 against the remaining amount
that player B played 0 ($\alpha$), and 0 against the amount that
player B played 1 ($\beta$). So player C plays 0 with amount $1-v_1 +
\beta$ and plays 1 with amount $\alpha$. However, we can see that
player B will try to mimic the amounts played by player C (except for
the $1-v_1$ played against player 1's 0). So this will only be an
equilibrium if $\alpha = \beta$ and $1-v_1 + \alpha + \beta = 1$, or
if $\alpha = \beta = v_1/2$.
}
\junk{ LJP - COMMENTING THIS OUT BECAUSE IT'S SO SIMILAR TO THE D.G.P. VERSION. SHOULD BE OBVIOUS
\subsection{Algorithm for extracting digits}
From Daskalakis, Goldberg, Papadimitriou (unchanged).  Notice: $<(a,b)$ means 1 if $a < b-\epsilon$, 0 if $a \geq b$, undefined if $b-\epsilon < a \leq b$. In our gadgets above, $-, =,$ and $*2^{-i}$ are calculated exactly.
\begin{algorithm}[htb!]
\begin{algorithmic}[1]
\STATE $x_1=x$ (short for $p[v_{x_1}] = p[v_x]$):
\FOR {$i=1, \ldots, n$}
  \STATE $b_i(x) :=<(2^{-i}, x_i)$
  \STATE $x_{i+1} := x_i - (b_i(x) * 2^{-i})$
\ENDFOR
\STATE Similarly for $y$ abd $z$
\end{algorithmic}
\end{algorithm}

\begin{lemma}(Our version of lemma 2 from the Daskalakis, Goldberg, Papadimitriou paper.) For $m \leq n$, if $\sum_{i=1}^m b_i 2^{-i} + \epsilon < x < \sum_{i=1}^m b_i 2^{-i} + 2^{-m} - \epsilon$, for $b_1, \ldots, b_m \in \{0,1\}$, then at any personalized equilibrium of $G$, $p[v_{b_m}(x)] = b_m$, and $p[v_{x_{m+1}}] = x - \sum_{i=1}^m b_i 2^{-i}$.
\end{lemma}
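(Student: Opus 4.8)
The plan is to prove the lemma by induction on $m$, tracking the \emph{exact} value played by each remainder player $v_{x_m}$ and using the $\epsilon$-margin in the hypothesis solely to keep the less-than comparison out of its undefined regime. Recall that the extraction sets $x_1 = x$ (a copy of the coordinate value), and for each $i$ computes $b_i(x) = {<}(2^{-i}, x_i)$ with a less-than gadget comparing $x_i$ against the constant $2^{-i}$, and $x_{i+1} = x_i - b_i(x)\cdot 2^{-i}$ with a difference gadget fed by $i$ applications of the half gadget. Since the subtraction, the capacity (equality) constraints, and the halvings are realized exactly at any personalized equilibrium, the only slack at each level is the single less-than comparison, which plays $1$ when $x_i \ge 2^{-i} + \epsilon$ and $0$ when $x_i \le 2^{-i}$.

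For the base of the induction ($m = 0$) the bit claim is vacuous and the remainder claim reads $p[v_{x_1}] = x$, which holds because $x_1$ is an exact copy of the coordinate value. For the inductive step, fix $b_1, \ldots, b_m$ and assume
\[
\sum_{i=1}^m b_i 2^{-i} + \epsilon \;<\; x \;<\; \sum_{i=1}^m b_i 2^{-i} + 2^{-m} - \epsilon .
\]
First I would show these bounds imply the level-$(m-1)$ hypothesis with the same margin $\epsilon$. The lower bound only weakens when the nonnegative term $b_m 2^{-m}$ is dropped; for the upper bound, $b_m 2^{-m} + 2^{-m} \le 2^{-(m-1)}$ holds because $b_m \le 1$, so $x < \sum_{i=1}^{m-1} b_i 2^{-i} + 2^{-(m-1)} - \epsilon$ in both cases at once. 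Hence the induction hypothesis applies and yields, at any personalized equilibrium, the exact remainder $p[v_{x_m}] = x - \sum_{i=1}^{m-1} b_i 2^{-i}$; write this value as $x_m$.

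Next I would feed $x_m$ into the level-$m$ less-than comparison. Setting $\sigma = \sum_{i=1}^{m-1} b_i 2^{-i}$, the hypothesis becomes $b_m 2^{-m} + \epsilon < x_m < b_m 2^{-m} + 2^{-m} - \epsilon$. If $b_m = 0$ this reads $\epsilon < x_m < 2^{-m} - \epsilon$, so $x_m < 2^{-m}$ and the gadget plays $0 = b_m$, safely below the undefined band; if $b_m = 1$ it reads $2^{-m} + \epsilon < x_m < 2^{-(m-1)} - \epsilon$, so $x_m \ge 2^{-m} + \epsilon$ and the gadget plays $1 = b_m$, safely above it. In either case $p[v_{b_m}(x)] = b_m$ exactly. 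Finally, since the extracted bit is exactly $b_m \in \{0,1\}$, the half and difference gadgets are exact, and $x_m > b_m 2^{-m}$ when $b_m = 1$ (so the difference gadget is not truncated at $0$), we obtain $p[v_{x_{m+1}}] = x_m - b_m 2^{-m} = x - \sum_{i=1}^m b_i 2^{-i}$, completing the induction.

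The one step needing care — and the real content, exactly as in Lemma~2 of~\cite{DaskalakisGoldbergPapadimitriou} — is the boundary bookkeeping: verifying that the level-$m$ interval hypothesis both propagates down to level $m-1$ and keeps $x_m$ at distance at least $\epsilon$ from the threshold $2^{-m}$, so that the less-than gadget never lands in the undefined window $(2^{-m}, 2^{-m} + \epsilon)$ where an adversarial equilibrium could assign an arbitrary bit. Everything else is exact arithmetic at the personalized equilibrium, so I expect making the two-case interval argument airtight to be the main (and essentially only) obstacle.
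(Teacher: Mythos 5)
Your proof is correct and takes essentially the same approach as the paper's: induction on $m$, propagating the interval hypothesis from level $m$ to level $m-1$ (dropping the nonnegative term $b_m 2^{-m}$ for the lower bound and using $b_m 2^{-m} + 2^{-m} \le 2^{-(m-1)}$ for the upper), then using the $\epsilon$-margin to keep the less-than comparison of $x_m$ against $2^{-m}$ out of its undefined band while relying on the exactness of the subtraction and halving gadgets at any personalized equilibrium. The only cosmetic differences are your base case at $m=0$ in place of the paper's explicit $m=1$ case split, and your (welcome) extra remark that the difference gadget is not truncated at $0$ when $b_m = 1$.
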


\begin{proof}
Call $\sum_{i=1}^m b_i 2^{-i} + \epsilon < x$ \emph{Condition 1}, and call $x < \sum_{i=1}^m b_i 2^{-i} + 2^{-m} - \epsilon$ \emph{Condition 2}.

Since our subtraction and $*2^{-i}$ gadgets calculate exactly, it is
clear that if $p[v_{b_m}(x)]$ is set correctly then $p[v_{x_{m+1}}]$
will also be set correctly. Therefore, we only need to prove that
$p[v_{b_m}(x)]$ will be set correctly whenever Conditions 1 and 2
hold.

Induction on $m$. For $m=1$, if $b_1=0$, then Conditions 1 and 2 imply
that $x$ is between $\epsilon$ and $\frac{1}{2} -
\epsilon$. Therefore, the less than gadget will return 0, so
$p[v_{b_1}(x)]$ will be set correctly to 0. If $b_1=1$, then
Conditions 1 and 2 imply that $x$ is between $\frac{1}{2} + \epsilon$
and $1 - \epsilon$. Again, this means that the less than gadget will
correctly return 1, so $p[v_{b_1}(x)]$ will be set correctly to 1.

Suppose the lemma holds up through $m$. $\sum_{i=1}^{m} b_i 2^{-i} +
\epsilon \geq \sum_{i=1}^{m-1} b_i 2^{-i} + \epsilon$, so if Condition
1 holds at $m$, then Condition 1 held at $m-1$ as well.  $b_m * 2^{-m}
\leq 2^{-m}$, so $\sum_{i=1}^{m} b_i + 2^{-m} \leq \sum_{i=1}^{m-1}
b_i + 2^{-m} + 2^{-m} = \sum_{i=1}^{m-1} b_i + 2^{-(m-1)}$. So if
Condition 2 holds at $m$, then Condition 2 held at $m-1$ as
well. Therefore, if Conditions 1 and 2 hold at $m$, then by the
inductive assumption, $p[v_{b_{(m-1)}}(x)] = b_{(m-1)}$, and
$p[v_{x_m}] = x - \sum_{i=1}^{m-1} b_i 2^{-i}$.  The algorithm sets
$p[v_{b_m}(x)] :=<(2^{-m}, p[v_{x_m}])$.  Condition 1 says that $x -
\sum_{i=1}^m b_i 2^{-i} > \epsilon$, or $p[v_{x_m}] - b_m 2^{-m} >
\epsilon$, so if $b_m = 1$, the less than gadget will work correctly
(setting $p[v_{b_m}(x)]$ to 1). Condition 2 says that $\epsilon <
\sum_{i=1}^m b_i 2^{-i} + 2^{-m} - x$, or $\epsilon < - p[v_{x_m}] +
b_m 2^{-m} + 2^{-m}$, and if $b_m = 0$, $p[v_{x_m}] < p[v_{x_m}] +
\epsilon < 2^{-m}$, so the less than gadget will correctly set
$p[v_{b_m}(x)]$ to 0.

\end{proof}
}

\begin{theorem} \label{thm:personalizedPPAD}
It is \textbf{PPAD}-hard to find 4-player personalized equilibria.
\end{theorem}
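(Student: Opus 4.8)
The plan is to reduce from the PPAD-hard \gamename\ of Theorem~\ref{thm:exact}, but to compress the resulting personalized game down to only four players. Composing the reduction of Lemma~\ref{BGPinstanceofSimple} with Theorem~\ref{thm:BGPPersonalized} already yields a personalized game whose equilibria encode \gamename\ equilibria, but that game has one player per gadget and hence unboundedly many players. The key observation that makes four players possible is that, unlike a Nash player, a single personalized player carries an entire \emph{vector} of values: its distribution $p_i$ assigns an independent weight to each of its strategies, and its maximum-weight-matching payoff can enforce many local constraints at once. I would therefore let each of the four players hold, inside its own distribution, the ``self-weights'' of a whole color class of gadget players, so that the four distributions together record the state of every wire of the simulated circuit.

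To set this up I would first make the interaction structure bounded-degree. A gadget player may be read by many others (large fan-out, e.g. a coordinate player), so I would interpose COPY players (already defined in the proof of Theorem~\ref{thm:exact}) to duplicate every high-fan-out value, ensuring each value feeds only a constant number of gadgets. The players then form a bounded-degree ``constraint hypergraph,'' each hyperedge being the constant-size set of gadget players that share a single gate's payoff table. I would color this hypergraph so that any two players occurring in a common hyperedge receive distinct colors, and then merge each color class into one super-player: its strategy set is the disjoint union of the strategy sets of the merged players, its distribution simultaneously encodes all their self-weights, and the payoff tables are placed in block form indexed by the originating gadget. Using the coloring-and-merging device of~\cite{DaskalakisGoldbergPapadimitriou}, the number of color classes---and hence super-players---can be driven down to four.

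The crux is faithfulness of the merge, and this is where I expect the main difficulty. I must show that each super-player's best-response matching \emph{decomposes} into independent optimal sub-matchings, one per merged gadget, so that a super-player can never raise its personalized payoff by ``cross-matching''---coupling the self-weight of one embedded gadget to a strategy belonging to an unrelated embedded gadget. This is arranged by the block structure of the payoff table (every positive-payoff hyperedge lies within a single gadget) together with the marginal constraints, which conserve each embedded gadget's mass; the remaining work is to verify that each per-gadget sub-equilibrium reproduces exactly the AND/OR/NOT and arithmetic behaviors guaranteed for the \gamename\ gadgets in the proof of Theorem~\ref{thm:exact}, and that the feedback from the six output bits back into the coordinate players still forces the coordinates onto a panchromatic vertex of the 3-D BROUWER instance. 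Given the decomposition and that four colors suffice, a personalized equilibrium of the four-player game exists by Theorem~\ref{thm:exist} and yields a solution to the 3-D BROUWER instance, establishing PPAD-hardness.
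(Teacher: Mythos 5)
There is a genuine gap at precisely the point you flag as the crux, and the block structure you invoke does not close it. When you merge a color class into one super-player, that super-player's distribution $p_i$ is itself the strategic variable: it has a \emph{single} unit of probability mass to spread over the disjoint union of the embedded gadget players' strategy sets. The marginal constraints in the personalized-payoff LP only pin the matching to the \emph{opponents'} announced masses; they place no constraint whatsoever on how the super-player divides its own unit among its embedded gadgets. The block form of the payoff table prevents profitable cross-matching between blocks, but it does nothing to prevent mass reallocation \emph{across} blocks: a best-responding super-player will simply pour all of its mass into whichever embedded block offers the highest per-unit matching payoff and starve the rest, so in equilibrium most wires of the simulated circuit carry no readable value at all. (Even in degenerate cases of indifference, nothing forces the equal split your decoding needs.) Without an explicit mechanism that equalizes the mass given to every embedded node, the simulation of the 3-D BROUWER circuit collapses, and this mechanism is exactly what your proposal omits.

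The paper's proof supplies that mechanism, and it is the essential difference between the two constructions. The paper uses only \emph{three} colors for the bounded-degree constraint graph (after the same kind of degree-reduction via COPY players that you describe, plus adding edges between any two nodes feeding a common gate so that co-occurring players get distinct colors), and reserves the \emph{fourth} player not as a color class but as a synchronizer: it has one strategy $d_s$ per node index $s$, and bonus payoffs $M$, larger than every circuit payoff, are added on hyperedges where a color player plays either strategy of its $s$-th node while player 4 plays $d_s$, and where player 4 plays $d_{s+1}$ while player 1 plays its $s$-th node. Chasing the resulting best-response equalities $f_1(a_{s0})+f_1(a_{s1}) = f_4(d_s)$ and $f_4(d_{s+1}) = f_1(a_{s0})+f_1(a_{s1})$ around the cycle forces every node pair, in every color class, to receive exactly the same total mass in any personalized equilibrium; one then rescales by the number of nodes per player to recover an equilibrium of the graphical game. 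If you want to salvage your four-color-classes variant, you would need a fifth synchronizing player (or to fold an equivalent circulation device into your four players), so as written the construction does not establish the four-player claim.
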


\begin{proof}
For this theorem, we first note that when reducing from \gamenamePlural{ }to fractional BGP to personalized equilibria, we keep the same number of players. We also preserve the number of players ``depended on'' for the payoff of a particular strategy. In \gamenamePlural, a payoff depends only on the single other player being chosen, and a player will only place weight on other players it prefers over itself. When reduced to fractional BGP, the paths considered by a node are only one or two-hop paths: and the node only considers two-hop paths that it prefers over its ``direct'' path. When reduced to finding personalized equilibria in a matrix game, we keep the same number of players. The payoff for a hyperedge depends only on a number of players equal to the number of hops in the path represented by that hyperedge; in this case, at most 2. The only strategies that will ever be chosen by a node are the strategies corresponding to players prefered over that node in the original instance of the\gamename.

\junk{
This is essentially the same reduction as in the Goldberg et al paper, but
it's actually a little easier for us because of the personalization.
}

Therefore, if we start with the reduction from 3-DIMENSIONAL BROUWER used in Section \ref{sec:ppad-hardness}, we have a graphical matrix game in which each node's strategy ``depends on'' one or two other nodes, and each node ``influences'' the strategy for one or two other nodes.
(We can easily make this a max of 2, because if a
node influences 3 other nodes, we just switch to influencing one of
them plus a copy gadget, which can influence the other(s). So we can
say that in+out degree is at most 4. Each node has 2 strategies.

We want to represent this as a 4 player game with more strategies per
player.

First we have to slightly transform the graph in order to get two
properties:
\begin{enumerate}
\item Our graph should have max degree of 3 (in + out)
\item If a node ``depends on'' 2 other nodes, we want an edge
(undirected is fine) between these two nodes. This edge counts in the
degree.
\end{enumerate}

We have a degree 4 graph and we want to change it to degree
3. Any degree 4 node currently has 2 inputs plus 2 outputs. We can
change both of the outputs into a single copy gadget with 2 outputs -
the copy has one input and 2 outputs, or degree three.

Now we have a graph with max degree 3, but we need the edges from
property 2 (an edge between any two nodes $X$ and $Y$ that influence
the same third node, $Z$). Suppose we have two nodes $X$ and $Y$ that
both influence a third node $Z$, and node $X$ has degree 3
already. Just add a new node $X'$ that copies $X$, and make this copy
influence $Z$. Now $X$ still has degree 3, $X'$ has degree 2 + the
edge between $X'$ and $Y$.

After the above conversions, we have a graph with those 2 properties. Create a 3-coloring
(possible because max degree is 3). Create one player per color. Each
player has 2 strategies for each node in that color (one for the 0
strategy of that node, one for the 1 strategy).

Add dummy strategies as necessary so that each of the 3 players has
the same number of strategies. Also add a fourth player with half the
number of strategies as any other player.

This gives us 4 players. Let the strategies for player 1 be $\{a_{10},
a_{11}, a_{20}, a_{21}, \ldots, a_{k0}, a_{k1}\}$. The strategies for
player 2 are $\{b_{10}, b_{11}, \ldots, b_{k0}, b_{k1}\}$. The
strategies for player 3 are $\{c_{10}, c_{11}, \ldots, c_{k0},
c_{k1}\}$. The strategies for player 4 are $\{d_1, d_2, \ldots,
d_k\}$.

Next we will assign payoffs for each hyperedge. Start by giving each hyperedge the same payoff as in the graphical
game (we can do this because no two nodes influencing the same
strategy are strategies of the same player). Notice that these payoffs
will not depend at all on player 4. All of player 4's payoffs start at
0. Let $p_i(w,x,y,z)$ = the payoff to player $i$ if player 1 plays
$w$, 2 plays $x$, player 3 plays $y$, player 4 plays $z$. Now we want
to add to these payoffs in order to ensure that each player plays each
strategy pair equally.

Let $M >$ the largest payoff so far. Now, change the following payoffs:

$p_1(a_{si}, x, y, d_{s}) += M$ (player 1 is playing either strategy from the node numbered $s$, player 4 is playing his $s^{th}$ strategy). \\
$p_2(w, b_{si}, y, d_{s}) += M$ (player 2 is playing either strategy from the node numbered $s$, player 4 is playing his $s^{th}$ strategy). \\
$p_3(w, x, c_{si}, d_{s}) += M$ (player 3 is playing either strategy from the node numbered $s$, player 4 is playing his $s^{th}$ strategy). \\
$p_4(a_{si}, x, y, d_{(s+1)}) += M$ (player 1 is playing either strategy from the node numbered $s$, player 4 is playing his $s+1^{st}$ strategy). \\

If $f_i(x)$ = the amount player $i$ plays strategy $x$ then in any
equilibrium we must have (for all $s$)
\begin{eqnarray*}
f_1(a_{s0}) + f_1(a_{s1}) & = & f_4(d_s) \\
f_2(b_{s0}) + f_2(b_{s1}) & = & f_4(d_s) \\
f_3(c_{s0}) + f_3(c_{s1}) & = & f_4(d_s) \\
f_4(d_s)  & = & f_1(a_{(s-1)0}) + f_1(a_{(s-1)1}) \textrm{for $s > 0$} \\
f_4(d_0)  & = & f_1(a_{k0}) + f_1(a_{k1})\\
\end{eqnarray*}

These equations imply that:
\begin{eqnarray*}
f_1(a_{s0}) + f_1(a_{s1}) & = & f_1(a_{(s-1)0}) + f_1(a_{(s-1)1})  \textrm{for $s > 0$} \\
f_1(a_{00}) + f_1(a_{01}) & = & f_1(a_{k0}) + f_1(a_{k1})  \textrm{for $s > 0$} \\
f_2(b_{s0}) + f_2(b_{s1}) & = & f_1(a_{s0}) + f_1(a_{s1}) \\
f_3(c_{s0}) + f_3(c_{s1}) & = & f_1(a_{s0}) + f_1(a_{s1}) \\
\end{eqnarray*}

In other words, given an equilibrium in this game, we can simply
multiply by the number of pairs (nodes) per player to get an
equilibrium in the graphical game.
\end{proof}

With a slight modification, the
above proof also establishes \textbf{PPAD}-hardness for approximating
personalized equilbria in 5-person games. We simply need to note that in the approximation gadgets created for the \gamename\ hardness proof, the maximum length of any preference list is 4 instead of 3 (one of the players in the CORRECTION gadget ``depends on'' 3 other players). The rest of the proof remains intact.

\section{Concluding Remarks}
We note that our \textbf{PPAD}-hardness results from section \ref{sec:hardness} also apply to two other problems reduced in \cite{HaxellWilfongJournal} to fractional BGP. The first of these problems is finding fractional stable matchings in hypergraphic preference systems. 
\junk{Here, we are given a hypergraph plus a preference list for each node. A fractional solution is one in which every hypergraph edge is assigned a weight such that the total weight adjacent to each node is at most 1. A fractional stable solution also ensures that for each edge $E$ in the hypergraph, some node in the edge has a total of weight 1 assigned to adjacent edges that it prefers at least as much as $E$. }
The second is finding fractional kernels in directed graphs.

We raise a number of open questions.
\begin{itemize}
\item Is finding a personalized equilibrium in general matrix games in \textbf{PPAD}? Although we show \textbf{PPAD}-hardness for general games, we have not settled the question of \textbf{PPAD}-membership. We have shown that a rational solution always exists, so finding an exact equilibrium may be in \textbf{PPAD}.
\item We show that it is possible to find a personalized equilibrium for a 2-person game in polynomial time, and it is \textbf{PPAD}-hard to find a personalized equilibrium in a 4-person game. However, the hardness of finding these equilibria in 3-person games remains open.
\item In this paper, we concentrate on a version of BBC games in which all nodes want to reach a single universal destination. However, in \cite{OurPODC08}, the utility of a node in a BBC game is defined as an affinity-weighted average of the shortest path length (or minimum cost flow in the fractional case) to \emph{all} other nodes. They show that an equilibrium always exists with multiple destinations, and our hardness results of course extend to this model, since we can define only one non-zero affinity, but it is unknown whether rational equilibria always exist.
\item Our reduction from \gamenamePlural\ to BBC games does not apply in the ``multiple destinations'' model if all affinities must be equal. Is this special instance of BBC games \textbf{PPAD}-hard as well?
\end{itemize}

Personalized equilibria and \gamenamePlural\ both have a number of real world and theoretical applications, and seem to be natural end points in a spectrum of ``personalized'' fractional games. As more games are added to this hierarchy, we hope to fully understand the relationships and behavior of fractional equilibria.


\begin{thebibliography}{99}

\bibitem{AlbersEiltsEvenDarMansourRoditty}
Susanne Albers, Stefan Eilts, Eyal Even-Dar, Yishay Mansour, and Liam Roditty.
\newblock On {N}ash equilibria for a network creation game.
\newblock In {\em Proc. of SODA '06}, pages 89--98, New York, NY, USA, 2006.
  ACM Press.

\bibitem{AnshelevichShepherdWilfong}
Elliot Anshelevich, Bruce Shepherd, and Gordon Wilfong.
\newblock Strategic network formation through peering and service agreements.
\newblock In {\em Proc. of IEEE FOCS '06}, 77--86, Washington, DC, USA,
  2006.

\bibitem{Aumann}
R.J. Aumann.
\newblock Subjectivity and Correlation in Randomized Strategies.
\newblock \emph{Journal of Mathematical Economics}, 1:67-96, 1974.

\bibitem{BalaGoyal}
Venkatesh Bala and Sanjeev Goyal.
\newblock A noncooperative model of network formation.
\newblock {\em Econometrica}, 68(5):1181--1229, 2000.

\bibitem{ChenDengTeng}
X.~Chen, X.~Deng, and S.-H.~Teng.
\newblock Computing Nash Equilibria: Approximation and Smoothed Complexity.
\newblock In {\em FOCS}, 603--612, 2006.

\bibitem{ChenDengTengJACM}
X.~Chen, X.~Deng, and S.-H. Teng.
\newblock Settling the complexity of computing two-player {Nash} equilibria.
\newblock {\em JACM}, (invited and under review), 2008.

\bibitem{DaskalakisGoldbergPapadimitriou}
Constantinos Daskalakis, Paul W. Goldberg, Christos H, Papadimitriou.
\newblock The Complexity of Computing a {N}ash Equilibrium
\newblock In {\em STOC '06}, 2006.

\bibitem{DemaineHajiaghaviMahini}
Erik D. Demaine, MohammadTaghi Hajiaghavi, and Hamid Mahini.
\newblock The Price of Anarchy in Network Creation Games
\newblock In {\em PODC}, pages 292-298, 2007.

\bibitem{EvenDarKearns}
Eyal Even-Dar and Michael Kearns.
\newblock A small world threshold for economic network formation.
\newblock In {\em NIPS}, pages 385--392, 2006.

\bibitem{FabrikantLuthraManevaPapadimitriouShenker}
Alex Fabrikant, Ankur Luthra, Elitza Maneva, Christos~H. Papadimitriou, and
  Scott Shenker.
\newblock On a network creation game.
\newblock In {\em PODC '03}, pages 347--351, New York, NY, USA, 2003. ACM
  Press.

\bibitem{FeigenbaumPapadimitriouSamiShenker}
J.~Feigenbaum, C.~Papadimitriou, R.~Sami, and S.~Shenker.
\newblock A {BGP}-based mechanism for lowest-cost routing.
\newblock In {\em PODC}, 2002.

\bibitem{GareyJohnson}
M.~R.~Garey, and D.~S.~Johnson.
\newblock Computers and intractability.
\newblock {\em Freeman Press}, 1979.

\junk{
\bibitem{GovindanReddy}
R.~Govindan, and A.~Reddy.
\newblock An analysis of inter-domain topology and route stability.
\newblock In {\em INFOCOM}, 1997.
}

\bibitem{GriffinShepherdWilfong}
Timothy G. Griffin, F. Bruce Shepherd, and Gordon Wilfong.
\newblock The stable paths problem and interdomain routing.
\newblock {\em IEEE/ACM Transactions on Networking}, 2002.

\bibitem{HaleviMansour}
Yair Halevi and Yishay Mansour.
\newblock A Network Creation Game with Nonuniform Interests.
\newblock In {\em WINE}, pages 278-292, 2007.

\bibitem{HaxellWilfong}
P. E. Haxell and G. T. Wilfong.
\newblock A fractional model of the border gateway protocol ({BGP}).
\newblock In {\em SODA}, pages 193-1999, 2008.

\bibitem{HaxellWilfongJournal}
P. E. Haxell and G. T. Wilfong.
\newblock On the Stable Paths Problem.
\newblock Preprint, 2008.

\bibitem{JacksonWolinsky}
Matthew Jackson and Asher Wolinsky.
\newblock A strategic model of social and economic networks.
\newblock {\em Journal of Economic Theory}, 71:44--74, 1996.

\bibitem{JohariMannorTsitsiklis}
Ramesh Johari, Shie Mannor, and John~N. Tsitsiklis.
\newblock A contract-based model for directed network formation.
\newblock {\em Games and Economic Behavior}, 56(2):201--224, 2006.

\bibitem{KearnsLittmanSingh}
M. Kearns, M.L. Littman, S. Singh.
\newblock Graphical models for game theory.
\newblock \emph{UAI}, 253-260, 2001.

\bibitem{Kintali}
Shiva Kintali.
\newblock A Distributed Protocol for Fractional Stable Paths Problem.
\newblock http://www.cc.gatech.edu/research/reports/GT-CS-08-06.pdf

\bibitem{KoutsoupiasPapadimitriou}
Elias Koutsoupias and Christos Papadimitriou.
\newblock Worst-case equilibria.
\newblock In {\em STACS '99}, pages 404--413, March 1999.

\junk{
\bibitem{LabovitzMalanJahanian}
C.~Labovitz, G.~R.~Malan, and F.~Jahanian.
\newblock Internet routing instability.
\newblock In {\em SIGCOMM}, 1997.
}

\bibitem{OurPODC08} 
Nikolaos Laoutaris, Laura J. Poplawski, Rajmohan Rajaraman, Ravi Sundaram, Shang-Hua Teng.
\newblock Bounded Budget Connection (BBC) Games or How to Make Friends and Influence People, on a Budget.
\newblock In {\em PODC '08}, pages 165--174, 2008.

\bibitem{LaoutarisPRST} 
Nikolaos Laoutaris, Laura J. Poplawski, Rajmohan Rajaraman, Ravi Sundaram, Shang-Hua Teng.
\newblock Bounded Budget Connection (BBC) Games or How to make friends and influence people, on a budget.
\newblock arXiv:0806.1727v1 [cs.GT]

\bibitem{MarkakisSaberi}
Evangelos Markakis and Amin Saberi.
\newblock On the core of the multicommodity flow game.
\newblock In {\em Proc. of the 4th ACM conference on Electronic commerce},
  pages 93--97, New York, NY, USA, 2003. ACM Press.

\bibitem{Nash50}
J.~Nash.
\newblock Equilibrium point in n-person games.
\newblock In {\em PNAS}, 36(1):48--49, 1950.

\bibitem{Nash51}
J.~Nash.
\newblock Noncooperative games.
\newblock In {\em Annals of Mathematics} 54:286--295, 1951.

\bibitem{NisanRoughgardenTardosVazirani}
N.~Nisan, T.~Roughgarden, E.~Tardos, and V.~Vazirani.
\newblock Algorithmic game theory.
\newblock {\em Cambridge University Press}, 2007.

\bibitem{OsborneRubinstein}
M.J. Osborne and A.~Rubinstein.
\newblock {\em A Course in Game Theory}.
\newblock MIT Press, 1994.

\bibitem{Papadimitriou94}
C.~Papadimitriou.
\newblock On the Complexity of the Parity Argument and Other Inefficient Proofs of Existence.
\newblock {\em JCSS} 48(3):498--532, 1994.

\bibitem{Papadimitriou01}
C. Papadimitriou.
\newblock Algorithms, games, and the {I}nternet.
\newblock In {\em STOC '01}, 749--753, New York, NY, USA, 2001. ACM
  Press.

\bibitem{RehkterLi}
Y. Rehkter, T. Li.
\newblock A Border Gateway Protocol (BGP version 4).
\newblock RFC 1771, 1995.

\bibitem{Stewart}
J.~W~Stewart.
\newblock BGP4: Inter-domain routing in the {I}nternet.
\newblock Addison Wesley, 1998. 

\bibitem{VaradhanGovindanEstrin}
K. Varadhan, R. Govindan, and D. Estrin.
\newblock Persitent Route Oscillations in Inter-Domain Routing.
\newblock Technical Report USC CS TR 96-631, Department of Computer Science, University of Southern California, Feb. 1996.

\bibitem{Yanovskaya}
E.~B.~Yanovskaya.
\newblock Equilibrium situations in multi-matrix games.
\newblock {\em Litovskii Matematicheskii Sbornik}, 8:381--384, 1968.

\end{thebibliography}
\end{document}